%

\documentclass[ssy,preprint]{imsart}

\RequirePackage[OT1]{fontenc}
\usepackage{amsthm,amsmath,amssymb}
\usepackage{graphicx}
\usepackage{enumerate}
\RequirePackage[colorlinks,citecolor=blue,urlcolor=blue]{hyperref}

\usepackage{stackrel}
\usepackage{subfigure}
\usepackage{epsf}
\usepackage{color}
\usepackage{multirow,tabularx}
\usepackage{ifthen}
\usepackage{appendix}
\usepackage{tikz}
\usetikzlibrary{matrix}
\usepackage{epstopdf}

\arxiv{arXiv:0000.0000}


\theoremstyle{plain}
\newtheorem{theorem}{Theorem}[section]
\newtheorem{lemma}{Lemma}[section]
\newtheorem{proposition}{Proposition}[section]
\newtheorem{remark}{Remark}[section]
\newtheorem{definition}{Definition}[section]


\newcommand{\eqn}[1]{(\ref{#1})}


\newcommand{\brac}[1]{\left({#1}\right)}
\newcommand{\sbrac}[1]{\left[{#1}\right]}
\newcommand{\cbrac}[1]{\left\{{#1}\right\}}

\newcommand{\floor}[1]{\left\lfloor{#1}\right\rfloor}













\newcommand{\expect}[1]{\mathbb{E}\left[{#1}\right]}




\newcommand{\inprod}[2]{\langle{#1,#2} \rangle}
\newcommand{\ceil}[1]{\left\lceil{#1}\right\rceil}
\newcommand{\UN}{\bar{\mathcal{U}}^{(j)}_N}
\newcommand{\Ub}{\bar{\mathcal{U}}}
\newcommand{\U}{{\mathcal{U}}}
\newcommand{\norm}[1]{\|{#1}\|}
\newcommand{\abs}[1]{\left \vert {#1} \right \vert}
\newcommand{\mb}[1]{\mathbb{#1}}
\newcommand{\mf}[1]{\mathbf{#1}}
\newcommand{\bs}[1]{\boldsymbol{#1}}
\newcommand{\mv}[3]{{#1}^{(#3)}_{#2}}

\newcommand{\ksub}[3]{\floor{#1}_{#2 #3}}
\newcommand{\ktil}[3]{\ceil{#1}_{#2 #3}}

\startlocaldefs
\numberwithin{equation}{section}
\endlocaldefs

\begin{document}

\begin{frontmatter}

\title{Randomized Assignment of Jobs to Servers in Heterogeneous Clusters of Shared Servers for Low Delay}
\runtitle{Randomized job Assignment Schemes}

\begin{aug}

\author{\fnms{Arpan} \snm{Mukhopadhyay}\ead[label=e1]{arpan.mukhopadhyay@uwaterloo.ca}},
\address{Dept. of Electrical and Computer Engineering\\
University of Waterloo\\
Waterloo ON N2L 3G1,
Canada \\ \printead{e1}\\ 
\phantom{E-mail:\ }\printead*{e2}\\ \phantom{E-mail:\ }\printead*{e3}}
\author{\fnms{A.} \snm{Karthik}\ead[label=e2]{k4ananth@uwaterloo.ca}},
\and 
\author{\fnms{Ravi R.} \snm{Mazumdar}\corref{}\ead[label=e3]{mazum@uwaterloo.ca}}
\affiliation{University of Waterloo}

\runauthor{A. Mukhopadhyay, A. Karthik, and R. R. Mazumdar}

\end{aug}

\begin{abstract}
We consider the job assignment problem in 
a multi-server system consisting of $N$ parallel processor sharing servers,
categorized into $M$ ($\ll N$) different types
according to their processing capacity or speed. Jobs of 
random sizes arrive at the system
according to a Poisson process with rate $N \lambda$. Upon each arrival, 
a small number of servers from each type is sampled uniformly at random. The job is then
assigned to one of the sampled servers based on a selection rule. 
We propose two schemes, each corresponding to a specific selection rule
that aims at reducing the mean sojourn time of jobs in the system.

We first show that both methods achieve the maximal 
stability region. We then
analyze the system operating under the proposed schemes 
as $N \to \infty$ which corresponds to the mean field.
Our results show that asymptotic independence among servers holds even when
$M$ is finite and exchangeability holds only within servers of the same type. 
We further establish the existence and uniqueness of stationary solution of the mean field and
show that  the tail distribution of server occupancy decays doubly 
exponentially for each server type. When the estimates
of arrival rates are not available, the proposed schemes 
offer simpler alternatives to achieving lower mean sojourn time of jobs,
as shown by our numerical studies. 
\end{abstract}

\begin{keyword}[class=MSC]
\kwd[Primary ]{ 60K35}
\kwd[; secondary ]{60K25, 90B15}
\end{keyword}

\begin{keyword}
\kwd{Processor sharing}
\kwd{power-of-two}
\kwd{mean field}
\kwd{asymptotic independence}
\kwd{stability}
\kwd{propagation of chaos}
\end{keyword}

\end{frontmatter}

\section{Introduction}
Consider a stream of jobs arriving at a multi-server system
consisting of a large number of parallel processor sharing servers.
The servers are categorized into different types or clusters according to their
processing capabilities. Each job,
upon arrival, is assigned to a server
where it completes its service and leaves the system.
The objective is to design job assignment schemes that reduce
the average sojourn, or response, time of jobs in the system. 

\subsection{Motivation}

The problem of job assignment is central in
multi-server
resource sharing systems that process delay sensitive web requests.
Examples include data centers and web server farms
running  applications such as 
online search, social networking etc.
In these systems, a small increase in 
the average response time of requests may cause
significant loss of revenue and users~\cite{Delay_study_google}.
Therefore, it is critical to reduce the average response time of jobs
in such systems.

Reduction in the average response time can be achieved by
assigning arrivals to less congested servers~\cite{Weber_optimal_JSQ,Gupta_Performance_2007,Winston_optimal_JSQ} 
in the system.
However, in today's systems, where the number of
front end servers is large, obtaining state information
of all the servers incurs a significant communication overhead.
For such systems, randomized job assignment schemes, in which
each assignment decision is made based on
comparing the states of a random subset of $d$ ($\geq 2$) servers,
are promising solutions. For systems with identical servers (homogeneous), 
such randomized schemes have been shown~\cite{Vvedenskaya_inftran_1996,Mitzenmacher_thesis,Graham} 
to result in a significant reduction
in mean response time of jobs as compared to state independent
schemes, in which job assignments are made independent of server states.
This implies that for large homogeneous systems, a small, randomly chosen subset of servers
is representative of the distribution of load in the overall system.
 
In this paper,
we consider heterogeneous systems where servers 
are grouped into different types or clusters, often geographically separated,
based on their capacities. 
Motivated by the aforementioned intuition arising
from the homogeneous case, 
we consider randomized job assignment schemes,
in which a small random subset of servers is sampled from each server type.
The least loaded servers of each type are then compared
based on the instantaneous processing rates they offer.
The job is then assigned to the server that provides the
highest processing rate. 
We consider processor sharing (PS) as the service discipline in this paper
since it closely approximates round-robin discipline with small granularity~\cite{Schassberger1984}
usually employed in server farms. Moreover, processor sharing discipline
has the desirable property of being insensitive to job length distribution type~\cite{Kelly_book}.

\subsection{Related literature}
 
Randomized job assignment schemes
have been
primarily studied in the literature for a system consisting of $N$ identical 
first come first serve (FCFS) servers, which is also referred
to as the supermarket model.
Most studies consider the so called shortest-queue-$d$ (SQ($d$))
scheme in which each job
is assigned to the shortest of $d$ randomly chosen queues.

For $d \geq 2$, ~\cite{Vvedenskaya_inftran_1996} showed, using the
theory of operator semigroups, that 
the equilibrium queue sizes decay doubly exponentially  
in the limit as the system size increases (as $N \to \infty$).
Mitzenmacher in~\cite{Mitzenmacher_thesis, Mitzenmacher_IEEE_2001}
derived the same result using an extension of Kurtz's theorem~\cite{Ethier_Kurtz_book}. 
In~\cite{Turner_choices_1998}, a coupling argument was used to show
that larger values of $d$ result in more even distribution of loads among the servers.
Chaoticity on path space (or asymptotic independence among queue length processes) 
was established in~\cite{Graham} using empirical
measures on the path space. Results of~\cite{Vvedenskaya_inftran_1996} 
were generalized to the case of open Jackson networks in~\cite{Martin_AAP_1999}.

Recently, in~\cite{Bramson_randomized_load_balancing},
the SQ($d$) scheme was analyzed under more general service disciplines 
and service time distributions. 
It was shown that in the case of FCFS discipline and power-law service time distribution, 
the equilibrium queue sizes decay doubly exponentially, 
exponentially, or just polynomially, depending on the power-law exponent and 
the number of choices, $d$. 
The stability of more general randomized schemes for non-idling service disciplines
was analyzed in~\cite{Bramson_stability_AAP}, which derived
a sufficient condition under which such networks
are stable. Asymptotic independence of servers in equilibrium
was proposed in~\cite{Bramson_asymp_indep} under local service disciplines
and general service time distributions. However, the result was proved only
for FCFS service discipline and service time distributions having decreasing 
hazard rate (DHR) functions.

The tradeoff between sampling cost of servers and
the expected sojourn time seen by a customer in the
supermarket model was studied under a game theoretic framework 
in~\cite{Hajek}. 
It was shown that for arrival rates within the stability region of the network, a symmetric Nash equilibrium
for identical customers exists in which each customer chooses a fixed number of queues to sample.

Recently, in~\cite{Mukhopadhyay_ITC_2014,Arpan_arxiv_2013}, 
the SQ($d$) scheme was considered 
for a system of parallel processor sharing servers with heterogeneous service rates.
It was shown that, in the heterogeneous setting, random sampling of $d$ servers 
from the entire system reduces the stability region.  
However, it can be recovered using the SQ($d$) scheme over a randomly chosen server type.

\subsection{Main results}

In this paper, we propose two new randomized schemes for job assignment
in the heterogeneous scenario. In both the schemes, upon arrival
of a job, a small number of servers of each type is randomly sampled. 
The sampled servers are then compared
based on their states and the arrival is assigned to the {\em best} server
among the chosen set of servers. The metric for choosing the best
server distinguishes the two schemes. 

This represents a scenario where a centralized dispatcher first requests information from each bank  or type of servers and 
then routes the job to the server that is going to give the lowest response time among the sampled servers. The number of servers sampled 
from a given type depends on the tradeoff between the sampling cost and the likely sojourn time as in the supermarket model in \cite{Hajek}.
We do not address the precise tradeoffs in this paper suffice to say that we assume that they could be different at each server type. 
We describe the precise mechanisms below.

In the first scheme, each arrival is assigned to the sampled server
with the least number of unfinished jobs. In the second,
each arrival is assigned to the sampled server offering the
maximum processing rate per unfinished job.
Note that, in the both the schemes, the sampled set contains
servers of all types. We show that such sampling
achieves the maximum possible stability region.

We analyze the performance of the proposed schemes in
the limit as the system size $N \to \infty$ using the mean field approach. 
Our analysis shows the following. 

\begin{itemize}
\item The stationary tail distribution of server occupancies
decay doubly exponentially in the limiting system.
We devise indirect methods to show this, since, 
unlike the homogeneous case, closed
form solutions of the stationary distribution cannot be obtained
in the heterogeneous scenario.


\item We establish the existence and uniqueness of the equilibrium
point of the mean field equations in the space of empirical
tail measures having finite first moment. Our proof, again,  differs
from the earlier works since closed form solutions cannot be obtained.
 
\item We show that propagation of chaos holds at each finite time
and also at the equilibrium.  In that, we generalize the earlier
results on propagation of chaos to systems where
exchangeability holds only among servers of the same type.
\end{itemize}
  
We also  numerically compare the proposed schemes with  existing schemes
for the heterogeneous case. It is observed that the proposed schemes
result in lower mean response time of jobs 
in scenarios where arrival rates cannot be estimated.

\subsection{Organization}
The rest of the paper is organized as follows. In Section~\ref{sec:model},
we describe the system model, the proposed job assignment schemes
and our notations.
We then analyze the proposed schemes in Sections~\ref{sec:stability},~\ref{sec:mean_field}, and~\ref{sec:computation}.
In Section~\ref{sec:numerics}, numerical results are presented that compare 
the schemes and determine
the accuracy of the theoretical results derived in the paper.
Finally, we conclude the paper in Section~\ref{sec:conclusion}
with a summary and a discussion on future work.

\section{Model and notations}
\label{sec:model}

We consider a multi-server system consisting of $N$ parallel 
processor sharing (PS) servers.
The capacity, $C$ (bits/sec), of a server is defined as the
time rate at which it processes a single job
present in it. If there are $q(t)$ jobs
present at a server of capacity $C$ at time $t$, then the instantaneous rate
at which each job is processed in the server is given by $C/q(t)$. 
Depending on their capacities, the servers in the system are categorized into $M$ ($\ll N$) types.
Define $\mathcal{J}=\cbrac{1,2,\ldots,M}$ to be the index set of server types.
The capacity of type $j$ servers is denoted by $C_j$, for $j \in \cal{J}$, 
and we assume, without loss of generality, that the server capacities are ordered in the following way:

\begin{equation}
C_1 \leq C_2 \leq \ldots \leq C_M.
\end{equation}
Further, for each $j \in \cal{J}$, we denote the proportion of type $j$ servers 
in the system by $\gamma_j$ ($0 \leq \gamma_j \leq 1$).
Clearly, $\sum_{j=1}^{M} \gamma_j=1$.

\begin{figure}
\centering
\includegraphics[width=.750\linewidth]{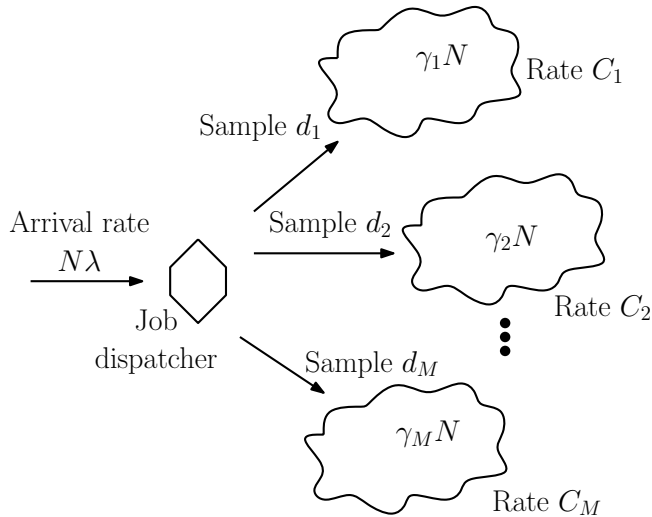}
\caption{System consisting of $N$ parallel 
processor sharing (PS) servers, categorized into $M$ types.
There are $\gamma_{j}N$ servers of type $j$, each of which has a capacity or rate $C_{j}$.
Arrivals occur according to a Poisson process with rate $N\lambda$. For each arrival,
the job dispatcher samples $d_{j}$ servers of type $j$ and routes the arrival to 
to one of the sampled servers.}
\label{fig:system_model}
\end{figure}

Jobs arrive at the system according to a Poisson process 
with rate $N\lambda$. Each job is of random  
length, independent and exponentially 
distributed with a finite mean $\frac{1}{\mu}$ (bits).\footnote{As discussed later,
our results do not depend on the
type of job length distribution due to the insensitivity of the processor
sharing discipline.}
The inter-arrival times and the job lengths are assumed
to be independent of each other.
Upon arrival, a job is assigned to one of the $N$ servers 
where  the job stays till the completion of its service after which it leaves the system. 
The model is illustrated in Figure~\ref{fig:system_model}.
We consider the following two job assignment schemes.

\subsection{Scheme~1}
In this scheme, upon arrival of a job, $d_j$ servers
of type $j$ are sampled uniformly at random from the set
of $N\gamma_j$ servers of type $j$, for each $j \in \cal{J}$. 
Note that this sampling is done at the cluster of type $j$ servers by a local 
router.

Let $\cbrac{\mv{q}{N}{j,1}, \mv{q}{N}{j,2}, \ldots, \mv{q}{N}{j,d_j}}$
denote the vector of occupancies of the $d_j$ sampled
servers of type $j$. For each type $j \in \cal{J}$,
a sampled server with index
$k_j$ is chosen for further comparison where
$k_j$ is given by

\begin{equation}
k_j=\operatorname{arg}\min_{1 \leq r \leq d_j} \cbrac{\mv{q}{N}{j,r}}.
\label{eq:first_level}
\end{equation}

In case of ties among sampled servers of type $j$, the index $k_j$ is chosen uniformly at random
from the tied servers of that type. The occupancy information of the server corresponding to $k_j$ is sent to the central dispatcher.

Using this information from each of the clusters $j \in {\mathcal J}$ 
the arriving job is assigned by the dispatcher to the 
type $i$ sampled server having index $k_i$ where

\begin{equation}
i=\operatorname{arg}\min_{1 \leq j \leq M} \cbrac{\mv{q}{N}{j,k_j}}.
\end{equation}

Ties across server types are broken by choosing the server type having the highest
capacity among the tied servers. Thus, in this scheme,
each arrival is assigned to the server having the least instantaneous occupancy
among the subset of randomly selected servers.

\subsection{Scheme~2}
As in Scheme~1,  upon arrival of a job, a random subset of
$d_j$ servers of type $j$ is chosen uniformly, for each $j \in \cal{J}$.  
Then from each type $j \in \cal{J}$,  a server with index $k_j$ is chosen 
according to~\eqref{eq:first_level}
for further comparison across different server types. 
The arriving job is finally assigned to
the type $i$ sampled server having index $k_i$ if

\begin{equation}
i=\operatorname{arg}\max_{1 \leq j \leq M} \cbrac{{C_j}/{\mv{q}{N}{j,k_j}}}.
\end{equation}

Note that the quantity $C_j/\mv{q}{N}{j,k_j}$ denotes the processing
rate per unfinished job at the sampled type $j$ server with index $k_j$. 
Thus, in this scheme, an arrival is assigned to the server
that provides the highest processing rate per job among the sampled
set of servers.
Ties are broken in the same way as described in Scheme~1. 

It is clear that Scheme~2 differs from Scheme~1 only in the criterion for server selection.
In Scheme~1, server selection is done based only on the instantaneous occupancies of
the sampled servers, whereas in Scheme~2 server capacities are also
taken into account in the selection criterion.  
Note that in the heterogeneous scenario a server with
higher occupancy can still provide a higher processing
rate than a server with lower occupancy. Therefore,
Scheme~2 provides a finer metric for server selection. 



\subsection{Notations}
 
We define the following real sequence spaces:

\begin{align}
\UN &= \{\cbrac{g_n}_{n \in \mathbb{Z}_{+}}: g_0=1, 
g_n \geq g_{n+1} \geq 0, N \gamma_j g_n \in \mathbb{N} \text{ }\forall n \in \mathbb{Z}_{+}\} \label{eq:UN},\\
\Ub &= \{\cbrac{g_n}_{n \in \mathbb{Z}_{+}}: g_0=1, 
g_n \geq g_{n+1} \geq 0 \text{ }\forall n \in \mathbb{Z}_{+}\}, \label{eq:Ub}\\
\U &= \{\cbrac{g_n}_{n \in \mathbb{Z}_{+}}: g_0=1, 
g_n \geq g_{n+1} \geq 0 \text{ }\forall n \in \mathbb{Z}_{+}, \sum_{n=0}^{\infty} g_n < \infty\}. \label{eq:U}
\end{align}
Let $\prod_{j \in \cal{J}} \UN$, $\Ub^M$, and $\U^M$ denote
the Cartesian products of $\UN$, $\Ub$, and $\U$, respectively, over $j \in \cal{J}$.
An element $\mf{u}=\left\{\mv{u}{n}{j}, j \in \mathcal{J}, n \in \mb{Z}_+\right\}$ belongs 
to $\prod_{j \in \mathcal{J}} \UN$, $\Ub^M$, or $\U^M$ if for each $j \in \cal{J}$,
the sequence $\cbrac{\mv{u}{n}{j}}_{n \in \mb{Z}_+}$ belongs to $\UN$,  $\Ub$, or $\U$, respectively.
For $\mf{u}, \mf{w} \in \Ub^M$ we define the following distance metric 

\begin{equation}
\norm{\mf{u}-\mf{w}}=\sup_{j \in \cal{J}} \sup_{n \in \mb{Z}_{+}} \abs{\frac{\mv{u}{n}{j}-\mv{w}{n}{j}}{n+1}}.
\label{eq:norm}
\end{equation}
It can be easily verified that under the metric defined in~\eqref{eq:norm},
the space $\Ub^M$ is compact (and hence complete and separable).
Further, for any $k \in \mathbb{Z}_{+}$ and $i, j \in \cal{J}$, we define
\begin{align}
\ksub{k}{i}{j} &= \floor{\frac{C_{j}}{C_{i}}k} + 1, \\
\ktil{k}{i}{j} &= \ceil{\frac{C_{j}}{C_{i}}k},
\end{align}
where $\floor{x}$ denotes the greatest integer not exceeding $x$ 
and $\ceil{x}$ denotes the smallest integer greater than or equal to $x$.

Let $(H, \mathcal{H}, \mu_H)$ be a measure space and 
$f: H \to \mb{R}$ be a 
$\mu_H$-integrable function. We define duality brackets as 
$\inprod{f}{\mu_H}=\int f d\mu_H$.  
We denote the weak convergence (convergence in distribution)
of a sequence of probability measures $P_n$ (random variables $X_n$) to
a probability measure $P$ (random variable $X$) by $P_n \Rightarrow P$ ($X_n \Rightarrow X$).

\section{Stability analysis}
\label{sec:stability}
In this section, we derive the sufficient condition for 
the system to have a finite expected number of jobs at all times
under Scheme~1 and Scheme~2.
In other words, we find the set of arrival rates for which
the Markov process describing the time evolution of the system 
is {\em positive Harris recurrent} or {\em stable}. We use the stability 
condition derived
in~\cite{Bramson_stability_AAP} for more general join-the-shortest-queue
(JSQ) networks.

\begin{theorem}
\label{thm:stability}
The system under consideration is stable under both Scheme~1 and Scheme~2 if

\begin{equation}
\lambda < \mu \sum_{j \in \cal{J}} \gamma_j C_j.
\label{eq:maximal_stability}
\end{equation}
\end{theorem}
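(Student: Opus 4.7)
The plan is to cast the Markov process describing the system evolution under each scheme into the framework of the randomized JSQ-type networks analyzed in Bramson~\cite{Bramson_stability_AAP}, and then invoke the sufficient condition for stability derived there. The state descriptor I would work with is $X^N(t)=(Q_1^N(t),\ldots,Q_N^N(t))$, the vector of per-server occupancies, which, together with the exponential service and Poisson arrival assumptions, forms a continuous-time Markov chain. The target inequality $\lambda<\mu\sum_{j}\gamma_j C_j$ is exactly the workload condition: the arrival rate of work, $N\lambda/\mu$, is strictly less than the aggregate capacity, $N\sum_j\gamma_j C_j$, so one should expect this to be not only sufficient but essentially tight.

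First I would verify that the routing rules of Scheme~1 and Scheme~2 fall in the admissible class of Bramson's theorem. Concretely, one needs the routing to be a measurable function of the occupancies of the sampled servers, to be independent of past history given the current state, and to use only finite look-ahead (here, $\sum_{j}d_j$ servers). For Scheme~1 the selection rule \eqref{eq:first_level} together with the outer minimization across types produces exactly a ``shortest-queue among a random subset'' rule, which is the prototype Bramson considers. For Scheme~2, the rule picks the sampled server maximizing $C_j/\mv{q}{N}{j,k_j}$; I would observe that this is still a function of a finite state window and, moreover, that for any fixed configuration the chosen server has occupancy no larger than that of any fixed sampled server scaled by $C_{\max}/C_{\min}$, which is sufficient to fit within the generalized framework.

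Next I would check the fluid-level drift condition used in \cite{Bramson_stability_AAP}. The strategy is to show that, for any fluid limit of the scaled process (under the standard fluid scaling), the total work in the system has a negative drift whenever \eqref{eq:maximal_stability} holds. Because the service discipline is processor sharing, the instantaneous rate at which work leaves the system when every server is non-empty is $N\sum_j\gamma_j C_j$, while work arrives at rate $N\lambda/\mu$. The routing schemes are work-conserving in the sense that they never send a job to an empty server of type $j$ in preference to a nonempty sampled server only when doing so would strictly increase the minimum occupancy (resp.\ decrease the per-job rate); this implies that, in any fluid limit, empty servers remain rare enough that the full capacity $\sum_j\gamma_j C_j$ is achieved in the drift. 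Combined with the finite look-ahead randomized routing, this gives the strict negative drift of the Lyapunov function (total scaled workload) that Bramson's theorem requires, thereby yielding positive Harris recurrence.

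The main obstacle I foresee is Scheme~2, because its selection rule is not a pure shortest-queue rule and so the standard monotonicity-via-coupling arguments used for Scheme~1 do not transfer directly. To handle this, I would introduce an auxiliary coupling with a modified scheme in which, after the intra-type minimizations \eqref{eq:first_level}, ties between types are always broken in favour of the highest-capacity type; this dominates Scheme~2 in the sense of total unfinished work and is itself a shortest-queue-type scheme to which Bramson's theorem applies. Showing that this dominance is preserved pathwise — which requires care because the sampled sets and the tie-breaking rules interact with the state — is the delicate step, but once it is in place the stability of Scheme~2 under \eqref{eq:maximal_stability} follows from that of Scheme~1 and the coupling.
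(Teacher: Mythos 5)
You have picked the right tool---Corollary 1.1 of \cite{Bramson_stability_AAP}---but the proposal never verifies the hypothesis that corollary actually requires, and substitutes for it an argument that does not work. Bramson's sufficient condition is the \emph{subcriticality} condition: for \emph{every} subset $B\subseteq\mathcal{S}$ of servers, the rate at which work is forced into $B$ (i.e.\ $N\lambda\sum_{A\subseteq B}p_A/\mu$, where $p_A$ is the probability that the selection set is $A$) must be strictly less than the aggregate capacity $\sum_{n\in B}C_{(n)}$ of $B$. This is a static combinatorial condition on the selection-set distribution, not a fluid-limit drift condition for you to re-derive; the fluid analysis is Bramson's proof, not his hypothesis. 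Your drift heuristic only checks the case $B=\mathcal{S}$ (total work in at rate $N\lambda/\mu$ versus total capacity $N\sum_j\gamma_jC_j$), and it leans on the claim that ``empty servers remain rare enough that the full capacity is achieved,'' which is false in equilibrium and in any case begs the question. The maximization over all $B$ is precisely where the content lies: random sampling can in principle concentrate load on slow servers (indeed, \cite{Mukhopadhyay_ITC_2014} shows that sampling $d$ servers from the whole heterogeneous pool \emph{shrinks} the stability region), so one must compute $\rho=\max_B\bigl\{N\lambda\sum_{A\subseteq B}p_A/(\mu\sum_{n\in B}C_{(n)})\bigr\}$, observe that only sets $B$ containing at least $d_j$ servers of each type contribute, show the ratio is maximized at $B_j=N\gamma_j$, and obtain $\rho=\lambda/(\mu\sum_j\gamma_jC_j)$. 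That computation is the proof, and it is absent from your proposal.

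Your treatment of Scheme~2 compounds the problem. The subcriticality condition depends only on which subsets are sampled and with what probability, and Schemes~1 and~2 sample identically ($d_j$ servers of each type, uniformly); the within-set selection rule is irrelevant to the condition, so no separate argument for Scheme~2 is needed once the framework of \cite{Bramson_stability_AAP} is accepted for general non-idling selection within the sampled set. Instead you propose a pathwise dominance coupling between Scheme~2 and a tie-breaking variant of Scheme~1, and you yourself flag the preservation of that dominance as ``the delicate step'' without resolving it. Such monotone couplings between two state-dependent heterogeneous schemes are exactly the kind of thing that tends to fail (a single routing decision can reverse the ordering of occupancy vectors), so this cannot be left as an assertion. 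If you want a coupling-based alternative, the workable comparison is with a \emph{state-independent} scheme ($d_j=1$ and routing to type $j$ with probability $\gamma_jC_j/\sum_i\gamma_iC_i$), which is easily seen to be stable under \eqref{eq:maximal_stability} and dominates both schemes by the argument of \cite{Martin_AAP_1999}; comparing Scheme~2 to a cousin of Scheme~1 buys you nothing and costs you the hardest lemma.
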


\begin{proof}
Suppose that the $N$ servers in the system are indexed by the set $\mathcal{S}=\cbrac{1,2,\ldots,N}$.
For each job, we define a {\em selection set} to be the subset of $\sum_{j \in \cal{J}} d_j$ servers
sampled at its arrival. 
We denote by $p_A$ the probability that the subset $A \subseteq \mathcal{S}$ is chosen as the selection set
for an arrival. 
Note that $p_A$, $A \subseteq \cal{S}$, defines the
job assignment scheme used.
Under Scheme~1 and Scheme~2, the probability $p_A$ is non-zero only for subsets $A$ which contain $d_j$
servers of type $j$ for all $j \in \cal{J}$ and for each such a subset $A$, the probability $p_A$ is given by

\begin{equation}
p_A=\frac{1}{\prod_{j \in \mathcal{J}}\binom{N\gamma_j}{d_j}}.
\label{eq:select_rule}
\end{equation}
Now according to Corollary 1.1 of~\cite{Bramson_stability_AAP}, the system under consideration
is stable if it is {\em subcritical}, i.e., if it satisfies condition (1.2) of~\cite{Bramson_stability_AAP}.
Note that the additional conditions (1.11) and (1.12) of Corollary 1.1 of~\cite{Bramson_stability_AAP} 
are automatically satisfied since interarrival times are exponentially distributed. Applying
condition (1.2) of~\cite{Bramson_stability_AAP} to the system under consideration, we obtain 
the sufficient condition for stability of the system to be

\begin{equation}
\rho= \max_{B \subseteq \mathcal{S}} \cbrac{\brac{\mu \sum_{n \in B} C_{(n)}}^{-1} N \lambda \sum_{A \subseteq B} p_A} < 1,
\label{eq:subcriticality}
\end{equation} 
where $C_{(n)}$ denotes the capacity of the server with index $n$ in the set $\cal{S}$. Clearly,
for Scheme~1 and Scheme~2, the term within the braces in~\eqref{eq:subcriticality} is non-zero
only when the subset $B$ is composed of at least $d_j$ servers of type $j$ for all $j \in \cal{J}$. 
Let $B_j$ ($\geq d_j$) denote the 
number of type $j$ servers in $B$. Using~\eqref{eq:select_rule} and~\eqref{eq:subcriticality}
we now have

\begin{equation}
\rho=\max_{B \subseteq \mathcal{S}: N \gamma_j \geq B_j \geq d_j \forall \text{ }j \in \mathcal{J}} \cbrac{\frac{N\lambda}{\mu}\frac{1}{\sum_{j \in \cal{J}}B_j C_j} \prod_{j \in \mathcal{J}} \frac{\binom{B_j}{d_j}}{\binom{N\gamma_j}{d_j}}}.
\label{eq:subcriticality_simplified}
\end{equation}
It is easy to verify that that the function $\frac{\prod_{j \in \cal{J}} \binom{B_j}{d_j}}{\sum_{j \in \cal{J}} B_j C_j}$
is increasing with respect to $B_j$ for each $j \in \mathcal{J}$. Hence, the expression within the braces
in~\eqref{eq:subcriticality_simplified} is maximized when we set $B_j=N\gamma_j$. Hence, we have

\begin{equation}
\rho= \frac{N \lambda}{\mu} \frac{1}{N \sum_{j \in \cal{J}}\gamma_j C_j}= \frac{\lambda}{\mu\sum_{j \in \cal{J}} \gamma_j C_j}
\label{eq:subcriticality_simplified1}
\end{equation}
Therefore, from~\eqref{eq:subcriticality} and~\eqref{eq:subcriticality_simplified1} we conclude that the system
under consideration is stable under Scheme~1 and Scheme~2 if~\eqref{eq:maximal_stability} holds. 
\end{proof}

\begin{remark}
\label{rmk:coupling_proof}
{\em An alternative proof of stability via a coupling argument is as follows: Consider a modified scheme in which, upon arrival of each job, one server is chosen from each type uniformly at random
(i.e., $d_j=1$ for all $j \in \cal{J}$).
The job is then routed to the sampled server of type $j$
with probability $\frac{\gamma_j C_j}{\sum_{i \in \cal{J}}\gamma_i C_i}$  for each $j \in \cal{J}$.
A coupling argument, similar to the one discussed in the proof
of Theorem 3 of~\cite{Martin_AAP_1999}, shows that the system operating under the modified scheme
always has higher number of unfinished jobs than that operating under Scheme~1 or Scheme~2. 
It is easy to check that the system operating under the modified scheme is stable under~\eqref{eq:maximal_stability}.
Hence, the system operating under Scheme~1 and Scheme~2 also must be stable under~\eqref{eq:maximal_stability}.
}
\end{remark}

As discussed in~\cite{Bramson_stability_AAP}, for $\lambda > \mu \sum_{j \in \cal{J}} \gamma_j C_j$,
the system under consideration is unstable under any job assignment policy.
Thus, from Theorem~\ref{thm:stability} we conclude that Scheme~1 and Scheme~2 achieve
the maximal stability region.

\section{Mean field analysis}
\label{sec:mean_field}

We now analyze the time evolution of the number
of jobs in the system under Scheme~1 and Scheme~2.
Its exact characterization is difficult
since under both the schemes, arrivals at a given server
depend on the states of other servers. However,
it is possible to analyze the system
in the limit as the system size $N \to \infty$. Such a limit
is known as the mean field limit~\cite{Mitzenmacher_thesis,
Vvedenskaya_inftran_1996, Martin_AAP_1999} and 
it exists
because 
under random sampling of a fixed number of 
servers from each type the statistical properties of
the system do not change when 
states among servers of the same type are permuted.

To formally state our results, we define the process

\begin{equation}
\mathbf{x}_N(t)=\cbrac{x^{(j)}_{N, n}(t), j \in \mathcal{J}, n \in \mathbb{Z}_{+}} \text{ for } t \geq 0,
\end{equation}
where $x^{(j)}_{N, n}(t)$ denotes
the fraction of type $j$ servers having at least
$n$ unfinished jobs at time $t$. Thus, $\cbrac{x^{(j)}_{N, n}(t), n \in \mathbb{Z}_{+}}$ 
denotes the empirical tail distribution
of occupancy of type $j$ servers at time $t$. 
Clearly, $\mf{x}_N(t)$ is a Markov process
in the state space $\prod_{j \in \cal{J}}\UN$. 

\subsection{Convergence to the mean field}
The main aim of this subsection is to prove the following result.

\begin{theorem}
\label{thm:meanfield_conv}
If $\mf{x}_N(0)$ converges in distribution
to some constant $\mf{g} \in \Ub^M$ as $N \to \infty$, then 
the process $\cbrac{\mf{x}_N(t)}_{t \geq 0}$
converges in distribution to a process $\left\{\mf{u}(t)  \right\}_{t \geq 0}$,
lying in the space $\Ub^M$ as $N \to \infty$.
For Scheme~1, the process $\mf{u}(t)$ is given by the solution of the following system
of differential equations

\begin{align}
 \mf{u}(0) &= \mf{g}, \label{eq:diff1_x}\\
 \dot{\mf{u}}(t) &= \mf{l}(\mf{u}(t)), \label{eq:diff2_x}
\end{align}
where the mapping $\mf{l}:\Ub^M \to \brac{\mb{R}^{\mb{Z}_+}}^M$
is given by

\begin{align}
\mv{l}{0}{j}(\mf{u}) &= 0,  \text{ for } j \in \mathcal{J}, \label{eq:diff3_x}\\
 \mv{l}{k}{j}(\mf{u}) &= \frac{\lambda}{\gamma_{j}} \left(\left(\mv{u}{k-1}{j}\right)^{d_{j}} - \left(\mv{u}{k}{j}\right)^{d_{j}}\right)
\prod_{i=1}^{j-1} \left(\mv{u}{k-1}{i}\right)^{d_{i}}
\prod_{i=j+1}^{M} \left(\mv{u}{k}{i}\right)^{d_{i}} \label{eq:diff4_x}\\
&\hspace{3cm}- \mu C_{j} \left(\mv{u}{k}{j} - \mv{u}{k+1}{j}\right), \text{ for } k \geq 1, j \in \mathcal{J}. \nonumber
\end{align}
For Scheme~2, the process $\mf{u}(t)$ is given by the solution of

\begin{align}
 \mf{u}(0) &= \mf{g}, \label{eq:diff1_xc}\\
 \dot{\mf{u}}(t) &= \mf{\tilde{l}}(\mf{u}(t)), \label{eq:diff2_xc}
\end{align}
where the mapping $\mf{\tilde{l}}:\Ub^M \to \brac{\mb{R}^{\mb{Z}_+}}^M$
is given by

\begin{align}
\mv{\tilde{l}}{0}{j}(\mf{u}) &= 0, \text{ for } j \in \mathcal{J}, \label{eq:diff3_xc}\\
 \mv{\tilde{l}}{k}{j}(\mf{u}) &= \frac{\lambda}{\gamma_{j}} \left(\left(\mv{u}{k-1}{j}\right)^{d_{j}} - \left(\mv{u}{k}{j}\right)^{d_{j}}\right)
\prod_{i=1}^{j-1} \left(\mv{u}{\ktil{k-1}{j}{i}}{i}\right)^{d_{i}} \label{eq:diff4_xc}\\
&\times \prod_{i=j+1}^{M} \left(\mv{u}{\ksub{k-1}{j}{i}}{i}\right)^{d_{i}}
- \mu C_{j} \left(\mv{u}{k}{j} - \mv{u}{k+1}{j}\right), \text{ for } k \geq 1, j \in \mathcal{J}. \nonumber
\end{align}
\end{theorem}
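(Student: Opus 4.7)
The plan is to follow the classical functional weak-law-of-large-numbers recipe: (i) compute the infinitesimal generator $\mathcal{A}_N$ of the Markov jump process $\mf{x}_N(t)$ and show that on each cylindrical coordinate it equals $\mv{l}{k}{j}(\cdot) + O(1/N)$ (resp.\ $\mv{\tilde l}{k}{j}(\cdot) + O(1/N)$) uniformly on $\Ub^M$; (ii) prove that the ODE~\eqref{eq:diff1_x}--\eqref{eq:diff4_x} (resp.\ \eqref{eq:diff1_xc}--\eqref{eq:diff4_xc}) has a unique global solution in $\Ub^M$; (iii) prove tightness of the laws of $\mf{x}_N(\cdot)$ in the Skorokhod space $D([0,T],\Ub^M)$ via an Aldous--Rebolledo argument applied to the Dynkin martingales; (iv) identify every subsequential limit with the unique ODE solution and conclude. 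The hard parts are (ii) and (iii), because the state is infinite-dimensional and a naive sup-norm estimate of the departure drift blows up with the coordinate index.

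\textbf{Drift identification.} Fix $f_{j,k}(\mf{x}) = \mv{x}{k}{j}$. For Scheme~1, the probability that an arrival lands on a sampled type-$j$ server holding exactly $k{-}1$ jobs \emph{and} wins the inter-type comparison (ties broken in favour of the highest-capacity, i.e.\ largest-index, type) equals
\begin{equation*}
\brac{\brac{\mv{x}{k-1}{j}}^{d_j} - \brac{\mv{x}{k}{j}}^{d_j}}\prod_{i<j}\brac{\mv{x}{k-1}{i}}^{d_i}\prod_{i>j}\brac{\mv{x}{k}{i}}^{d_i} + O(1/N),
\end{equation*}
the $O(1/N)$ term accounting for sampling without replacement. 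The departure contribution comes from the $N\gamma_j(\mv{x}{k}{j}-\mv{x}{k+1}{j})$ type-$j$ servers currently holding $k$ jobs, each losing one at total rate $\mu C_j$ (the $k$ jobs share $C_j$, each job exponential with mean $1/\mu$). Multiplying the net rate by the jump size $1/(N\gamma_j)$ gives $\mathcal{A}_N f_{j,k} = \mv{l}{k}{j} + O(1/N)$. For Scheme~2, the inter-type comparison becomes $C_j/(k{-}1) > C_i/q_i$, which after translating the strict/non-strict tie convention into an integer threshold on $q_i$ yields exactly the thresholds $\ksub{k-1}{j}{i}$ and $\ktil{k-1}{j}{i}$ appearing in~\eqref{eq:diff4_xc}.

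\textbf{Well-posedness of the ODE.} The key observation is that the weight $1/(n{+}1)$ built into the metric~\eqref{eq:norm} tames the departure term. Each entry of $\mf{l}$ is a polynomial in a finite set of coordinates of $\mf{u}\in[0,1]^{\cdot}$ (with derivatives bounded by $\max_j d_j$) plus the difference $\mu C_j(\mv{u}{k}{j}-\mv{u}{k+1}{j})$. When one bounds $|\mv{l}{k}{j}(\mf{u})-\mv{l}{k}{j}(\mf{w})|/(k{+}1)$, the polynomial part contributes a constant multiple of $\norm{\mf{u}-\mf{w}}$, and the apparently dangerous piece is controlled by
\begin{equation*}
\frac{\mu C_j\,\abs{\mv{u}{k+1}{j}-\mv{w}{k+1}{j}}}{k+1} \leq \mu C_j\cdot\frac{k+2}{k+1}\cdot\frac{\abs{\mv{u}{k+1}{j}-\mv{w}{k+1}{j}}}{k+2} \leq 2\mu C_M\norm{\mf{u}-\mf{w}}.
\end{equation*}
Hence $\mf{l}$ is uniformly Lipschitz on $(\Ub^M,\norm{\cdot})$, a Picard iteration in this complete metric space yields a unique global solution depending continuously on $\mf{g}$, and the invariance of $\Ub^M$ under the flow follows from checking that $\mv{l}{k}{j}\geq 0$ whenever $\mv{u}{k}{j}=\mv{u}{k-1}{j}$ and $\mv{l}{k}{j}\leq 0$ whenever $\mv{u}{k}{j}=\mv{u}{k+1}{j}$. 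The same reasoning handles $\mf{\tilde l}$.

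\textbf{Tightness and identification.} Compactness of $\Ub^M$ gives tightness of the one-time marginals. For the path space, I would apply Aldous--Rebolledo to the Dynkin martingales
\begin{equation*}
\mv{M}{k,N}{j}(t) = \mv{x}{k}{j}(t) - \mv{x}{k}{j}(0) - \int_0^t \mv{l}{k}{j}(\mf{x}_N(s))\,ds,
\end{equation*}
whose predictable quadratic variations are $O(1/N)$ because each jump has size $1/(N\gamma_j)$ and the aggregate jump rate affecting $\mv{x}{k}{j}$ is $O(N)$. Doob's inequality then kills the modulus of continuity in the limit. Along any weakly convergent subsequence $\mf{x}_{N_r} \Rightarrow \mf{u}$, uniform convergence of $\mathcal{A}_N f_{j,k}$ on $\Ub^M$ and the continuous mapping theorem force $\mf{u}$ to satisfy the integral form of~\eqref{eq:diff2_x} coordinatewise; uniqueness from the previous paragraph then pins the limit down, so the whole sequence converges in distribution. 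The identical argument with $\mf{\tilde l}$ in place of $\mf{l}$ handles Scheme~2.
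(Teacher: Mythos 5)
Your proposal is correct, but it takes a genuinely different route from the paper. The paper proves the theorem via operator semigroups in the style of Vvedenskaya et al.\ and Martin--Suhov: it writes down the generator $\mf{A}_N$ of $\mf{x}_N$ (your ``drift identification'' step, stated there as an exact expression rather than with the explicit $O(1/N)$ without-replacement correction you carry), establishes that the flow $\mf{g}\mapsto\mf{u}(t,\mf{g})$ has bounded first and second partial derivatives in the initial condition, deduces $\sup_{\mf{g}}\abs{\mf{T}_N(t)f(\mf{g})-f(\mf{u}(t,\mf{g}))}\to 0$ uniformly on bounded time intervals, and then invokes the Trotter--Kurtz semigroup convergence theorem (Theorem~2.11, Chapter~4 of Ethier--Kurtz) to get convergence of the processes. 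You instead run the martingale-problem recipe: Dynkin martingales with $O(1/N)$ quadratic variation, Aldous--Rebolledo tightness in $D([0,T],\Ub^M)$, and identification of subsequential limits with the unique ODE solution. The well-posedness step is essentially identical in both treatments (weighted sup metric, Lipschitz bound, Picard iteration, invariance of $\Ub^M$ checked on the boundary), and both exploit compactness of $\Ub^M$. What each buys: your route avoids the smoothness-in-$\mf{g}$ estimates entirely and is more self-contained at the level of standard weak-convergence tools; the paper's route avoids path-space tightness arguments and, more importantly, delivers convergence that is \emph{uniform over initial states} $\mf{g}\in\prod_j\UN$, which the paper later leverages in Theorem~\ref{thm:stationary} to show that any limit point of the stationary measures $\pi_N$ is invariant under the limiting flow. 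If you intend your argument to feed into that interchange-of-limits step, you would need to supplement it with a locally uniform (in $\mf{g}$) version of your identification argument, which your Lipschitz continuity of the flow in $\mf{g}$ does in fact provide.
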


The process $\cbrac{\mf{u}(t)}_{t\geq 0}$, defined in the
theorem above, is referred to as the {\em mean field}.
We first note that Theorem~\ref{thm:meanfield_conv} implicitly assumes
that the ordinary differential 
systems \eqref{eq:diff1_x}-\eqref{eq:diff2_x}
and~\eqref{eq:diff1_xc}-\eqref{eq:diff2_xc}
have unique solutions in the space $\Ub^M$.
In the following proposition, we show that this is indeed the case. 
To emphasize the dependence
of the solution $\mf{u}(t)$ on the initial point $\mf{g}$, we will often denote
$\mf{u}(t)$ by $\mf{u}(t,\mf{g})$.

\begin{proposition}
\label{thm:uniqueness_sol}
If $\mf{g} \in \Ub^M$, then each of the systems~\eqref{eq:diff1_x}-\eqref{eq:diff2_x}
and~\eqref{eq:diff1_xc}-\eqref{eq:diff2_xc} has a unique solution $\mf{u}(t,\mf{g}) \in \Ub^M$,
for all $t \geq 0$.
\end{proposition}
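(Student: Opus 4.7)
The plan is to invoke the classical Picard-Lindelöf theorem on a Banach space containing $\Ub^M$ and then verify that $\Ub^M$ is forward-invariant under the induced flow. I only spell out the argument for Scheme~1 (the map $\mf{l}$); the argument for $\mf{\tilde{l}}$ is identical up to bookkeeping of the shifted indices. First, embed $\Ub^M$ into the Banach space $(E, \norm{\cdot})$, where $E$ consists of all doubly-indexed real sequences $\mf{u}=(\mv{u}{n}{j})_{j \in \mathcal{J}, n \in \mb{Z}_+}$ with $\norm{\mf{u}} < \infty$ for the weighted sup norm of~\eqref{eq:norm}. Since $\Ub^M$ is a closed, bounded, convex subset of $E$, any local solution produced by Picard-Lindelöf that stays in $\Ub^M$ will automatically extend to all $t \geq 0$.

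The Lipschitz estimate to establish is
\begin{equation*}
\norm{\mf{l}(\mf{u}) - \mf{l}(\mf{w})} \leq L\,\norm{\mf{u}-\mf{w}} \quad \text{for all } \mf{u}, \mf{w} \in \Ub^M,
\end{equation*}
with $L$ depending only on $\lambda$, $\mu$, the $C_{j}$ and the $d_{j}$. It suffices to prove the pointwise bound $\abs{\mv{l}{k}{j}(\mf{u}) - \mv{l}{k}{j}(\mf{w})} \leq L'(k+1)\norm{\mf{u}-\mf{w}}$ and divide by $k+1$. Split $\mv{l}{k}{j}$ into arrival and departure pieces. The departure piece $\mu C_{j}(\mv{u}{k}{j}-\mv{u}{k+1}{j})$ contributes at most $\mu C_{j}(2k+3)\norm{\mf{u}-\mf{w}}$. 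The arrival piece is a product of factors all in $[0,1]$; its difference is controlled by the elementary inequalities $\abs{\prod_{\alpha}a_{\alpha}-\prod_{\alpha}b_{\alpha}} \leq \sum_{\alpha}\abs{a_{\alpha}-b_{\alpha}}$ (valid when all factors lie in $[0,1]$) and $\abs{a^{d}-b^{d}} \leq d\abs{a-b}$, combined with the a priori bound $\abs{\mv{u}{r}{i}-\mv{w}{r}{i}} \leq (r+1)\norm{\mf{u}-\mf{w}}$. Since only indices $r \in \{k-1, k, k+1\}$ appear in~\eqref{eq:diff4_x}, this piece contributes $O(k)\norm{\mf{u}-\mf{w}}$ as well. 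For $\mf{\tilde{l}}$ the relevant indices can grow like $(C_{M}/C_{1})(k+1)$, yielding a larger but still finite Lipschitz constant. Having the Lipschitz bound, I would extend $\mf{l}$ to a globally Lipschitz map on $E$ (for instance by composing coordinate-wise with a clipping onto $[0,1]$, which leaves $\mf{l}$ unchanged on $\Ub^M$) and apply Picard-Lindelöf to obtain a unique global solution $\mf{u}(t, \mf{g}) \in E$ for every $\mf{g} \in E$.

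The main obstacle is forward invariance of $\Ub^M$, which I would settle by a Nagumo-type boundary analysis of the vector field. The normalization $\mv{u}{0}{j}(t) = 1$ is preserved automatically because $\mv{l}{0}{j} \equiv 0$. For the remaining constraints I verify that $\mf{l}$ is inward-pointing on $\partial \Ub^M$. (i) If $\mv{u}{n}{j}(t_0) = 1$ with $n \geq 1$, monotonicity forces $\mv{u}{n-1}{j}(t_0) = 1$, so the arrival piece of $\mv{l}{n}{j}(\mf{u}(t_0))$ vanishes while the departure piece is non-positive, giving $\mv{l}{n}{j}(\mf{u}(t_0)) \leq 0$. (ii) If $\mv{u}{n}{j}(t_0) = 0$, monotonicity forces $\mv{u}{n+1}{j}(t_0) = 0$, so the departure piece vanishes while the arrival piece is non-negative, giving $\mv{l}{n}{j}(\mf{u}(t_0)) \geq 0$. (iii) If $\mv{u}{n}{j}(t_0) = \mv{u}{n+1}{j}(t_0)$, a direct inspection of~\eqref{eq:diff4_x} shows $\mv{l}{n}{j}(\mf{u}(t_0)) - \mv{l}{n+1}{j}(\mf{u}(t_0)) \geq 0$, because the arrival contribution at level $n+1$ vanishes while its departure piece is non-positive, and the arrival contribution at level $n$ is non-negative with vanishing departure. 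Combining these inward-pointing properties with continuity of the flow, a standard minimal exit-time argument rules out any escape from $\Ub^M$, and the identical reasoning applies to $\mf{\tilde{l}}$, completing the proof.
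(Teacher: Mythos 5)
Your proof is correct and follows essentially the same route as the paper's: both extend the vector field by clipping coordinates to $[0,1]$ so that it becomes globally Lipschitz in the weighted sup metric of~\eqref{eq:norm} on the full sequence space, apply Picard iteration for global existence and uniqueness there, and then verify forward invariance of $\Ub^M$ by checking the sign of the field on the faces $u^{(j)}_n = u^{(j)}_{n+1}$ and $u^{(j)}_n = 0$. The paper's modified field additionally clips the increments with $[\cdot]_+$, but its boundary sign analysis and the elementary product/power inequalities behind the Lipschitz constant are the same as yours.
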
  

\begin{proof}
The proof is given in Appendix~\ref{proof:uniqueness_sol}.
\end{proof}

We will prove Theorem~\ref{thm:meanfield_conv} using the theory
of semigroup operators of Markov processes as in~\cite{Vvedenskaya_inftran_1996,Martin_AAP_1999}.
Before doing so, we recall the following from~\cite{Ethier_Kurtz_book}.
\begin{itemize} 
\item For the process $\cbrac{\mf{x}_N(t)}_{t\geq 0}$, the operator semigroup
$\cbrac{\mf{T}_N(t)}_{t\geq 0}$ acting on continuous functions $f:\prod_{j=1}^{M}\UN \to \mb{R}$
is defined as

\begin{equation*}
\mf{T}_N(t)f(\mf{x})\!=\!\expect{f(\mf{x}_N(t))\vert \mf{x}_N(0)\!=\!\mf{x}} \quad \forall t \geq 0, \mf{x} \in \prod_{j \in \cal{J}} \UN.
\end{equation*}
\item For the deterministic process $\cbrac{\mf{u}(t)}_{t \geq 0}$,
the transition semigroup $\cbrac{\mf{T}(t)}_{t\geq 0}$ acting on continuous  
functions $f:\Ub^M \to \mb{R}$
is defined as

\begin{equation*}
\mf{T}(t)f(\mf{x})=f(\mf{u}(t,\mf{x})) \quad \forall t \geq 0,\mf{x} \in \Ub^M.
\end{equation*}
\end{itemize}

In the next proposition, we show 
that $\mf{T}_N(t)$ converges to $\mf{T}(t)$ uniformly
on bounded intervals. 
This
in conjunction with
Theorem 2.11 of Chapter 4 of~\cite{Ethier_Kurtz_book} proves 
Theorem~\ref{thm:meanfield_conv}.



\begin{proposition}
\label{thm:convergence_semigroup}
For both Scheme~1 and Scheme~2, and for any continuous function $f:\Ub^M \to \mb{R}$ and $t \geq 0$,

\begin{equation}
\lim_{N \rightarrow \infty} \sup_{\mf{g} \in \prod_{j \in \cal{J}}\UN} \abs{\mf{T}_N(t) f(\mf{g}) - f(\mf{u}(t,\mf{g}))}=0
\label{eq:semigroup_conv}
\end{equation}
and the convergence is uniform in $t$ within any bounded interval.
\end{proposition}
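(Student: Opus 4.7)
The plan is to verify the hypotheses of Theorem~2.11 of Chapter~4 of~\cite{Ethier_Kurtz_book}, which converts semigroup convergence into uniform convergence of infinitesimal generators on a suitable core. I will carry out the argument in detail for Scheme~1; the argument for Scheme~2 is identical after replacing the factors $(x^{(i)}_{k-1})^{d_i}$ and $(x^{(i)}_k)^{d_i}$ by the shifted versions at indices $\ktil{k-1}{j}{i}$ and $\ksub{k-1}{j}{i}$. Throughout, $\mathbf{e}^{(j)}_k$ denotes the unit vector in the $(j,k)$-direction of the state space.

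First I would identify the generator $\mathbf{A}_N$ of the pure-jump Markov process $\mathbf{x}_N(t)$. The only transitions are: an arrival routed to a type-$j$ server of current occupancy $k-1$, which takes $\mathbf{x}$ to $\mathbf{x}+(N\gamma_j)^{-1}\mathbf{e}^{(j)}_k$ at rate $N\lambda P^{(1)}_{j,k}(\mathbf{x})$, and a service completion at a type-$j$ server of occupancy $k$, which takes $\mathbf{x}$ to $\mathbf{x}-(N\gamma_j)^{-1}\mathbf{e}^{(j)}_k$ at rate $N\gamma_j\mu C_j(x^{(j)}_k-x^{(j)}_{k+1})$. A combinatorial calculation, using that sampling within each cluster is hypergeometric and that ties across types are broken in favour of higher capacity, gives
\begin{equation*}
P^{(1)}_{j,k}(\mathbf{x}) = \big((x^{(j)}_{k-1})^{d_j}-(x^{(j)}_k)^{d_j}\big)\prod_{i<j}(x^{(i)}_{k-1})^{d_i}\prod_{i>j}(x^{(i)}_k)^{d_i} + O(N^{-1}),
\end{equation*}
with the remainder uniform in $\mathbf{x}$ because each $d_i$ is fixed while $N\gamma_i\to\infty$.

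The candidate limit generator is $\mathbf{A}f(\mathbf{x})=\sum_{j,k}l^{(j)}_k(\mathbf{x})\,\partial_{x^{(j)}_k}f(\mathbf{x})$. Because $\Ub^M$ is compact under the metric~\eqref{eq:norm} and $\mathbf{g}\mapsto\mathbf{u}(t,\mathbf{g})$ is continuous by Proposition~\ref{thm:uniqueness_sol}, the prescription $\mathbf{T}(t)f(\mathbf{g})=f(\mathbf{u}(t,\mathbf{g}))$ defines a strongly continuous contraction semigroup on $C(\Ub^M)$. For its core I would take the algebra $D$ of smooth cylinder functions $f(\mathbf{x})=g(x^{(j_1)}_{k_1},\ldots,x^{(j_r)}_{k_r})$ with $g\in C^\infty(\mathbb{R}^r)$. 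Stone--Weierstrass makes $D$ uniformly dense in $C(\Ub^M)$, and coreness then follows from this density combined with the Lipschitz continuity of $\mathbf{l}$ established inside the proof of Proposition~\ref{thm:uniqueness_sol}, which permits approximating $\mathbf{T}(t)f$ by cylinder functions with controlled error. For $f\in D$, Taylor expansion yields
\begin{equation*}
f\big(\mathbf{x}\pm(N\gamma_j)^{-1}\mathbf{e}^{(j)}_k\big)-f(\mathbf{x})=\pm\frac{1}{N\gamma_j}\partial_{x^{(j)}_k}f(\mathbf{x})+O(N^{-2}),
\end{equation*}
uniformly in $\mathbf{x}$ because $g$ is $C^2$ with bounded derivatives on the compact range $[0,1]^r$. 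Multiplying by the $O(N)$ transition rates, substituting the expansion of $P^{(1)}_{j,k}$, and observing that cylindricity restricts the outer sum to the finitely many $(j,k)$ on which $f$ depends, I obtain $\sup_{\mathbf{x}\in\prod_{j\in\mathcal{J}}\UN}|\mathbf{A}_N f(\mathbf{x})-\mathbf{A}f(\mathbf{x})|=O(N^{-1})\to 0$.

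With uniform generator convergence on a core and $\mathbf{T}_N(t),\mathbf{T}(t)$ both contractive, Theorem~2.11 of Chapter~4 of~\cite{Ethier_Kurtz_book} delivers $\mathbf{T}_N(t)f\to\mathbf{T}(t)f$ uniformly in $\mathbf{g}$ and in $t$ on bounded intervals, for every $f\in D$. A routine $3\varepsilon$-argument using contractivity of the two semigroups together with the uniform density of $D$ extends this to arbitrary $f\in C(\Ub^M)$. The main technical obstacle I anticipate is verifying that $D$ really is a core for $\mathbf{A}$: the flow $\mathbf{u}(t,\cdot)$ does not preserve cylindricity, so the coreness must be argued by quantitative finite-coordinate truncation, where the weights $(n+1)^{-1}$ in the metric~\eqref{eq:norm} provide the summability required to control the tail uniformly in $\mathbf{g}$.
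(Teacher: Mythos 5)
Your identification of the generator $\mathbf{A}_N$ and the computation showing $\sup_{\mf{g}}\abs{\mathbf{A}_N f(\mf{g})-\mathbf{A}f(\mf{g})}=O(N^{-1})$ for smooth cylinder functions $f$ are sound and consistent with Lemma~\ref{thm:generator}. The genuine gap is exactly the point you flag and then defer: the claim that the algebra $D$ of smooth cylinder functions is a core for the generator $\mathbf{A}$ of the flow semigroup $\mf{T}(t)f=f(\mf{u}(t,\cdot))$. The Trotter--Kato type theorem you invoke requires a core, not merely a uniformly dense subalgebra, and since $\mf{T}(t)$ does not preserve cylindricity the standard sufficient condition (density plus invariance of $D$ under the semigroup) is unavailable. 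Establishing coreness would require either a range/resolvent condition for the restriction of $\mathbf{A}$ to $D$ or a graph-norm approximation of arbitrary elements of the domain by cylinder functions; the sentence about ``quantitative finite-coordinate truncation'' does not supply this, and note that an element of the domain of $\mathbf{A}$ need not even be coordinatewise differentiable, so approximating $\mathbf{A}f$ by finite sums $\sum_{j,k}l^{(j)}_k\,\partial f_n/\partial \mv{g}{k}{j}$ is not a routine truncation. As written, the argument does not close.

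The paper takes a different route, following Theorem~2 of~\cite{Martin_AAP_1999}, which avoids the core question altogether: it first proves (the second lemma of Appendix~\ref{proof:convergence_semigroup}) that the flow $\mf{g}\mapsto\mf{u}(t,\mf{g})$ is twice differentiable in the initial data with explicit exponential bounds on the first and second partial derivatives, and then estimates $\mf{T}_N(t)f(\mf{g})-f(\mf{u}(t,\mf{g}))=\int_0^t \mf{T}_N(s)\bigl[(\mathbf{A}_N-\mathbf{A})\,f(\mf{u}(t-s,\cdot))\bigr](\mf{g})\,ds$ directly, using those derivative bounds to control the second-order Taylor remainder of $f(\mf{u}(t-s,\cdot))$ uniformly in $\mf{g}$ and in $s\leq t$; density of smooth functions and contractivity of the semigroups then extend the conclusion to all continuous $f$. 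If you wish to salvage your route, the missing smoothness-of-the-flow lemma is the ingredient to import: applied to $f\circ\mf{u}(t-s,\cdot)$ it plays precisely the role that your core assumption was meant to play.
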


\begin{proof}
The proof is given in Appendix~\ref{proof:convergence_semigroup}.
\end{proof}

%

\begin{remark}
\label{rmk:weaker:implications}
{\em We note that Theorem~\ref{thm:meanfield_conv} implies
that if $\mf{x}_N(0) \Rightarrow \mf{g} \in \Ub^M$ as $N \to \infty$, then 
the following weaker convergence results also hold:

\begin{enumerate}
\item For each $t \geq 0$, $\mf{x}_N(t) \Rightarrow \mf{u}(t, \mf{\mf{g}})$ as $N \to \infty$.
\item For each $t \geq 0$, $j \in \cal{J}$, and $k \in \mb{Z}_+$,
$\mv{x}{N,k}{j}(t) \Rightarrow \mv{u}{k}{j}(t,\mf{g})$ as $N \to \infty$.
\item For each $t \geq 0$, $j \in \cal{J}$, and $k \in \mb{Z}_+$,
$\expect{\mv{x}{N,k}{j}(t)} \to \mv{u}{k}{j}(t,\mf{g})$ as $N \to \infty$.
\end{enumerate}
The last assertion follows from the first since 
$\mv{x}{N,k}{j}(t)$ is bounded for each $N, j,k,t$.}
\end{remark}

\subsection{Properties of the mean field}
\label{sec:mean_field_properties}

In this section, we characterize some important properties of the mean field.
In particular, we show that, under the stability condition~\eqref{eq:maximal_stability},
both~\eqref{eq:diff1_x}-\eqref{eq:diff2_x} and~\eqref{eq:diff1_xc}-\eqref{eq:diff2_xc}
have unique equilibrium points in $\U^M$. Further, we show that
the equilibrium points are globally asymptotically stable for both systems.

Let  $\mf{P}$, $\tilde{\mf{P}}$ denote the equilibrium points
of~\eqref{eq:diff1_x}-\eqref{eq:diff2_x} and~\eqref{eq:diff1_xc}-\eqref{eq:diff2_xc}, respectively.
In other words, $\mf{P}$  and $\tilde{\mf{P}}$ satisfy
$\mf{l}(\mf{P})=\mf{0}$ and $\mf{\tilde{l}}({\mf{\tilde{P}}})=\mf{0}$.
Hence, for all $k \in \mb{Z}_+$
and $j \in \cal{J}$ the following must hold

\begin{multline}
P_{k+1}^{(j)}-\mv{P}{k+2}{j} = \Delta_j\left(\left(\mv{P}{k}{j}\right)^{d_{j}} - \left(\mv{P}{k+1}{j}\right)^{d_{j}}\right)\\
\times \prod_{i=1}^{j-1} \left(\mv{P}{k}{i}\right)^{d_{i}}
  \prod_{i=j+1}^{M} \left(\mv{P}{k+1}{i}\right)^{d_{i}},
\label{eq:tailhet_x}
\end{multline}

\begin{multline}
\tilde{P}_{k+1}^{(j)}-\mv{\tilde{P}}{k+2}{j} = \Delta_j\left(\left(\mv{\tilde{P}}{k}{j}\right)^{d_{j}} - \left(\mv{\tilde{P}}{k+1}{j}\right)^{d_{j}}\right)\\
\times \prod_{i=1}^{j-1} \left(\mv{\tilde{P}}{\ktil{k}{j}{i}}{i}\right)^{d_{i}}
\prod_{i=j+1}^{M} \left(\mv{\tilde{P}}{\ksub{k}{j}{i}}{i}\right)^{d_{i}},
\label{eq:tailhet_xc}
\end{multline} 
where $\Delta_j=\frac{\lambda}{\mu \gamma_j C_j}$ for each $j \in \cal{J}$.
Note that by definition we have $\mv{P}{0}{j}=\mv{\tilde{P}}{0}{j}=1$ 
for all $j \in \cal{J}$.
The next proposition reveals an important property of the equilibrium
points $\mf{P}$ and $\tilde{\mf{P}}$. To state it we first need the following definition.

\begin{definition}
A real sequence $\cbrac{z_n}_{n\geq 1}$ is said to decrease doubly
exponentially if and only if there exist positive
constants $L$, $\omega < 1$, $\theta > 1$, and $\kappa$ such that
$z_n \leq \kappa \omega^{\theta^{n}}$ for all $n \geq L$.
\end{definition}

Hence,  if a sequence $\cbrac{z_n}_{n\geq 1}$ decays doubly exponentially,
then it is summable, i.e., $\sum_{n=1}^{\infty} z_n < \infty$.

\begin{proposition}
\label{thm:tail_het_properties}
Assume that for each $j \in \cal{J}$, 
$\mv{P}{k}{j}, \mv{\tilde{P}}{k}{j} \downarrow 0$ as $k \rightarrow \infty$. 
Then the following equations must hold

\begin{equation}
\sum_{j \in \cal{J}} \frac{ \mv{P}{l+1}{j}}{\Delta_j}= \prod_{j\in \cal{J}}\brac{\mv{P}{l}{j}}^{d_j}.
\label{eq:coupling_x}
\end{equation}
\begin{equation}
\frac{ \mv{\tilde{P}}{l+1}{1}}{\Delta_1} + \sum_{j=2}^{M} \frac{\mv{\tilde{P}}{\ksub{l-1}{1}{j}+1}{j}}{\Delta_j} = \left(\mv{\tilde{P}}{l}{1}\right)^{d_{1}} \prod_{j=2}^{M} \left(\mv{\tilde{P}}{\ksub{l-1}{1}{j}}{j}\right)^{d_{j}}.
\label{eq:coupling_xc}
\end{equation}
Further, for each $j \in \cal{J}$, the sequences $\cbrac{\mv{P}{k}{j}, k \in \mb{Z}_+}$ 
and $\cbrac{\mv{\tilde{P}}{k}{j}, k \in \mb{Z}_+}$ decrease doubly exponentially.
In particular, under the assumption of the proposition,
both $\cbrac{\mv{P}{k}{j}, k \in \mb{Z}_+}$ 
and $\cbrac{\mv{\tilde{P}}{k}{j}, k \in \mb{Z}_+}$ are summable sequences.
\end{proposition}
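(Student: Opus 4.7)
Both coupling identities will be proved via a single flow-balance principle: they express, in two different ways, the steady-state rate at which an arriving job is routed into a specific sublevel set of the (type, occupancy) lattice. The right-hand side arises as the probability that every sampled server lies in this set, while the left-hand side arises by summing the per-level arrival rates encoded in the equilibrium equations \eqref{eq:tailhet_x}/\eqref{eq:tailhet_xc} over the set and telescoping in the level index. The hypothesis $P_k^{(j)},\tilde P_k^{(j)}\downarrow 0$ is used both to justify the telescoping and to start the doubly exponential iteration.

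\textbf{Equation \eqref{eq:coupling_x}.} Under Scheme~1 the preference on servers is by occupancy alone (with ties broken by capacity), so the set of $(j,k)$ strictly worse than a type~1 server at occupancy $l-1$ is $S_l=\{(j,k):k\ge l\}$. Since any state outside $S_l$ has strictly fewer jobs than any state inside, an arrival is routed to $S_l$ iff all $\sum_j d_j$ sampled servers lie in $S_l$, giving probability $\prod_j(P_l^{(j)})^{d_j}$, which is the right-hand side of \eqref{eq:coupling_x}. On the other hand, \eqref{eq:tailhet_x} is the per-level balance: the arrival rate at $(j,k)$ equals the departure rate from $(j,k+1)$, namely $\gamma_j N\mu C_j(P_{k+1}^{(j)}-P_{k+2}^{(j)})$. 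Summing this over $(j,k)\in S_l$ and telescoping in $k$ yields total arrival rate into $S_l$ equal to $\sum_j\gamma_j N\mu C_j P_{l+1}^{(j)}$, which after dividing by $N\lambda$ gives the left-hand side of \eqref{eq:coupling_x}.

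\textbf{Equation \eqref{eq:coupling_xc}.} For Scheme~2 the preference is by processing rate $C_j/q$ with ties broken by higher capacity, so $(j,k)$ with $j\ge 2$ beats $(1,l-1)$ iff $C_j/k\ge C_1/(l-1)$, i.e.\ $k\le\lfloor C_j(l-1)/C_1\rfloor$. The strictly-worse set therefore becomes $S_l=\{(1,k):k\ge l\}\cup\bigcup_{j\ge 2}\{(j,k):k\ge\ksub{l-1}{1}{j}\}$, and the same two calculations produce \eqref{eq:coupling_xc}. I expect the main technical obstacle to be the combinatorial verification that the floor/ceiling indices $\ksub{\cdot}{\cdot}{\cdot},\ktil{\cdot}{\cdot}{\cdot}$ appearing in \eqref{eq:tailhet_xc} align exactly with the boundary of $S_l$ under the stated tie-breaking rule; this is a careful but mechanical unpacking of their definitions.

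\textbf{Doubly exponential decay.} Set $D=\sum_j d_j\ge 2$, $\Delta_*=\max_j\Delta_j$, and $M_l=\max_j P_l^{(j)}$. Dropping all but the $j$th term on the left of \eqref{eq:coupling_x} gives $P_{l+1}^{(j)}\le\Delta_j M_l^D$, hence $M_{l+1}\le\Delta_* M_l^D$. Since $M_l\downarrow 0$ by hypothesis, choose $L$ with $\Delta_* M_L^{D-1}<1$; iterating the recursion on $a_l:=-\log M_l$ then produces $M_l\le\kappa\omega^{D^l}$ for $l\ge L$ with $\omega<1$, i.e.\ doubly exponential decay, and hence summability. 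For Scheme~2, repeating the flow argument with reference state $(j_0,l-1)$ in place of $(1,l-1)$ yields, for each $j_0\in\mathcal{J}$,
\[
\tilde P_{l+1}^{(j_0)}\le\Delta_{j_0}\bigl(\tilde P_l^{(j_0)}\bigr)^{d_{j_0}}\prod_{j>j_0}\bigl(\tilde P_{\ksub{l-1}{j_0}{j}}^{(j)}\bigr)^{d_j},
\]
after bounding the $j<j_0$ factors by $1$. For $j_0=M$ the product is empty, so $\tilde P^{(M)}$ decays doubly exponentially by the same contraction argument; a finite downward induction from $j_0=M$ to $j_0=1$ then propagates this decay to every type, and summability follows.
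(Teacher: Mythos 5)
Your derivation of \eqref{eq:coupling_x} and of the doubly exponential decay for Scheme~1 is correct and is essentially the paper's own argument in probabilistic dress: the identity ``rate into $S_l$ equals the probability that all sampled servers lie in $S_l$'' is exactly the double telescoping of the equilibrium equations \eqref{eq:tailhet_x}, first over the type index $j$ and then over the levels $k\ge l$, with the hypothesis $P_k^{(j)}\downarrow 0$ killing the boundary term; and your contraction $M_{l+1}\le \Delta_* M_l^{D}$ with $D=\sum_j d_j\ge 2$ is the paper's $\hat P_{k+n}\le \delta^{(d^n-1)/(d-1)}\hat P_k$. Deferring the floor/ceiling bookkeeping in \eqref{eq:coupling_xc} is defensible, since the paper itself disposes of the Scheme~2 identity with ``follows along the same line of arguments.''

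The one place you genuinely depart from the paper is the decay argument for Scheme~2, and that is where there is a gap. Your downward induction starts at $j_0=M$ with $\tilde P^{(M)}_{l+1}\le \Delta_M\bigl(\tilde P^{(M)}_l\bigr)^{d_M}$, obtained by discarding every factor with $j<j_0$. When $d_M=1$ this reads $\tilde P^{(M)}_{l+1}\le \Delta_M\,\tilde P^{(M)}_l$, which is not a contraction unless $\Delta_M<1$ --- and the stability condition only gives $\sum_j 1/\Delta_j>1$, not that any individual $\Delta_j<1$. So your base case requires the extra assumption $d_M\ge 2$, whereas the paper (and your own Scheme~1 argument) only ever needs the \emph{total} sample size $d=\sum_j d_j\ge 2$. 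The repair is to not throw away the slower-type factors: retaining the full product over all types, the total exponent is $d\ge 2$ and every shifted index $\ktil{l-1}{j_0}{i}$, $\ksub{l-1}{j_0}{i}$ tends to infinity with $l$, so a single joint contraction on an appropriate maximum over types (the analogue of $\hat P_k$ in Scheme~1, with type-dependent index shifts) closes the argument without any per-type induction. Your induction \emph{step} for $j_0<M$ is fine --- once the faster types are known to decay doubly exponentially, even $d_{j_0}=1$ suffices --- it is only the base case that breaks.
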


\begin{proof} 
We prove the proposition for $\mf{P}$. The proof for $\tilde{\mf{P}}$
follows along the same line of arguments. 
For a fix $j$ we add~\eqref{eq:tailhet_x} for all $k \geq l$ and use 
$\lim_{k \to \infty} \mv{P}{k}{j}=0$ to obtain

\begin{equation}
\mv{P}{l+1}{j}= \Delta_j \sum_{k\geq l} \sbrac{\prod_{i=1}^{j}\brac{\mv{P}{k}{i}}^{d_i}\prod_{i=j+1}^{M}\brac{\mv{P}{k+1}{i}}^{d_i}-\prod_{i=1}^{j-1}\brac{\mv{P}{k}{i}}^{d_i}\prod_{i=j}^{M}\brac{\mv{P}{k+1}{i}}^{d_i}}
\end{equation}
Now, multiplying both sides of the above equation by $\frac{1}{\Delta_j}$
and adding over all $j \in \cal{J}$ and using $\lim_{k \to \infty} \mv{P}{k}{j}=0$
yields~\eqref{eq:coupling_x}.
From~\eqref{eq:coupling_x} we obtain
$\frac{\mv{P}{k+1}{j}}{\Delta_j} \leq \prod_{j\in \cal{J}} \brac{\mv{P}{k}{j}}^{d_j}
\leq \brac{\hat{P}_k}^d$,
%
where $\hat{P}_k=\max_{1 \leq j \leq M} \mv{P}{k}{j}$ and $d=\sum_{j\in \cal{J}} d_j$. 
Thus, we have $\mv{P}{k+1}{j} \leq \delta \hat{P}_k$, 
where $\delta=\brac{\hat{P}_k}^{d-1} \max_{1 \leq j \leq M} (\Delta_j)$. 
Since by hypothesis, for each $j$, $\mv{P}{k}{j} \rightarrow 0$ 
as $k \rightarrow \infty$, one can choose $k$ sufficiently large such that $\delta < 1$. Hence, we have $\brac{\max_{1 \leq j \leq M} \mv{P}{k+1}{j}} \leq \delta \hat{P}_k$.
Similarly we have, $\brac{\max_{1 \leq j \leq M} \mv{P}{k+n}{j}} \leq \delta^{\frac{{d^n-1}}{d-1}} \hat{P}_k$. This proves that the sequence $\cbrac{\mv{P}{k}{j}, k \in \mb{Z}_+}$
decreases doubly exponentially for each $j$.
\end{proof}

The following proposition
guarantees that there exists  equilibrium points of  
systems~\eqn{eq:diff1_x}-\eqn{eq:diff2_x}
and~\eqn{eq:diff1_xc}-\eqn{eq:diff2_xc} in $\U^M$ .

\begin{theorem}
\label{thm:existence_fixedp}
Under condition~\eqref{eq:maximal_stability}, there exists an 
equilibrium
point $\mf{P}$ of the system~\eqref{eq:diff1_x}-\eqref{eq:diff2_x} 
and $\tilde{\mf{P}}$ of the system~\eqref{eq:diff1_xc}-\eqref{eq:diff2_xc}  
in the space $\U^M$.
\end{theorem}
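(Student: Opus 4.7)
The plan is to construct an equilibrium as a limit of the stationary distributions $\pi_N$ of the $N$-particle processes, then invoke Proposition~\ref{thm:tail_het_properties} to place the limit in $\mathcal{U}^M$. Under~\eqref{eq:maximal_stability}, Theorem~\ref{thm:stability} gives, for each $N$, such a $\pi_N$ supported on $\prod_{j\in\mathcal{J}}\bar{\mathcal{U}}^{(j)}_N\subset\bar{\mathcal{U}}^M$. Since $\bar{\mathcal{U}}^M$ is compact in the metric~\eqref{eq:norm}, the family $\{\pi_N\}$ is automatically tight, so, passing to a subsequence, I may assume $\pi_N\Rightarrow\pi_\infty$ on $\bar{\mathcal{U}}^M$.

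First I would establish invariance of $\pi_\infty$ under the deterministic mean-field semigroup $\{\mathbf{T}(t)\}_{t\geq 0}$. For any $f\in C_b(\bar{\mathcal{U}}^M)$, the map $\mathbf{T}(t)f=f\circ\mathbf{u}(t,\cdot)$ is continuous because the vector field $\mathbf{l}$ in~\eqref{eq:diff4_x} is a continuous polynomial in its coordinates, so $\mathbf{u}(t,\cdot)$ is continuous in the initial condition. Combining the stationarity identity $\int f\,d\pi_N=\int\mathbf{T}_N(t)f\,d\pi_N$ with the uniform semigroup convergence of Proposition~\ref{thm:convergence_semigroup} and the weak convergence $\pi_N\Rightarrow\pi_\infty$ then yields $\int f\,d\pi_\infty=\int\mathbf{T}(t)f\,d\pi_\infty$. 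To land in $\mathcal{U}^M$ rather than just $\bar{\mathcal{U}}^M$, I would use the coupling in Remark~\ref{rmk:coupling_proof}: $\mathbf{x}_N$ is stochastically dominated by an auxiliary system of $M$ independent collections of $M/M/1$ processor-sharing servers whose per-server utilization $\rho_j=\lambda/(\mu\sum_i\gamma_iC_i)$ is strictly less than one, so
\[ \sup_N \mathbb{E}_{\pi_N}\Bigl[\sum_{k\geq 1}\mathbf{x}^{(j)}_{N,k}\Bigr]\leq \frac{\rho_j}{1-\rho_j}<\infty, \quad j\in\mathcal{J}, \]
and Fatou's lemma places $\pi_\infty$ on $\mathcal{U}^M$.

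The main obstacle is upgrading invariance of $\pi_\infty$ under the deterministic flow to the assertion that it is supported on fixed points, since a generic invariant measure of a semiflow need not concentrate on equilibria. My route would be to show that $\pi_\infty=\delta_{\mathbf{P}}$ is a Dirac mass, which is essentially propagation of chaos at equilibrium: the $N\gamma_j$ type-$j$ servers are exchangeable under $\pi_N$, and a direct second-moment computation on $\mathrm{Var}_{\pi_N}\bigl(\mathbf{x}^{(j)}_{N,k}\bigr)$, combined with the uniform summability bound above, should yield that this variance vanishes as $N\to\infty$, forcing $\pi_\infty$ to be a point mass. Invariance then reads $\mathbf{u}(t,\mathbf{P})=\mathbf{P}$ for all $t\geq 0$, equivalently $\mathbf{l}(\mathbf{P})=\mathbf{0}$, so $\mathbf{P}$ solves~\eqref{eq:tailhet_x}, and Proposition~\ref{thm:tail_het_properties} independently confirms $\mathbf{P}\in\mathcal{U}^M$ via doubly exponential tail decay. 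The argument for $\tilde{\mathbf{P}}$ under Scheme~2 is identical, with $\mathbf{l}$ replaced by $\tilde{\mathbf{l}}$ and~\eqref{eq:tailhet_x} by~\eqref{eq:tailhet_xc}.
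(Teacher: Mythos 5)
Your route is genuinely different from the paper's: the paper constructs $\mf{P}$ explicitly (for $M=2$) as a one-parameter family of sequences $\mv{P}{l}{j}(\alpha)$ satisfying the recursions~\eqref{eq:rec1}--\eqref{eq:rec2}, and then runs a nested-interval/Cantor-intersection argument to find an $\alpha$ making both sequences nonnegative and nonincreasing, finishing with the stability condition to force the limits $\xi_1=\xi_2=0$. You instead try to extract an equilibrium as a weak limit of the stationary laws $\pi_N$. Most of your scaffolding is sound and in fact mirrors the paper's proof of Theorem~\ref{thm:stationary}: tightness from compactness of $\Ub^M$, invariance of $\pi_\infty$ under the limiting semiflow via Proposition~\ref{thm:convergence_semigroup}, and concentration of $\pi_\infty$ on $\U^M$ via the coupling bound and lower semicontinuity of $\mf{g}\mapsto\sum_n \mv{g}{n}{j}$.

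The genuine gap is the step you yourself flag as the main obstacle: passing from ``$\pi_\infty$ is an invariant measure of the semiflow supported on $\U^M$'' to ``$\pi_\infty=\delta_{\mf{P}}$ with $\mf{P}$ a fixed point.'' Your proposed fix --- a direct computation showing $\mathrm{Var}_{\pi_N}(\mv{x}{N,k}{j})\to 0$ --- is asserted, not carried out, and it is not clear it can be carried out noncircularly. Vanishing of that variance is exactly asymptotic decorrelation of servers under $\pi_N$, i.e.\ propagation of chaos at equilibrium; in the paper that is a \emph{consequence} of $\pi_N\Rightarrow\delta_{\mf{P}}$ (Theorem~\ref{thm:stationary}), which in turn rests on global asymptotic stability (Theorem~\ref{thm:uniqueness_fixedp}), which presupposes the existence result you are trying to prove. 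Applying the stationarity relation $\mb{E}_{\pi_N}[\mf{A}_N f]=0$ to quadratic $f$ does not close: the nonlinear drift couples second moments to higher-order correlations, so no self-contained bound falls out. Without the Dirac-mass claim, invariance alone is insufficient --- an invariant measure of a semiflow can sit on a periodic orbit or a more complicated recurrent set, and nothing you have written rules this out. If you want to salvage the limiting-measure strategy, you would need an additional structural input, e.g.\ the monotonicity of the flow (Lemma~\ref{thm:monotonicity}) together with a fixed-point theorem for monotone semiflows on order-bounded invariant sets, or a Schauder--Tychonoff argument applied to the time-$t$ maps on a convex compact invariant subset of $\U^M$; as it stands, the argument does not go through.
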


\begin{proof}
The proof is given in Appendix~\ref{proof:existence_fixedp}.
\end{proof}

The next theorem shows that $\mf{P}$ and $\tilde{\mf{P}}$
are the unique globally asymptotically stable equilibrium points
of the systems~\eqref{eq:diff1_x}-\eqref{eq:diff2_x} and~\eqref{eq:diff1_xc}-\eqref{eq:diff2_xc}
in the space $\U^M$.

\begin{theorem}
\label{thm:uniqueness_fixedp}
Under condition~\eqref{eq:maximal_stability},

\begin{equation}
\lim_{t \rightarrow \infty} \mf{u}(t, \mf{g})=\mf{P}\in \U^M \text{ for all } \mf{g} \in \U^M,
\label{eq:exp_conv1}
\end{equation}
for Scheme~1 and

\begin{equation}
\lim_{t \rightarrow \infty} \mf{u}(t, \mf{g})=\tilde{\mf{P}}\in \U^M \text{ for all } \mf{g} \in \U^M,
\label{eq:exp_conv2}
\end{equation}
for Scheme~2. Hence, $\mf{P}$ and $\tilde{\mf{P}}$ are globally asymptotically stable fixed points
of systems~\eqref{eq:diff1_x}-\eqref{eq:diff2_x} and~\eqref{eq:diff1_xc}-\eqref{eq:diff2_xc},
respectively. Furthermore, $\mf{P}$ and $\tilde{\mf{P}}$ are the only equilibrium points
of the above systems in the space $\U^M$.
\end{theorem}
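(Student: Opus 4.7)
The plan is to combine a monotonicity property of the mean-field semigroup with a first-moment estimate and a sandwich argument, so that uniqueness of the equilibrium in $\U^M$ and global asymptotic stability are obtained together. I will sketch the argument for Scheme~1; Scheme~2 is handled analogously, with the index shifts $\ksub{\cdot}{\cdot}{\cdot}$ and $\ktil{\cdot}{\cdot}{\cdot}$ appearing in~\eqref{eq:diff4_xc} propagated through each step.

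First I would establish that the mean-field semigroup is monotone with respect to the componentwise partial order on $\Ub^M$: if $\mf{g}, \mf{h} \in \Ub^M$ satisfy $\mv{g}{k}{j} \leq \mv{h}{k}{j}$ for every $j \in \mathcal{J}$ and $k \in \mathbb{Z}_+$, then $\mv{u}{k}{j}(t, \mf{g}) \leq \mv{u}{k}{j}(t, \mf{h})$ for all $t \geq 0$. I would prove this at the $N$-particle level by constructing a coupling driven by common Poisson arrival and service clocks and a common random sample at each arrival, then pass to the mean-field limit via Theorem~\ref{thm:meanfield_conv}. The key point is that the selection rules~\eqref{eq:first_level} in both schemes are order-preserving, so routing and completions respect the dominance between the two coupled occupancy vectors.

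Second, I would derive a total-mass estimate for trajectories starting in $\U^M$. Summing the ODE~\eqref{eq:diff4_x} over $k \geq 1$, the service terms telescope and the routing terms collapse to a single rate $\lambda$, yielding
\begin{equation*}
\frac{d}{dt} \sum_{j \in \mathcal{J}} \gamma_j \sum_{k \geq 1} \mv{u}{k}{j}(t, \mf{g}) \;=\; \lambda \;-\; \mu \sum_{j \in \mathcal{J}} \gamma_j C_j\, \mv{u}{1}{j}(t, \mf{g}).
\end{equation*}
Since the right-hand side lies in $[\lambda - \mu \sum_j \gamma_j C_j,\, \lambda]$ and the stability condition~\eqref{eq:maximal_stability} makes the lower bound negative, the total mass stays uniformly bounded in $t$, so the orbit $\{\mf{u}(t, \mf{g}) : t \geq 0\}$ is relatively compact in $\U^M$. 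Continuous dependence on initial data, supplied by Proposition~\ref{thm:uniqueness_sol}, then shows that every $\omega$-limit point of this orbit is itself an equilibrium lying in $\U^M$.

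Finally, I would use a sandwich argument to conclude both uniqueness and global convergence. Given two equilibria $\mf{P}^{(1)}, \mf{P}^{(2)} \in \U^M$, their componentwise maximum $\mf{g}^\ast := \max(\mf{P}^{(1)}, \mf{P}^{(2)})$ and minimum $\mf{g}_\ast := \min(\mf{P}^{(1)}, \mf{P}^{(2)})$ still lie in $\U^M$. By Step~1, $\mf{u}(t, \mf{g}^\ast)$ dominates both equilibria and $\mf{u}(t, \mf{g}_\ast)$ is dominated by both; by Step~2, their orbits admit $\omega$-limit equilibria $\bar{\mf{Q}}, \underline{\mf{Q}} \in \U^M$ with $\underline{\mf{Q}} \leq \mf{P}^{(1)}, \mf{P}^{(2)} \leq \bar{\mf{Q}}$. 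Plugging $\bar{\mf{Q}}$ and $\underline{\mf{Q}}$ into the coupling identity~\eqref{eq:coupling_x} (respectively~\eqref{eq:coupling_xc}) and using the doubly exponential decay from Proposition~\ref{thm:tail_het_properties} to control the tails, I would show that any strict componentwise gap $\mv{\bar Q}{k}{j} > \mv{\underline Q}{k}{j}$ at some $(j, k)$ propagates inductively down to $k = 0$ and contradicts the boundary condition $\mv{u}{0}{j} \equiv 1$. Hence $\bar{\mf{Q}} = \underline{\mf{Q}}$, all equilibria in $\U^M$ coincide with this common value $\mf{P}$, and for arbitrary $\mf{g} \in \U^M$ the sandwich $\min(\mf{g}, \mf{P}) \leq \mf{g} \leq \max(\mf{g}, \mf{P})$ combined with monotonicity forces $\mf{u}(t, \mf{g}) \to \mf{P}$. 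The most delicate step will be the uniqueness propagation: the coupled fixed-point equations~\eqref{eq:coupling_x} and~\eqref{eq:coupling_xc} admit no closed form, so the argument must rely on the qualitative doubly-exponential tail bound rather than on an algebraic identity, and Scheme~2's shifted indices $\ksub{\cdot}{\cdot}{\cdot}, \ktil{\cdot}{\cdot}{\cdot}$ make the bookkeeping genuinely heavier than in Scheme~1.
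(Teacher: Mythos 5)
Your skeleton---a monotonicity lemma for the semigroup, a first-moment identity obtained by telescoping \eqref{eq:diff4_x}, and a sandwich between componentwise $\min$ and $\max$---is exactly the paper's architecture (the paper's Lemma on monotonicity is imported from Lemma 3.3 of the Martin--Suhov reference, and your mass identity is precisely its equation \eqref{eq:v1}). However, two of your steps have genuine gaps. First, the boundedness claim in Step 2 is a non sequitur: knowing that $\frac{d}{dt}\sum_j \gamma_j \sum_{k\ge 1}\mv{u}{k}{j}$ lies in $[\lambda-\mu\sum_j\gamma_jC_j,\ \lambda]$ with a negative lower endpoint does not bound the trajectory; a function with derivative confined to $[-1,1]$ can still grow linearly, and nothing prevents $\mv{u}{1}{j}(t)$ from staying small while mass accumulates at high occupancies. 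The paper only obtains uniform boundedness of $v_1(t,\mf{g})$ \emph{after} sandwiching $\mf{g}$ between $\min(\mf{g},\mf{P})$ and $\max(\mf{g},\mf{P})$, where $\mf{P}$ is the equilibrium already constructed in Theorem~\ref{thm:existence_fixedp}: for $\mf{g}\ge\mf{P}$ monotonicity gives $\sum_j \mv{u}{1}{j}(t,\mf{g})/\Delta_j \ge \sum_j \mv{P}{1}{j}/\Delta_j = 1$ (the coupling identity \eqref{eq:coupling_x} at $l=0$), hence $\dot v_1\le 0$; for $\mf{g}\le\mf{P}$ the trajectory is dominated by $\mf{P}$ outright. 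Your version tries to get boundedness for arbitrary data without invoking $\mf{P}$, and that attempt fails as stated.

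Second, and more seriously, the step ``every $\omega$-limit point of the relatively compact orbit is an equilibrium'' does not follow from continuous dependence on initial data; relative compactness plus continuity of the flow only gives a nonempty, compact, invariant $\omega$-limit set, which could a priori be a periodic orbit. You would need either a LaSalle/Lyapunov argument or a proof that the trajectory from $\max(\mf{P}^{(1)},\mf{P}^{(2)})$ is itself monotone in $t$ (which requires that point to be a super-equilibrium, which you have not verified). The paper's actual convergence mechanism---which your proposal is missing---is the inductive integral estimate $\int_0^\infty\bigl(\mv{u}{n}{j}(t,\mf{g})-\mv{P}{n}{j}\bigr)\,dt<\infty$ for $\mf{g}\ge\mf{P}$, proved by induction on $n$ from the telescoped equations \eqref{eq:vn} together with \eqref{eq:coupling_x}; since the integrand is nonnegative by monotonicity and $\dot{\mf{u}}$ is bounded, finiteness of the integral forces pointwise convergence to $\mf{P}$. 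Once global convergence to the specific $\mf{P}$ of Theorem~\ref{thm:existence_fixedp} is established, uniqueness of the equilibrium in $\U^M$ is a free corollary (any other equilibrium is a constant trajectory converging to $\mf{P}$), so your separate ``propagate a strict gap down to $k=0$'' argument---which is left vague and whose mechanism is unclear---is not needed.
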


\begin{proof}
The proof for Scheme~1 is given in Appendix~\ref{proof:uniqueness_fixedp}.
For Scheme~2, the theorem can be similarly proved.
\end{proof}

We now show that, under~\eqref{eq:maximal_stability},
the stationary distribution
of the process $\mf{x}_N$ converges
weakly to the Dirac measure concentrated at
the unique equilibrium point of the mean field.
Let $\pi_N$ denote the stationary distribution
of the process $\mf{x}_N$. Clearly, $\pi_N$
exists and is unique under~\eqref{eq:maximal_stability}.
Further, for each fixed $N$, $\mf{x}_N(t) \Rightarrow \mf{x}_N(\infty)$
as $t \to \infty$, where $\mf{x}_N(\infty)$ is a random variable
distributed as $\pi_N$.

\begin{theorem}
\label{thm:stationary}
Under condition~\eqref{eq:maximal_stability},
we have

\begin{equation}
\pi_N \Rightarrow \delta_{\mf{P}},
\end{equation}
for Scheme~1 and

\begin{equation}
\pi_N \Rightarrow \delta_{\tilde{\mf{P}}},
\end{equation} 
for Scheme~2. 
\end{theorem}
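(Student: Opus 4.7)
The plan is to show that any weak subsequential limit $\pi^*$ of $\{\pi_N\}$ on the compact space $\Ub^M$ coincides with the Dirac measure at the unique mean-field fixed point, which forces the full sequence to converge. I focus on Scheme~1; the argument for Scheme~2 is identical with $\tilde{\mf{P}}$ in place of $\mf{P}$.

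Since $\Ub^M$ is compact (as noted after~\eqref{eq:norm}), the family $\{\pi_N\}$ is automatically tight and any subsequence admits a weakly convergent sub-subsequence. The first step is to show that every such limit $\pi^*$ is invariant under the deterministic mean-field flow $\mf{u}(t,\cdot)$. Stationarity of $\pi_N$ gives $\inprod{f}{\pi_N}=\inprod{\mf{T}_N(t)f}{\pi_N}$ for every continuous $f:\Ub^M\to\mb{R}$; Proposition~\ref{thm:convergence_semigroup} lets me pass to the limit on both sides to conclude $\inprod{f}{\pi^*}=\inprod{\mf{T}(t)f}{\pi^*}$, using that $\mf{T}(t)f=f\circ\mf{u}(t,\cdot)$ is continuous by the continuous dependence on the initial data afforded by Proposition~\ref{thm:uniqueness_sol}.

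The step I expect to be the main technical obstacle is showing that $\pi^*$ is supported on $\U^M$, the summable cone on which Theorem~\ref{thm:uniqueness_fixedp} provides global asymptotic stability. I would exploit the coupling of Remark~\ref{rmk:coupling_proof}: the total number of jobs under Scheme~1 is stochastically dominated by that under the modified scheme, in which each server operates in isolation as an M/M/1 queue with load $\rho=\lambda/(\mu\sum_i\gamma_iC_i)<1$. Since the expected queue length at each individual server in the modified scheme is $\rho/(1-\rho)$, a symmetry computation produces the uniform-in-$N$ bound
\begin{equation*}
\sup_N \int_{\Ub^M} \sum_{j\in\cal{J}}\gamma_j\sum_{k\geq 1}\mv{u}{k}{j}\,d\pi_N(\mf{u})\leq \frac{\rho}{1-\rho}.
\end{equation*}
For each finite $K$ the truncated sum $\sum_{j}\gamma_j\sum_{k=1}^{K}\mv{u}{k}{j}$ is continuous and bounded on $\Ub^M$, so weak convergence along the subsequence preserves the bound; monotone convergence in $K$ then shows that $\mf{u}\mapsto\sum_k \mv{u}{k}{j}$ is $\pi^*$-integrable for each $j$, and hence $\pi^*(\U^M)=1$.

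Finally, for $\pi^*$-almost every initial point $\mf{u}_0$, Theorem~\ref{thm:uniqueness_fixedp} gives $\mf{u}(t,\mf{u}_0)\to\mf{P}$ as $t\to\infty$. Invariance of $\pi^*$ under $\mf{T}(t)$ combined with bounded convergence applied to any bounded continuous $f:\Ub^M\to\mb{R}$ then yields $\inprod{f}{\pi^*}=\lim_{t\to\infty}\int f(\mf{u}(t,\mf{u}_0))\,d\pi^*(\mf{u}_0)=f(\mf{P})$, forcing $\pi^*=\delta_{\mf{P}}$. Since every subsequential limit equals $\delta_{\mf{P}}$, the full sequence $\pi_N$ converges weakly to $\delta_{\mf{P}}$, completing the proof.
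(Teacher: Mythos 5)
Your proposal is correct and follows essentially the same route as the paper's proof: compactness of $\Ub^M$ to extract subsequential limits, invariance of any limit under the mean-field flow via the semigroup convergence, concentration on $\U^M$ via the uniform moment bound from the coupling of Remark~\ref{rmk:coupling_proof}, and then the global asymptotic stability of Theorem~\ref{thm:uniqueness_fixedp} to identify the limit as $\delta_{\mf{P}}$. You are in fact somewhat more careful than the paper in two spots the paper glosses over --- the truncation/monotone-convergence argument needed to pass the first-moment bound to the weak limit (since $\mf{u}\mapsto\sum_k \mv{u}{k}{j}$ is not bounded continuous), and the final bounded-convergence step turning invariance plus global attraction into $\pi^*=\delta_{\mf{P}}$.
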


\begin{proof}
We prove the theorem for Scheme~1. The proof for Scheme~2
follows similarly.

Note that since the space $\Ub^M$ is compact, so is the space
of probability measures on $\Ub^M$. Therefore, the sequence
of probability measures $\cbrac{\pi_N}_N$ has limit points.
Thus, in order to prove the theorem we need to show that all 
limit points coincide with $\delta_{\mf{P}}$.

Due to Theorem~\ref{thm:meanfield_conv}, any limit point
$\pi$ of the sequence $\pi_N$ must be an invariant distribution
of the maps $\mf{g} \mapsto \mf{u}(t,\mf{g})$. Hence,
by uniqueness proved in Theorem~\ref{thm:uniqueness_fixedp},
it is sufficient to prove that $\pi$ is concentrated on $\U^M$.
To prove that $\pi$ is concentrated on $\U^M$
it is sufficient to show that $\mb{E}_{\pi}\sbrac{\sum_{n\geq 1} \mv{g}{n}{j}} < \infty$
for all $j \in \cal{J}$. The coupling described in Remark~\ref{rmk:coupling_proof} implies
that $\mb{E}_{\pi_N}\sbrac{{\sum_{n\geq 1} \mv{g}{n}{j}}} \leq \frac{\rho}{1-\rho}$,
where $\rho=\frac{\lambda}{\mu \sum_{j \in \cal{J}} \gamma_j C_j} < 1$.
Hence, $\mb{E}_{\pi_N}\sbrac{{\sum_{n\geq 1} \mv{g}{n}{j}}} \to \mb{E}_{\pi}\sbrac{\sum_{n\geq 1} \mv{g}{n}{j}} \leq \frac{\rho}{1-\rho}$. This completes the proof.
\end{proof}

We have so far established that the interchange property
indicated in Figure~2 holds.
\begin{figure}
\begin{center}
\begin{tikzpicture}
\label{commdiagram}
\centering
  \matrix (m) [matrix of math nodes,row sep=3cm,column sep=3cm,minimum width=2em] {
     \mf{x}_N(t) & \mf{u}(t) \\
     \mf{x}_N(\infty) & \mf{P} \\};
  \path[-stealth]
    (m-1-1) edge node [left] {$t \to \infty$}node[below,rotate=90] {Theorem~\ref{thm:stability}}(m-2-1)
            edge  node [above] {$N \to \infty$}node [below]{Theorem~\ref{thm:meanfield_conv}} (m-1-2)
    (m-2-1.east|-m-2-2) edge node [below] {$N \to \infty$}node[above]{Theorem~\ref{thm:stationary}}  (m-2-2)
    (m-1-2) edge node [right] {$t \to \infty$} node [above,rotate=90]{Theorem~\ref{thm:uniqueness_fixedp}}(m-2-2);
\end{tikzpicture}
\caption{Commutativity of limits}
\end{center}
\end{figure}
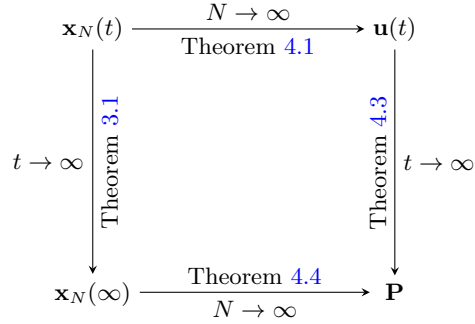
Note that the convergences indicated in the figure are in distribution.

\subsection{Propagation of chaos}

In this subsection, we focus on the 
occupancies of a given finite set of servers
as $N \to \infty$. We show that as the system size grows
the server occupancies become independent of each other.
Such independence holds at any finite time and also
at the equilibrium, provided that the initial server occupancies
satisfy certain assumptions.
This is formally known as the {\em propagation of chaos}~\cite{Graham,Sznitman}
or {\em asymptotic independence property}~\cite{Bramson_asymp_indep, Bramson_randomized_load_balancing} 
in the literature.

To formally state the results we introduce the following notations.
Let $\mv{q}{N}{j,k}(t)$, for $j \in\cal{J}$ and $k \in \cbrac{1,2,\ldots,N\gamma_j}$, 
denote the occupancy of the $k^{\textrm{th}}$ server of type $j$ at time $t \geq 0$.
By  $\mv{q}{N}{j,k}(\infty)$ we denote the occupancy of the 
$k^{\textrm{th}}$ server of type $j$ in equilibrium. Further,
let $\mv{\chi}{N,n}{j}(t)$, for $j \in\cal{J}$ and $n \in \mb{Z}_+$,
denote the fraction of type $j$ servers having occupancy $n$ at time
$t \geq 0$. Define the process $\bs{\chi}_N(t)=\cbrac{\mv{\chi}{N,n}{j}(t), j \in\mathcal{J}, n\in \mb{Z}_+}$.
Clearly, $\chi_N^{(j)}(t)=\cbrac{\mv{\chi}{N,n}{j}(t), n\in \mb{Z}_+}$
denotes the empirical distribution of occupancies of type $j$ servers and 
for each $n,j$, we have $\mv{\chi}{N,n}{j}(t)=\mv{x}{N,n}{j}(t)-\mv{x}{N,(n+1)}{j}(t)$.
By ${\chi}_N^{(j)}(\infty)$ we will denote the empirical distribution 
occupancies for type $j$ servers in equilibrium. Let the process 
$\mf{Q}(t)=\cbrac{\mv{Q}{n}{j}(t), j \in\mathcal{J}, n\in \mb{Z}_+}$
be defined as $\mv{Q}{n}{j}(t)=\mv{u}{n}{j}(t)-\mv{u}{n+1}{j}(t)$, for $t \in [0,\infty]$.
Further, we denote by $Q^{(j)}(t)$ the distribution on $\mb{Z}_+$
given by $Q^{(j)}(t)=\cbrac{\mv{Q}{n}{j}, n \in \mb{Z}_+}$.
We also define the following notion
of exchangeable random variables.

\begin{definition}
Let $\cbrac{\mv{q}{N}{j,k}, 1 \leq k \leq N \gamma_j, 1 \leq j \leq M}$ 
denote a collection of $N$ random variables among which $N \gamma_j$
belong to a particular class $j$ and are indexed by $k$, where $1 \leq k \leq N\gamma_j$.
The collection is called intra-class exchangeable if the joint law
of the collection is invariant under permutation of indices, $1 \leq k \leq N\gamma_j$,
of random variables belonging to the same class. 
\end{definition}

\begin{proposition}
\label{thm:chaos}
For the model considered in this paper, for both schemes,  $\cbrac{\mv{q}{N}{j,k}(0), 1 \leq k \leq N \gamma_j, 1 \leq j \leq M}$
is intra-class exchangeable and if $\mf{x}_N(0) \Rightarrow \mf{g}\in \U^M$ as
$N \to \infty$, then the following holds

\begin{enumerate}[(i)]
\item  For each fix $k$ and $t \in [0, \infty]$,
$\mv{q}{N}{j,k}(t) \Rightarrow U^{(j)}(t)$ as $N \to \infty$,
where $U^{(j)}(t)$ is a random variable with distribution
$Q^{(j)}(t)$.

\item  Fix positive integers $r_1,r_2, \ldots, r_M$.
For each $t \in [0, \infty]$,

\begin{equation*}
\cbrac{\mv{q}{N}{j,k}, 1 \leq k \leq r_j, 1 \leq j \leq M} \Rightarrow \cbrac{U^{(j,k)}(t),1 \leq k \leq r_j, 1 \leq j \leq M},
\end{equation*}
as $N \to \infty$, where $U^{(j,k)}(t)$, $1 \leq k \leq r_j, 1 \leq j \leq M$, are independent random variables
with $U^{(j,k)}(t)$ having distribution  $Q^{(j)}(t)$ for all $1 \leq k \leq r_j$. 
\end{enumerate}
\end{proposition}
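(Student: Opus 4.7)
The plan is to derive propagation of chaos from the convergence of the per-type empirical occupancy measures to deterministic limits, via an adaptation of Sznitman's classical equivalence between chaoticity and empirical-measure convergence to the intra-class exchangeable setting. First, I would verify that intra-class exchangeability at $t=0$ is preserved for all $t\in[0,\infty]$. Both Scheme~1 and Scheme~2 sample servers uniformly within each type and base their selections only on the occupancies of the sampled servers (and the fixed type capacities $C_j$), not on server indices. Hence the dynamics commute with any permutation of indices within a fixed type, so intra-class exchangeability at time $0$ is propagated to every finite $t$ and inherited by the unique stationary distribution guaranteed by Theorem~\ref{thm:stability}.

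Next, I would upgrade the convergence of $\mf{x}_N(t)$ to the joint convergence of the per-type empirical occupancy measures $\chi_N^{(j)}(t)$. From the identity $\mv{\chi}{N,n}{j}(t)=\mv{x}{N,n}{j}(t)-\mv{x}{N,n+1}{j}(t)$, Theorem~\ref{thm:meanfield_conv} together with Remark~\ref{rmk:weaker:implications} gives $(\chi_N^{(j)}(t))_{j\in\cal{J}}\Rightarrow (Q^{(j)}(t))_{j\in\cal{J}}$ for each finite $t$, and Theorem~\ref{thm:stationary} gives the same convergence at $t=\infty$. Because each limit is deterministic, this convergence in fact holds in probability, jointly over $j$.

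To obtain the factorization of finite-dimensional laws, I would fix test functions $f_{j,k}=\mathbf{1}_{\{n_{j,k}\}}$ for $n_{j,k}\in\mb{Z}_+$ (which suffice since the state space is discrete) and use intra-class exchangeability to write
\begin{equation*}
\expect{\prod_{j,k} f_{j,k}\bigl(\mv{q}{N,k}{j}(t)\bigr)}=\expect{\prod_j\frac{(N\gamma_j-r_j)!}{(N\gamma_j)!}\!\!\sum_{\substack{i^{(j)}_1,\ldots,i^{(j)}_{r_j}\\ \text{distinct}}}\prod_{k=1}^{r_j}f_{j,k}\bigl(\mv{q}{N,i^{(j)}_k}{j}(t)\bigr)}.
\end{equation*}
Replacing the restricted sum by the unrestricted sum introduces an error of order $O(1/N)$ by boundedness, so the right-hand side equals $\expect{\prod_{j,k}\inprod{f_{j,k}}{\chi_N^{(j)}(t)}}+O(1/N)$. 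By bounded convergence and the previous step this tends to $\prod_{j,k}\inprod{f_{j,k}}{Q^{(j)}(t)}=\expect{\prod_{j,k}f_{j,k}(U^{(j,k)}(t))}$ with the $U^{(j,k)}(t)$ independent and $U^{(j,k)}(t)\sim Q^{(j)}(t)$. This establishes (ii); part (i) follows by choosing $r_j=1$ and $r_i=0$ for $i\neq j$.

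The principal obstacle is the case $t=\infty$: one must carefully pass from the weak convergence $\mf{x}_N(\infty)\Rightarrow\mf{P}$ (respectively $\tilde{\mf{P}}$) given by Theorem~\ref{thm:stationary} to joint convergence of the coordinate occupancies $\chi_N^{(j)}(\infty)$, using continuity of the coordinate maps in the metric~\eqref{eq:norm} on $\Ub^M$ and the fact that the fixed points lie in $\U^M$ so that the limit measures have total mass one. The combinatorial passage from sampling without replacement to with-replacement sampling is standard and is uniformly bounded by $O(1/N)$.
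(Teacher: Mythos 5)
Your proposal is correct and follows essentially the same route as the paper's proof: propagate intra-class exchangeability, reduce chaoticity to the convergence of the per-type empirical occupancy measures to the deterministic limits $Q^{(j)}(t)$, and control the gap between the joint law of the tagged servers and the product of empirical averages by the standard without-replacement versus with-replacement estimate, which is $O(1/N)$ for bounded test functions. The only cosmetic differences are your use of indicator test functions in place of general bounded ones and your slightly more explicit handling of the $t=\infty$ case.
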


\begin{proof}
Note that the first part of the proposition is a special case of the second part.
Hence, it is sufficient to prove the second part.
We will provide a proof  for the $M=2$ case. 
The proof can be readily generalized to any $M \geq 2$. 

Due to the dynamics of the system (under Scheme~1 or Scheme~2)
and the hypothesis of the proposition 
$\{\mv{q}{N}{j,k}(t), 1 \leq k \leq N \gamma_j, 1 \leq j \leq M\}$
is intra-class exchangeable for all $t \in [0,\infty]$.
The hypothesis of the proposition also 
implies that $\bs{\chi}_N(t) \Rightarrow \mf{Q}(t)$ as $N \to \infty$
for all $t \in [0,\infty]$.
Henceforth, we will omit the variable $t$ in our calculations, which
hold for all $t \in [0,\infty]$.

To prove the proposition, it is sufficient to show
that the following convergence holds as $N \to \infty$.

\begin{equation}
\expect{\prod_{k=1}^{r_1} \phi_k\brac{\mv{q}{N}{1,k}}\prod_{k=1}^{r_2} \psi_k\brac{\mv{q}{N}{2,k}}}
\to \prod_{k=1}^{r_1} \inprod{\phi_k}{Q^{(1)}} \prod_{k=1}^{r_2} \inprod{\psi_k}{Q^{(2)}}
\end{equation}
for all bounded mappings $\phi_k,\psi_k:\mb{Z}_+ \to \mb{R}_+$.
Now we have

\begin{multline}
\abs{\expect{\prod_{k=1}^{r_1} \phi_k\brac{\mv{q}{N}{1,k}}\prod_{k=1}^{r_2} \psi_k\brac{\mv{q}{N}{2,k}}}
- \prod_{k=1}^{r_1} \inprod{\phi_k}{Q^{(1)}} \prod_{k=1}^{r_2} \inprod{\psi_k}{Q^{(2)}}}\\
\leq \abs{\expect{\prod_{k=1}^{r_1} \phi_k\brac{\mv{q}{N}{1,k}}\prod_{k=1}^{r_2} \psi_k\brac{\mv{q}{N}{2,k}}}
-\expect{\prod_{k=1}^{r_1} \inprod{\phi_k}{\mv{\chi}{N}{1}} \prod_{k=1}^{r_2} \inprod{\psi_k}{\mv{\chi}{N}{2}}}}\\
+\abs{\expect{\prod_{k=1}^{r_1} \inprod{\phi_k}{\mv{\chi}{N}{1}} \prod_{k=1}^{r_2} \inprod{\psi_k}{\mv{\chi}{N}{2}}}-\prod_{k=1}^{r_1} \inprod{\phi_k}{Q^{(1)}} \prod_{k=1}^{r_2} \inprod{\psi_k}{Q^{(2)}}}.
\label{eq:neq}
\end{multline}
Note that the second term on the right hand side of the above inequality vanishes
as $N \to \infty$ since $\mv{\chi}{N}{j} \Rightarrow Q^{(j)}$  as $N \to \infty$ for
$j=1,2$ and $Q^{(1)}$ and $Q^{(2)}$ are constants.  Now, due to exchangeability
we have

\begin{multline}
\expect{\prod_{k=1}^{r_1} \phi_k\brac{\mv{q}{N}{1,k}}\prod_{k=1}^{r_2} \psi_k\brac{\mv{q}{N}{2,k}}}
=\frac{1}{(N \gamma_1)_{r_1}(N \gamma_2)_{r_2}}\\
\times \expect{\sum_{\sigma \in P(r_1,N\gamma_1)}\sum_{\sigma' \in P(r_1,N\gamma_1)}
\prod_{k=1}^{r_1} \phi_k\brac{\mv{q}{N}{1,\sigma(k)}}\prod_{k=1}^{r_2} \psi_k\brac{\mv{q}{N}{2,\sigma'(k)}}},
\end{multline}
where $(N)_k=N(N-1)\ldots(N-k+1)$, and $P(r,n)$ denotes the set of all permutations
of the numbers $\cbrac{1,2,\ldots,N}$ taken $r$ at a time. Also,
by definition of $\mv{\chi}{N}{j}$ we have

\begin{multline}
\expect{\prod_{k=1}^{r_1} \inprod{\phi_k}{\mv{\chi}{N}{1}} \prod_{k=1}^{r_2} \inprod{\psi_k}{\mv{\chi}{N}{2}}}
=\mb{E}\left[\brac{\prod_{k=1}^{r_1} \frac{1}{N \gamma_1}\sum_{l=1}^{N \gamma_1} \phi_k\brac{\mv{q}{N}{1,l}}}\right.\\
\left. \times \brac{\prod_{k=1}^{r_2} \frac{1}{N \gamma_2}\sum_{l=1}^{N \gamma_2} \psi_k\brac{\mv{q}{N}{2,l}}}\right]
\end{multline}
Hence, the first term on the right hand side of~\eqref{eq:neq}
can be bounded as follows

\begin{align*}
\left \vert \expect{\prod_{k=1}^{r_1} \phi_k\brac{\mv{q}{N}{1,k}}\prod_{k=1}^{r_2} \psi_k\brac{\mv{q}{N}{2,k}}}\right.
&-\left. \expect{\prod_{k=1}^{r_1} \inprod{\phi_k}{\mv{\chi}{N}{1}} \prod_{k=1}^{r_2} \inprod{\psi_k}{\mv{\chi}{N}{2}}}\right \vert\\
&\leq 2B^{r_1+r_2} \brac{1- \frac{(N\gamma_1)_{r_1}(N\gamma_2)_{r_2}}{(N \gamma_1)^{r_1}(N \gamma_2)^{r_2}}},\\
&\to 0 \text{ as } N \to \infty,
\end{align*}
where $\max\brac{\norm{\phi_k}_{\infty},\norm{\psi_k}_{\infty}} = B$.
This completes the proof.
\end{proof}

Thus, the above proposition shows that in the limiting system
server occupancies become independent of each other. It also
shows that the stationary occupancy distribution
of any type $j$ server is given by 
$Q^{(j)}(\infty)=\cbrac{\mv{P}{n}{j}-\mv{P}{n+1}{j}, n \in \mb{Z}_+}$ for Scheme~1
and $Q^{(j)}(\infty)=\cbrac{\mv{P}{n}{j}-\mv{\tilde{P}}{n+1}{j}, n \in \mb{Z}_+}$ for Scheme~2.

\section{Computation of the stationary distribution}
\label{sec:computation}

So far we have shown that in the limiting system ($N \to \infty$) 
each finite collection of servers behave independently and 
the stationary tail distribution of occupancy of a type $j \in \cal{J}$
server in the limiting system is given by $\cbrac{\mv{P}{k}{j}, k \in \mb{Z}_+}$ under Scheme~1
and $\cbrac{\mv{P}{k}{j}, k \in \mb{Z}_+}$ under Scheme~2.
Using the independence of servers in the limiting system
we conclude the following proposition.

\begin{proposition}
\label{thm:lambdak}
In equilibrium, the arrival process of jobs
at any given server in the limiting system
is a state dependent Poisson process. Further,
the arrival rate of jobs to a server
of type $j \in \cal{J}$ when it has occupancy
$k$ in the equilibrium is given by

\begin{equation}
\label{eq:lambdak_x}
\mv{\lambda}{k}{j}= \frac{\lambda}{\gamma_j}\frac{\brac{\mv{P}{k}{j}}^{d_j}-\brac{\mv{P}{k+1}{j}}^{d_j}}{\mv{P}{k}{j}-\mv{P}{k+1}{j}} \prod_{i=1}^{j-1} \brac{\mv{P}{k}{i}}^{d_i}\prod_{i=j+1}^{M} \brac{\mv{P}{k+1}{i}}^{d_i},
\end{equation}
for Scheme~1 and

\begin{equation}
\label{eq:lambdak_xc}
\mv{\tilde{\lambda}}{k}{j}= \frac{\lambda}{\gamma_j}\frac{\brac{\mv{\tilde{P}}{k}{j}}^{d_j}-\brac{\mv{\tilde{P}}{k+1}{j}}^{d_j}}{\mv{\tilde{P}}{k}{j}-\mv{\tilde{P}}{k+1}{j}} \prod_{i=1}^{j-1} \brac{\mv{\tilde{P}}{\ktil{k}{j}{i}}{i}}^{d_i}\prod_{i=j+1}^{M} \brac{\mv{\tilde{P}}{\ksub{k}{j}{i}}{i}}^{d_i},
\end{equation}
for Scheme~2.
\end{proposition}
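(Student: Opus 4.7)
The plan is to combine the asymptotic independence established in Proposition~\ref{thm:chaos} with the structure of the mean field drift in Theorem~\ref{thm:meanfield_conv} to identify the state-dependent arrival rate. The argument has two parts: first, verify that the arrival stream at any tagged server is a state-dependent Poisson process in the $N\to\infty$ limit; second, compute $\mv{\lambda}{k}{j}$ by equating a flow balance at equilibrium with the arrival term appearing in \eqref{eq:diff4_x} or \eqref{eq:diff4_xc}.

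For the first part, the global arrival process is Poisson of rate $N\lambda$, and the arrival stream at a tagged type~$j$ server is obtained by thinning: each arrival is routed to the tagged server if and only if, after uniformly sampling $d_i$ servers of each type $i\in\mathcal{J}$, the tagged server is selected under the scheme's selection rule. Conditional on the tagged server's occupancy being $k$, the thinning probability is a deterministic function of the occupancies of the other sampled servers together with $k$. Proposition~\ref{thm:chaos} applied at equilibrium implies that the joint law of the tagged server's occupancy and the occupancies of any finite collection of disjoint servers factorizes in the limit, with the latter i.i.d.\ from $Q^{(i)}(\infty)$. Hence the conditional thinning probability collapses to a deterministic function $p_k^{(j)}$ of $k$ alone, and the arrival stream at the tagged server becomes a state-dependent Poisson process with rate $\mv{\lambda}{k}{j}=(\lambda d_j/\gamma_j)\,p_k^{(j)}$ when in state~$k$.

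For the second part, a flow balance at equilibrium determines the rate without a direct combinatorial computation. In equilibrium the fraction of type~$j$ servers in state~$k$ equals $\mv{P}{k}{j}-\mv{P}{k+1}{j}$, and each such server produces upward jumps at rate $\mv{\lambda}{k}{j}$, so the per-type-$j$-server rate of $k\!\to\!k{+}1$ transitions equals $\bigl(\mv{P}{k}{j}-\mv{P}{k+1}{j}\bigr)\mv{\lambda}{k}{j}$. This same quantity is the arrival contribution to $\mv{l}{k+1}{j}(\mf{P})$ in \eqref{eq:diff4_x}, namely
\[
\frac{\lambda}{\gamma_j}\Bigl[\bigl(\mv{P}{k}{j}\bigr)^{d_j}-\bigl(\mv{P}{k+1}{j}\bigr)^{d_j}\Bigr]\prod_{i=1}^{j-1}\bigl(\mv{P}{k}{i}\bigr)^{d_i}\prod_{i=j+1}^{M}\bigl(\mv{P}{k+1}{i}\bigr)^{d_i}.
\]
Equating the two expressions and dividing by $\mv{P}{k}{j}-\mv{P}{k+1}{j}$, which is strictly positive by a short induction on~\eqref{eq:tailhet_x} combined with $\mv{P}{k}{j}\downarrow 0$ from Proposition~\ref{thm:tail_het_properties}, yields \eqref{eq:lambdak_x}. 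The identical flow-balance argument applied to \eqref{eq:diff4_xc} produces \eqref{eq:lambdak_xc}, with the index rescalings $\ksub{k}{j}{i}$ and $\ktil{k}{j}{i}$ entering the cross-type factors because Scheme~2 compares processing rates per job rather than raw occupancies.

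The main obstacle I anticipate is making the state-dependent Poisson step fully rigorous despite the tagged server's occupancy itself evolving in time. A clean resolution is that the limiting tagged server is a birth-death Markov chain whose transition rates at state~$k$ are determined by the instantaneous mean field configuration, which is deterministic in the limit. Propagation of chaos at each fixed time (Proposition~\ref{thm:chaos}) supplies the required independence at every instant, and the Markov structure then guarantees that a single state-dependent Poisson arrival stream governs the tagged server's upward transitions. Tie-breaking, both within a type and across types, is absorbed automatically because the mean field drift was derived under exactly these conventions, so matching against the drift bypasses any separate combinatorial accounting of ties.
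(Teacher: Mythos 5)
Your proposal is correct, and while the first half coincides with the paper's argument, the second half takes a genuinely different route. The paper also establishes the state-dependent Poisson property exactly as you do: the potential arrival stream at a tagged type-$j$ server is a thinned Poisson process of rate $d_j\lambda/\gamma_j$ (since the tagged server is sampled with probability $\binom{N\gamma_j-1}{d_j-1}/\binom{N\gamma_j}{d_j}=d_j/(N\gamma_j)$), and asymptotic independence from Proposition~\ref{thm:chaos} turns the acceptance probability into a deterministic function of the tagged occupancy in the limit. Where you diverge is in computing $\mv{\lambda}{k}{j}$: the paper does a direct combinatorial calculation, summing over the number $x$ of other sampled type-$j$ servers tied at occupancy $k$ with the tie-breaking weight $\tfrac{1}{x+1}\binom{d_j-1}{x}\brac{\mv{P}{k}{j}-\mv{P}{k+1}{j}}^x\brac{\mv{P}{k+1}{j}}^{d_j-1-x}$, and then collapsing the sum via a binomial identity to reach \eqref{eq:lambdak_x}. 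You instead equate the aggregate per-type-$j$-server rate of $k\to k+1$ transitions, $\brac{\mv{P}{k}{j}-\mv{P}{k+1}{j}}\mv{\lambda}{k}{j}$, with the arrival term of the mean field drift \eqref{eq:diff4_x} evaluated at $\mf{P}$, and divide. This buys you a shorter derivation that absorbs all tie-breaking conventions automatically (since the drift was derived under exactly those conventions in Lemma~\ref{thm:generator}), at the cost of being less self-contained: the paper's sum-over-ties computation is an independent consistency check on the generator, whereas your argument inherits its correctness from it. The only point to tighten is the division by $\mv{P}{k}{j}-\mv{P}{k+1}{j}$; your observation that \eqref{eq:tailhet_x} propagates any equality $\mv{P}{k}{j}=\mv{P}{k+1}{j}$ forward to force $\mv{P}{k}{j}=0$ handles this, and the same caveat is implicit in the paper's simplified form of the rate.
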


\begin{proof}
We provide the proof for Scheme~1. The proof for Scheme~2 follows from similar
line of arguments.

Consider a {\em tagged} type $j$ server in the system
and the arrivals that have the tagged server 
as one of its possible 
destinations. These arrivals constitute the {\em potential arrival 
process} at the tagged server. The probability 
that the tagged server is selected as a potential destination server for
a new arrival is $\frac{\binom{N\gamma_j-1}{d_j-1}}{\binom{N \gamma_j}{d_j}}=\frac{d_j}{N \gamma_j}$. 
Thus, {due to Poisson thinning},
the potential arrival process to the tagged server is a Poisson process with rate 
$\frac{d_j}{N \gamma_j}\times N\lambda=\frac{d_j \lambda}{\gamma_j}$.

Next, we consider the arrivals that actually join the tagged server. 
These arrivals constitute the actual arrival process 
at the server. For finite $N$, this process is 
not Poisson since a potential arrival to the tagged server actually 
joins the server depending on the number of jobs present at 
the other possible destination servers. However, as 
$N \rightarrow \infty$, due to the asymptotic independence 
property shown in~\ref{thm:chaos} the occupancies of the sampled servers 
become independent of each 
other. As a result, {in equilibrium} the actual arrival process 
converges to a state dependent Poisson process as $N \rightarrow \infty$.

Consider the potential arrivals that occur to the tagged server 
when its occupancy is $k$. 
This arrival actually joins the tagged server with 
probability $\frac{1}{x+1}$ when $x$ other servers
among the $d_j$ servers of type $j$ have occupancy $k$,
all the $d_i$ servers of type $i < j$ have at least occupancy $k$,
and   all the $d_i$ servers of type $i > j$ have at least occupancy $k+1$.
Thus, the total arrival rate $\mv{\lambda}{k}{j}$ can be computed as

\begin{multline}
\mv{\lambda}{k}{j}=\frac{d_j \lambda}{\gamma_j} \sum_{x=0}^{d_j-1} \frac{1}{x+1}\binom{d_j-1}{x}\brac{\mv{P}{k}{j}-\mv{P}{k+1}{j}}^x \brac{\mv{P}{k+1}{j}}^{d_j-1-x} \\
\times \prod_{i=1}^{j-1} \brac{\mv{P}{k}{i}}^{d_i}\prod_{i=j+1}^{M} \brac{\mv{P}{k+1}{i}}^{d_i},
\end{multline}
which simplifies to~\eqref{eq:lambdak_x}.
\end{proof}
Hence, the above proposition shows that in equilibrium the arrival rate
at a given server depends on the stationary tail probabilities $\mv{P}{k}{j}$,
$k \in \mb{Z}_+$ and $j \in \cal{J}$. 
 
The stationary tail probabilities can in turn be expressed
as functions of the arrival rate. Indeed, in 
equilibrium the global balance equations (which
hold under state dependent Poisson arrivals due to
Theorems 3.10 and 3.14 of~\cite{Kelly_book})
yield

\begin{equation}
\mv{\pi}{k}{j}\mv{\lambda}{k}{j}=\mv{\pi}{k+1}{j} \mu C_j, \text{ for } j \in \mathcal{J}, k \in \mb{Z}_+,
\label{eq:global_balance_x}
\end{equation}
where $\mv{\pi}{k}{j}=\mv{P}{k}{j}-\mv{P}{k+1}{j}$.
Hence, the equilibrium point $\mf{P}$
is the unique fixed point of the mapping $\Theta : \U^M \to \U^M$
defined as $\Theta(\mf{P})= F(G(\mf{P}))$, where $G(\cdot)$
denotes the mapping from $\U^M$ to the space of possible arrival rates
(defined by~\eqref{eq:lambdak_x}) and
$F(\cdot)$ denotes the mapping from the space of possible arrival rates 
to the space $\U^M$ (defined by~\eqref{eq:global_balance_x}).
Thus, the equilibrium point $\mf{P}$ can be computed
using the fixed point iterations (i.e., by repeatedly applying the mapping~$\Theta(\cdot)$
to some arbitrary point $\mf{Q} \in \U^M$.)

\begin{remark}
\label{rmk:insensitivity}
{\em So far our results have been obtained 
for exponential job length distributions.
Note that the conclusions of Proposition~\ref{thm:lambdak} continue to 
hold for any job length distributions due to the Whittle balance criterion \cite{Whittle_JAP_1985} that can be shown to hold for the stationary distribution (also see Theorems 3.10 and 3.14 of~\cite{Kelly_book}).
In view of the uniqueness of the stationary distribution and propagation of chaos this suggests that in stationarity the servers are asymptotically independent for general job size distributions.
In Section~\ref{sec:numerics}, we provide numerical evidence
to support insensitivity.}
\end{remark}

\begin{remark}
\label{rmk:resp_time}
{\em From Proposition~\ref{thm:chaos} it directly follows that
the expected occupancy of a type $j$ server at equilibrium
is given by $\sum_{k=1}^{\infty} P_k^{(j)}$ for Scheme~1
and $\sum_{k=1}^{\infty} \tilde{P}_k^{(j)}$ for Scheme~2.
Hence, a simple application of the Little's law, yields that the
mean sojourn time of jobs
in the limiting system is given by
$\bar{T}=\frac{1}{\lambda}\sum_{j=1}^{M} \sum_{k=1}^{\infty} \gamma_j P_k^{(j)}$
for Scheme~1 and $\bar{T}=\frac{1}{\lambda}\sum_{j=1}^{M} \sum_{k=1}^{\infty} \gamma_j \tilde{P}_k^{(j)}$
for Scheme~2. Thus, the mean sojourn time of jobs in the limiting
system can be computed using stationary tail probabilities
which in turn can be computed using the fixed point method described in this section.}
\end{remark}

\section{Numerical Results}
\label{sec:numerics}

In this section, we present simulation results
to compare the different job assignment schemes
discussed in this paper. The results also 
indicate the accuracy of the asymptotic analyses
of the Scheme~1 and Scheme~2  in
predicting their performance in a finite system
of servers. We set $\mu=1$ in all our simulations.

\begin{figure}
 \centering
 \includegraphics[height=0.5\columnwidth,
  keepaspectratio]{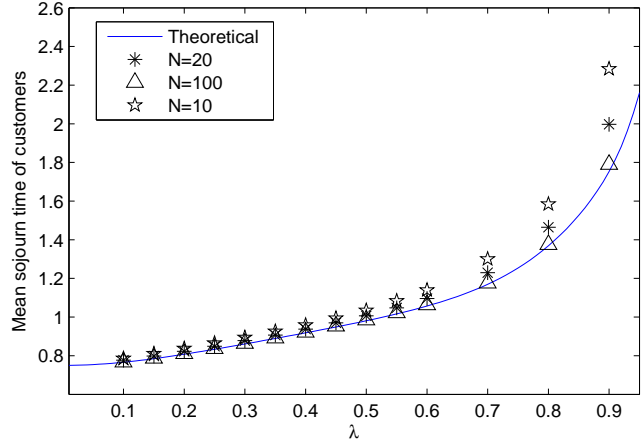}
 \caption{Mean sojourn time jobs as a function of $\lambda$ for different values of $N$.
 We set $C_1=2/3$, $C_2=4/3$,  and $\gamma_1=\gamma_2=0.5$.}
 \label{fig:accuracy}
 \end{figure}
 
To determine accuracy of the asymptotic analysis presented
in the paper we first compare the results obtained from the theoretical analysis
with that obtained from the simulations.
In Figure~\ref{fig:accuracy}, we plot the mean sojourn time jobs
as a function of the normalized arrival rate, $\lambda$, for different
values of the system size $N$. We observe a very good match between
the analysis and simulation results for $N=100$.
For $N=10$ and $N= 20$ the relative errors between
the analysis and the simulation results are around 10\% and 5\%, respectively.
Thus, we conclude that the asymptotic analysis accurately captures the behaviour of the system
for moderately large system sizes.

 \begin{figure}
 \centering
 \includegraphics[height=0.5\columnwidth,
  keepaspectratio]{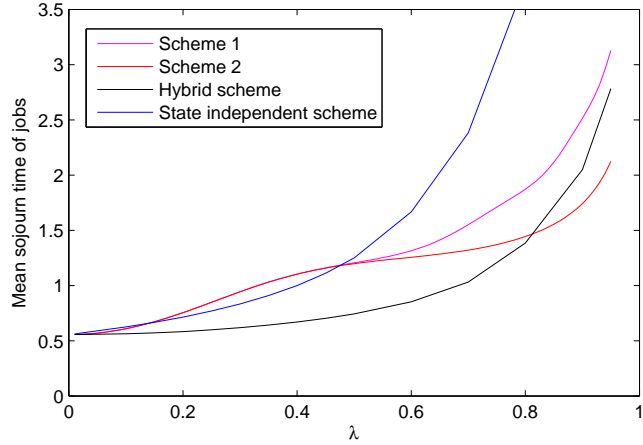}
 \caption{Mean sojourn time jobs as a function of $\lambda$ for different schemes.
 We set $M=2$, $C_1=1/5$, $C_2=9/5$,  $\gamma_1=\gamma_2=0.5$, and $d_1=d_2=2$.
 Routing probabilities for the state independent scheme and the Hybrid SQ($d$) scheme 
 are optimized based on $\lambda$.}
 \label{fig:comparison1}
 \end{figure}
 
 We now compare the performance of the proposed schemes
 with that of other existing schemes for heterogeneous scenario.
 In particular, we consider the following two schemes as benchmarks.
 
 \subsection{The state independent scheme}

As a baseline, we consider a scheme that assigns  an incoming job to a
server with a fixed probability, independent of the 
current state of the servers in the system~\cite{Altman_Opt_load_2008}. 
We denote by $p_j$, for $j \in \cal{J}$, 
the probability with which an arrival is assigned to one
of the servers of type $j$. The probabilities
$p_j$, $j \in \cal{J}$, can be chosen chosen such that
the mean sojourn time of the jobs is minimized.
Clearly, in this scheme, no communication is required between the
job dispatcher and the servers as the job assignment decisions 
are made independently of the state of the servers.

\subsection{The hybrid SQ($d$) scheme}

In this scheme~\cite{Mukhopadhyay_ITC_2014}, upon arrival of a new job,
the router first chooses a server type $j \in \cal{J}$
with probability $p_j$.
Then $d_j$ servers of type $j$
are chosen uniformly at random from set of $N\gamma_j$
servers of type $j$. The job is then
assigned to the server having the least number of unfinished
jobs among the $d_j$ chosen servers. Ties are broken by tossing
a fair coin. As in the state independent scheme, the probabilities $p_j$,
$j \in \cal{J}$, can be chosen  such  that the mean sojourn time
of jobs in the system is minimized.  

 We choose the parameter values as follows: $M=2$, $C_1=1/5$, $C_2=9/5$,  $\gamma_1=\gamma_2=0.5$,
 and $d_1=d_2=2$.
 Under this parameter setting, the stability region for all the
 schemes under consideration is $\lambda < 1$. 
 In Figure~\ref{fig:comparison1}, we plot
 the mean sojourn time of jobs as a function 
 of the normalized arrival rate, $\lambda$, for 
 Scheme~1, Scheme~2, the state independent scheme,
 and the hybrid SQ($d$) scheme. We choose the optimal routing probabilities
 $p_j$, $j \in \cal{J}$, for both state independent
 scheme and the hybrid SQ($d$) scheme.
 We observe that the mean sojourn time of jobs under Scheme~1
 and is almost the same as that under Scheme~2  for small values of $\lambda$.
 However, for larger values of $\lambda$, Scheme~2 outperforms Scheme~1.
 This is expected for reasons explained in Section~\ref{sec:model}. We also
 see that hybrid SQ($d$) scheme results in a smaller mean sojourn time of jobs
 than that in Scheme~1 and Scheme~2, for smaller values of $\lambda$.
 This is because, in the hybrid SQ($4$) scheme, the routing probabilities
 are chosen optimally based on the arrival rate $\lambda$. However, for larger
 values of $\lambda$, we observe that Scheme~2 outperforms the hybrid SQ($d$)
 scheme.
 
\begin{figure}
 \centering
 \includegraphics[height=0.5\columnwidth,
  keepaspectratio]{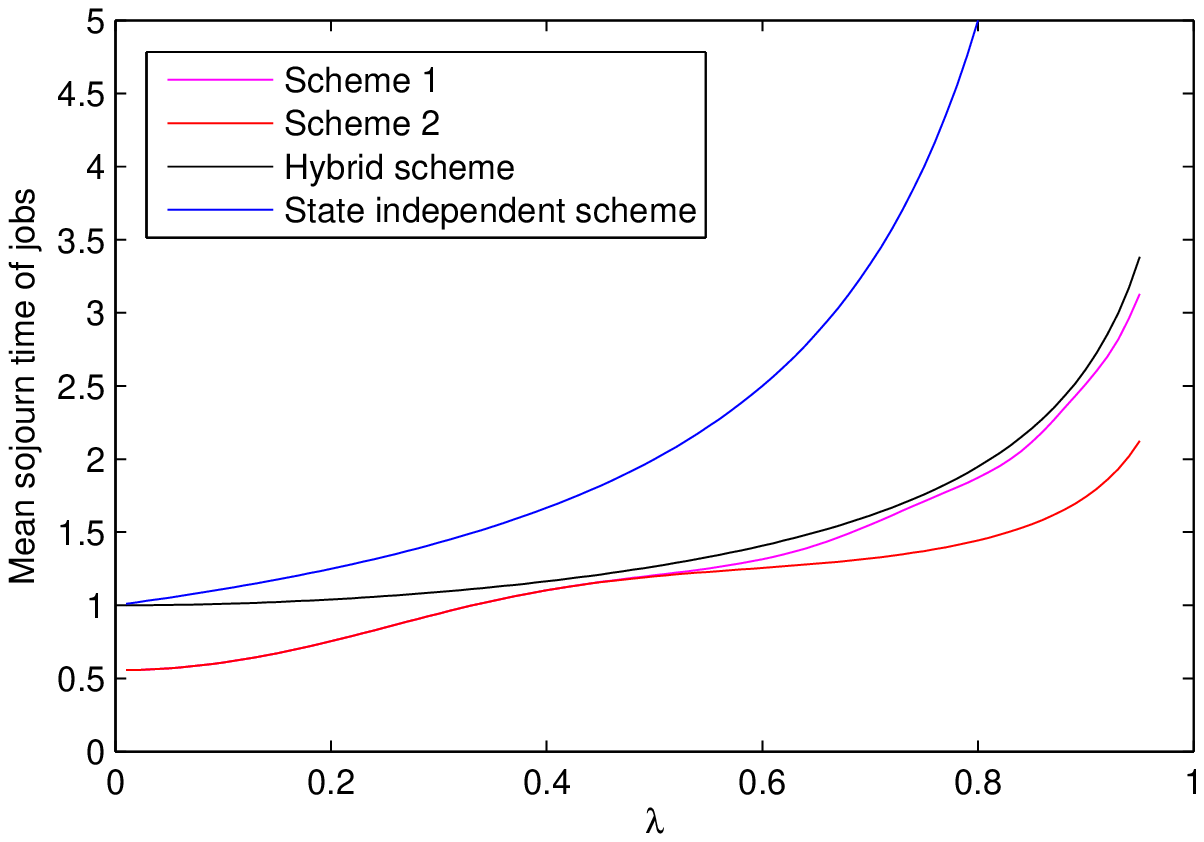}
 \caption{Mean sojourn time jobs as a function of $\lambda$ for different values of $N$.
 We set $M=2$, $C_1=1/5$, $C_2=9/5$,  $\gamma_1=\gamma_2=0.5$, and $d_1=d_2=2$.
 Routing probabilities for the state independent scheme and the hybrod SQ($d$) scheme are not optimized.}
 \label{fig:comparison2}
 \end{figure}

 To observe the effect of fixing the routing probabilities for the
 hybrid SQ($d$) scheme and the state independent scheme, 
 we choose $p_i=\frac{\gamma_i C_i}{\sum_{j\in \cal{J}}\gamma_j C_j}$  
 for each server type $i \in \cal{J}$. This choice of routing probabilities 
 ensures that all arrival rates in the maximal stability region can be supported
 by the system operating under either the state independent scheme
 or the Hybrid SQ($d$) scheme. We choose the same parameter setting as before and plot
 mean sojourn time of jobs as a function of $\lambda$ in Figure~\ref{fig:comparison2} for
 the schemes under consideration.  In this case, we notice that both Scheme~1 and
 Scheme~2 outperform the hybrid SQ($d$) scheme. Hence, in the scenarios where
 estimation of arrival rates is not possible, Scheme~2 is a better choice than the hybrid SQ($d$)
 scheme.

\begin{table}[ht]
 \renewcommand{\arraystretch}{1.5}
 \caption{Insensitivity of Scheme~1}
 \centering
 \begin{tabular}{c c c c}
   \hline
   \bf{$\lambda$} & \begin{tabular}[c]{@{}c@{}} Mean sojourn time $\bar{T}$\\(Theoretical) \end{tabular}  & \begin{tabular}[l]{@{}c@{}}Constant\\(Simulation)\end{tabular} & \begin{tabular}[c]{@{}c@{}}Power Law\\(Simulation)\end{tabular}\\
   \hline
   0.2 & 0.8076 & 0.8106 & 0.8098\\
   \hline
   0.3 & 0.8609 & 0.8642 & 0.8640\\
   \hline
   0.5 & 0.9809 & 0.9852 & 0.9840\\
   \hline
   0.7 & 1.1696 & 1.1759 & 1.1757\\
   \hline
   0.8 & 1.3687 & 1.3741 & 1.3740\\
   \hline
   0.9 & 1.7531 & 1.7641 & 1.7645\\
   \hline
 \end{tabular}
 \label{N_100}
 \end{table}

We now numerically investigate the behaviour of the proposed
schemes under different job length distributions. 
In  Table~\ref{N_100}, mean sojourn
time of jobs under Scheme~1 is shown 
as a function of  $\lambda$, for the following distributions.

\begin{enumerate}
\item {\em Constant}: We consider job length distribution
having the cumulative distribution given by
$F(x)=0$ for $0 \leq x < 1$,
and $F(x)=1$, otherwise.

\item {\em Power law}: We consider job length distribution having
cumulative distribution function given by 
$F(x)=1-1/4x^2$ for $x \geq \frac{1}{2}$ and $F(x)=0$, otherwise.
\end{enumerate}
For both distributions we have $\mu=1$. 
We choose the following parameter values $M=2$, $C_1=4/3$, $C_2=2/3$, $N=100$,
$\gamma_1=\gamma_2=\frac{1}{2}$, and $d_1=d_2=2$.
We observe that there is insignificant change in the mean sojourn
time of jobs when
the job length distribution type is changed.  
The results, therefore, justify the insensitivity property
as discussed in Remark~\ref{rmk:insensitivity}.

\section{Conclusion}
\label{sec:conclusion}

We considered randomized job assignment schemes in  
a multi-server system consisting of $N$ parallel processor sharing servers,
categorized into $M$ ($\ll N$) different types
according to their processing capacity or speed. In the proposed schemes, 
a small number of servers from each type is sampled uniformly at random at each arrival instant. 
It was shown that due to such sampling the schemes achieve  the maximal stability region.
Mean field analysis was carried out to show that asymptotic independence among servers holds even when
$M$ is finite and exchangeability holds only within servers of the same type. 
The existence and uniqueness of stationary solution of the mean field and
doubly exponentially decreasing nature of the tail distribution of the number of jobs was
established. Numerical studies have shown that, when the estimates
of arrival rates are not available, the proposed schemes 
offer simpler alternatives to achieving lower mean sojourn time of jobs.

\appendix

\section{}
\label{proof:uniqueness_sol}

We will prove Proposition~\ref{thm:uniqueness_sol} only for the system~\eqref{eq:diff1_x}-\eqref{eq:diff2_x}.
The proof for the system~\eqref{eq:diff1_xc}-\eqref{eq:diff2_xc} follows similarly.

Define $\theta(x)=[\min(x,1)]_{+}$, where $[z]_{+}=\max\cbrac{0,z}$ and 
let us consider the following modification of~\eqref{eq:diff1_x}-\eqref{eq:diff2_x}:
\begin{align}
 \mf{u}(0) &= \mf{g}, \label{eq:diff1_x'}\\
 \dot{\mf{u}}(t) &= \mf{\hat{l}}(\mf{u}(t)), \label{eq:diff2_x'}
\end{align}
where the mapping $\mf{\hat{l}}:\brac{\mb{R}^{\mb{Z}_+}}^M \to \brac{\mb{R}^{\mb{Z}_+}}^M$
is given by

\begin{align}
\mv{\hat{l}}{0}{j}(\mf{u}) &= 0,  \text{ for } j \in \mathcal{J}, \label{eq:diff3_x'}\\
 \mv{\hat{l}}{k}{j}(\mf{u}) &= \frac{\lambda}{\gamma_{j}} \left[\brac{\theta\left(\mv{u}{k-1}{j}\right)}^{d_{j}} - \brac{\theta\left(\mv{u}{k}{j}\right)}^{d_{j}}\right]_{+}
\prod_{i=1}^{j-1} \left(\theta\brac{\mv{u}{k-1}{i}}\right)^{d_{i}}\label{eq:diff4_x'}\\
\times \prod_{i=j+1}^{M} & \left(\theta\brac{\mv{u}{k}{i}}\right)^{d_{i}} 
- \mu C_{j} \left[\theta\brac{\mv{u}{k}{j}}\! - \!\theta\brac{\mv{u}{k+1}{j}}\right]_{+}, \text{ for } k \geq 1, j \in \mathcal{J}. \nonumber
\end{align}
Clearly, the right hand side of~\eqn{eq:diff4_x} and~\eqn{eq:diff4_x'}
are equal if $\mathbf{u} \in \Ub^M$. Therefore, the two systems must have identical
solutions in $\Ub^M$. Also if $\mf{g} \in \Ub^M$, then any solution of
the modified system remains within $\Ub^M$. 
This is because of the facts that if $u^{(j)}_n(t)=u^{(j)}_{n+1}(t)$ for
some $j$, $n$, $t$, then $\hat{l}^{(j)}_n(\mf{u}(t)) \geq 0$ and 
$\hat{l}^{(j)}_{n+1}(\mf{u}(t)) \leq 0$, and if $\mv{u}{n}{j}(t)=0$
for some $j$, $n$, $t$, then $\mv{\hat{l}}{n}{j}(\mf{u}(t)) \geq 0$.
Hence, to prove the uniqueness of solution of
\eqn{eq:diff1_x}-\eqn{eq:diff2_x}, we need to show
that the modified system~\eqn{eq:diff1_x'}-\eqn{eq:diff2_x'}
has a unique solution in $(\mb{R}^{\mb{Z}_{+}})^{M}$.
We now extend the distance metric defined in~\eqref{eq:norm} to the space $(\mb{R}^{\mb{Z}_{+}})^{M}$.

Using the metric defined in~\eqn{eq:norm} and the facts
that $\abs{x_+-y_+} \leq \abs{x-y}$ for any $x,y \in \mb{R}$,
$\abs{a_1b_1^m-a_2b_2^m} \leq \abs{a_1-a_2}+m\abs{b_1-b_2}$ for any 
$a_1,a_2,b_1,b_2 \in [0,1]$, and $\abs{\theta(x)-\theta(y)} \leq \abs{x-y}$
for any $x,y \in \mb{R}$ we obtain

\begin{align}
\norm{\mf{\hat{l}}(\mf{u})} &\leq K_1, \label{eq:bound}\\
\norm{\mf{\hat{l}}(\mf{u})-\mf{\hat{l}}(\mf{w})} &\leq K_2 \norm{\mf{u}-\mf{w}},
\label{eq:lip}
\end{align}
where $\mf{u}, \mf{w} \in (\mb{R}^{\mb{Z}_{+}})^{M}$, $K_1$ and $K_2$ are constants defined as
$K_1=\frac{\lambda}{\min_{j \in \cal{J}} \gamma_j} +\mu (\max_{ j \in \cal{J}} C_j)$
and $K_2= 4M \lambda \frac{\max_{j \in\cal{J}} d_j}{\min_{j \in \cal{J}} \gamma_j}+ 3 \mu (\max_{1 \leq j \leq M} C_j)$.
The uniqueness now follows from inequalities~\eqn{eq:bound}
and~\eqn{eq:lip} by using Picard's iteration
technique since $(\mb{R}^{\mb{Z}_{+}})^{M}$ is complete under the metric defined in~\eqn{eq:norm}. \qed

\section{}
\label{proof:convergence_semigroup}

We prove Proposition~\ref{thm:convergence_semigroup}
by showing that the generators of the corresponding
semigroups converge as $N \to \infty$.
We first recollect the following from~\cite{Ethier_Kurtz_book}.

\begin{itemize}
\item The generator
$\mathbf{A}_N$ of the semigroup $\cbrac{\mf{T}_N(t)}_{t \geq 0}$ acting on functions 
$f:\prod_{j=1}^{M}\UN \rightarrow \mathbb{R}$ is
given by $\mathbf{A}_N f(\mathbf{g})=\sum_{\mf{h} \neq \mf{g}} q_{\mf{g}\mf{h}} \brac{f(\mf{h})-f(\mf{g})}$, 
where $q_{\mf{g}\mf{h}}$, 
with $\mf{g},\mf{h} \in  \prod_{j=1}^{M} \UN$, denotes the transition
rate from state $\mf{g}$ to state $\mf{h}$.

\item The generator
$\mathbf{A}$ of the semigroup $\cbrac{\mf{T}(t)}_{t \geq 0}$ acting on functions 
$f:\Ub^M \rightarrow \mathbb{R}$ having bounded partial derivatives is
given by $\mf{A}f(\mf{g})=\lim_{t \downarrow 0} \frac{\mf{T}(t)f(\mf{g})-f(\mf{g})}{t}=\frac{d}{dt} f(\mf{u}(t,\mf{g}))\rvert_{t=0}$.
\end{itemize}

In the following lemma, we characterize the
the generator $\mf{A}_N$ associated with the process $\mf{x}_N(t)$.
  
\begin{lemma}
\label{thm:generator} 
Let $\mathbf{g} \in \prod_{j=1}^{M}\UN$ be any state of the process $\mf{x}_N(t)$
and {$\mathbf{e}(n,j)=\brac{e_k^{(i)}}_{k \in \mb{Z}_+, i \in \cal{J}}$}
be the unit vector with $e^{(j)}_n=1$ and $e^{(i)}_k=0$
if $i \neq j$ and $k \neq n$. 
Under Scheme~1, the generator
$\mathbf{A}_N$ of the Markov process $\mf{x}_N(t)$
acting on functions $f:\prod_{j=1}^{M}\UN \rightarrow \mathbb{R}$ is
given by

\begin{multline}
\mathbf{A}_N f(\mathbf{g})=  N \lambda  \sum_{j=1}^{M} \sum_{n \geq 1}
\sbrac{\brac{\mv{g}{n-1}{j}}^{d_j}-\brac{\mv{g}{n}{j}}^{d_j}}\prod_{i=1}^{j-1}\brac{\mv{g}{n-1}{i}}^{d_i}
\\ \times \prod_{i=j+1}^M \brac{\mv{g}{n}{i}}^{d_i}
\sbrac{f(\mathbf{g}+\frac{\mathbf{e}(n,j)}{N \gamma_j})-f(\mathbf{g})} \\
+ \mu N \sum_{n \geq 1} \sum_{j=1}^{M} \gamma_j C_j \sbrac{g^{(j)}_n-g^{(j)}_{n+1}}
\times \sbrac{f(\mathbf{g}-\frac{\mathbf{e}(n,j)}{N \gamma_j})-f(\mathbf{g})}.\label{eq:gen_x}   
\end{multline}
Under Scheme~2, the generator
$\mathbf{A}_N$ of the Markov process $\mf{x}_N(t)$
acting on functions $f:\prod_{j=1}^{M}\UN \rightarrow \mathbb{R}$ is
given by

\begin{multline}
\mathbf{A}_N f(\mathbf{g})=  N \lambda  \sum_{j=1}^{M} \sum_{n \geq 1}
\sbrac{\brac{\mv{g}{n-1}{j}}^{d_j}-\brac{\mv{g}{n}{j}}^{d_j}}\prod_{i=1}^{j-1}\brac{\mv{g}{\ktil{n-1}{j}{i}}{i}}^{d_i}\\
\times \prod_{i=j+1}^M \brac{\mv{g}{\ksub{n-1}{j}{i}}{i}}^{d_i}
 \sbrac{f(\mathbf{g}+\frac{\mathbf{e}(n,j)}{N \gamma_j})-f(\mathbf{g})} \\
+ \mu N \sum_{n \geq 1} \sum_{j=1}^{M} \gamma_j C_j \sbrac{g^{(j)}_n-g^{(j)}_{n+1}}
\times \sbrac{f(\mathbf{g}-\frac{\mathbf{e}(n,j)}{N \gamma_j})-f(\mathbf{g})}.\label{eq:gen_xc}   
\end{multline}
\end{lemma}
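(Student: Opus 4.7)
The plan is to compute $\mathbf{A}_N f(\mathbf{g})$ directly from the definition $\mathbf{A}_N f(\mathbf{g})=\sum_{\mathbf{h}\neq\mathbf{g}} q_{\mathbf{g}\mathbf{h}}[f(\mathbf{h})-f(\mathbf{g})]$ by enumerating all transitions out of $\mathbf{g}$ and their rates. The only events that can change the state are (i) an arrival that increments some server's occupancy from $n-1$ to $n$ and (ii) a service completion that decrements some server's occupancy from $n$ to $n-1$; each such event modifies only the single component $g^{(j)}_n$ (by $\pm 1/(N\gamma_j)$), so the state can jump only to $\mathbf{g}\pm\mathbf{e}(n,j)/(N\gamma_j)$ for some $(n,j)$. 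This immediately gives the two-term structure of~\eqref{eq:gen_x} and~\eqref{eq:gen_xc}; what remains is to compute the rate of each jump.

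The completion part is identical for both schemes. Under processor sharing, a type-$j$ server holding $q\geq 1$ jobs completes jobs at total rate $q\cdot(\mu C_j/q)=\mu C_j$. The number of type-$j$ servers holding exactly $n$ jobs is $N\gamma_j(g^{(j)}_n-g^{(j)}_{n+1})$, so the aggregate rate of the transition $\mathbf{g}\to\mathbf{g}-\mathbf{e}(n,j)/(N\gamma_j)$ is $\mu N\gamma_j C_j(g^{(j)}_n-g^{(j)}_{n+1})$, matching the second sum in both~\eqref{eq:gen_x} and~\eqref{eq:gen_xc}.

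For arrivals under Scheme~1, the total Poisson arrival rate is $N\lambda$, and I need the probability that a given arrival lands on a type-$j$ server currently holding $n-1$ jobs. Writing $k_i$ for the per-type minimum in~\eqref{eq:first_level}, the cross-type selection together with the tie-breaking rule (ties go to the type of highest capacity, and capacities are increasing in $j$) means that type $j$ at occupancy $n-1$ is selected iff $k_j=n-1$, $k_i\geq n-1$ for each $i<j$ (lower-capacity losers of any tie), and $k_i\geq n$ for each $i>j$ (higher-capacity, so $j$ must beat them strictly). Since the $d_i$ samples from each type are drawn independently of the other types, and using $\Pr[k_i\geq m]=(g^{(i)}_m)^{d_i}$, the joint probability factors as $\bigl[(g^{(j)}_{n-1})^{d_j}-(g^{(j)}_n)^{d_j}\bigr]\prod_{i<j}(g^{(i)}_{n-1})^{d_i}\prod_{i>j}(g^{(i)}_n)^{d_i}$. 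Multiplying by $N\lambda$ and summing over $(n,j)$ yields the first sum in~\eqref{eq:gen_x}.

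For Scheme~2 only the cross-type comparison changes: type $j$ at $k_j=n-1$ beats type $i$ iff $C_i/k_i<C_j/(n-1)$, with ties again going to the higher-capacity type. Converting this into an integer threshold on $k_i$ and applying the tie-breaking rule separately for $i<j$ (ties favor $j$, so non-strict inequality) and $i>j$ (ties favor $i$, so strict inequality) gives the conditions $k_i\geq\ktil{n-1}{j}{i}$ and $k_i\geq\ksub{n-1}{j}{i}$, respectively. Substituting these thresholds into the per-type probabilities $(g^{(i)}_m)^{d_i}$ and multiplying by $N\lambda$ yields the first sum in~\eqref{eq:gen_xc}. The main (essentially the only) obstacle is getting the thresholds correct under Scheme~2: the boundary cases where $(C_i/C_j)(n-1)$ is an integer must be charged to the correct side of the inequality, and the asymmetric definitions of $\ksub{\cdot}{\cdot}{\cdot}$ and $\ktil{\cdot}{\cdot}{\cdot}$ are designed precisely to absorb this bookkeeping.
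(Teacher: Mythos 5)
Your proposal is correct and follows essentially the same route as the paper's proof: enumerate the two transition types, obtain the departure rate from the processor-sharing dynamics and the arrival-placement probability from the per-type minima together with the cross-type tie-breaking rule, and substitute into the definition $\mathbf{A}_N f(\mathbf{g})=\sum_{\mathbf{h}\neq\mathbf{g}}q_{\mathbf{g}\mathbf{h}}[f(\mathbf{h})-f(\mathbf{g})]$. You in fact go slightly further than the paper, which verifies only Scheme~1 and asserts Scheme~2 is analogous, by explicitly deriving the integer thresholds $\ktil{n-1}{j}{i}$ and $\ksub{n-1}{j}{i}$; the one imprecision you share with the paper's own proof is that writing $\Pr[k_i\geq m]=(g^{(i)}_m)^{d_i}$ treats the within-type sampling as if it were with replacement.
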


\begin{proof}
We only prove the lemma for Scheme~1. For Scheme~2, it can be shown
similarly.

We first consider an arrival joining a server of type $j$ with exactly
$n-1$ unfinished jobs, when the state of the system is $\mf{g}$.
This corresponds to the transition from state $\mf{g}$ to the state
$\mf{g}+\frac{\mf{e}(n,j)}{N \gamma_j}$.
The term $\left(\left(\mv{g}{n-1}{j}\right)^{d_{j}} - \left(\mv{g}{n}{j}\right)^{d_{j}}\right)$  
$\times\prod_{i=1}^{j-1} \left(\mv{g}{n-1}{i}\right)^{d_{i}}
\prod_{i=j+1}^{M} \left(\mv{g}{n}{i}\right)^{d_{i}}$ 
denotes the probability with which an arrival joins a type $j$ server with exactly $n-1$ jobs.
This is because a job joins a server of type $j$ with exactly $n-1$
occupancy only when the following conditions are satisfied:

\begin{itemize}
\item Among the $d_j$ sampled servers of type $j$, at least one has
exactly $n-1$ jobs and the rest of them have at least $n$ jobs.

\item For each $i < j$, all the $d_i$ sampled servers of type $i$ have at least
$n-1$ jobs.

\item  For each $i > j$, all the $d_i$ servers of type $i$ have at least $n$
jobs.
\end{itemize} 
Since the arrival rate of jobs is $N \lambda$, the rate of the above transition
is given by

\begin{equation}
q_{\mf{g}, \mf{g}+\frac{\mf{e}(n,j)}{N\gamma_j}}=N \lambda \sbrac{\brac{\mv{g}{n-1}{j}}^{d_j}-\brac{\mv{g}{n}{j}}^{d_j}}\prod_{i=1}^{j-1}\brac{\mv{g}{n-1}{i}}^{d_i}
\prod_{i=j+1}^M \brac{\mv{g}{n}{i}}^{d_i}
\end{equation}
Further, the rate at which jobs depart from a server of type $j$ having exactly $n$
jobs is $\mu C_j N \gamma_j \left(\mv{g}{n}{j} - \mv{g}{n+1}{j}\right)$. 
The expression~\eqref{eq:gen_x}
now follows directly from the definition of $\mf{A}_N$.
\end{proof}

We now show  that the solutions
$\mf{u}(t,\mf{g})$ of~\eqref{eq:diff1_x}-\eqref{eq:diff2_x}
and~\eqref{eq:diff1_xc}-\eqref{eq:diff2_xc} are smooth with respect
to the initial point $\mf{g}$ and their partial derivatives are bounded.

\begin{lemma}
For each $j$, $n$, $j^\prime$, $n'$, $i$, $k$, and $t \geq 0$, the partial derivatives
$\frac{\partial\mf{u}(t,\mf{g})}{\partial \mv{g}{n}{j}}$,
$\frac{\partial^2\mf{u}(t,\mf{g})}{\partial {\mv{g}{n}{j}}^2}$,
and $\frac{\partial^2 \mf{u}(t,\mf{g})}{\partial \mv{g}{n}{j} \partial \mv{g}{n'}{j'}}$
exist for $\mf{g} \in \Ub^M$ and satisfy

\begin{equation}
\abs{\frac{\partial \mv{u}{k}{i}(t,\mf{g})}{\partial \mv{g}{n}{j}}} \leq \exp(B_1 t)
\label{eq:partial_bound1}
\end{equation}
and
\begin{equation}
\abs{\frac{\partial^2 \mv{u}{k}{i}(t,\mf{g})}{\partial {\mv{g}{n}{j}}^2}},
\abs{\frac{\partial^2 \mv{u}{k}{i}(t,\mf{g})}{\partial \mv{g}{n}{j} \partial \mv{g}{n'}{j'}}} \leq \frac{B_2}{B_1}(\exp(2B_1 t)
-\exp(B_1 t)),
\label{eq:partial_bound2}
\end{equation}
where $B_1=\frac{2\lambda\sum_{j \in \cal{J}}d_j}{\min_{j \in \cal{J}}\gamma_j}+2\mu\brac{\max_{j \in \cal{J}}C_j}$,
and $B_2=\frac{2\lambda\brac{\sum_{j \in \cal{J}}d_j}^2}{\min_{j \in \cal{J}}\gamma_j}$.
\end{lemma}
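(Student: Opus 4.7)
The plan is to write out the variational equations satisfied by the first- and second-order sensitivities of $\mf{u}(t,\mf{g})$ in $\mf{g}$, solve them as linear ODEs on $\ell^\infty$, and then apply Gr\"onwall's inequality to their suprema. I treat only Scheme~1 explicitly; the Scheme~2 case is identical once one substitutes $\ksub{\cdot}{j}{i}$, $\ktil{\cdot}{j}{i}$ for $k-1,k$ in the relevant products. The starting observation is that $\mv{l}{k}{i}(\mf{u})$ is a polynomial in only finitely many coordinates of $\mf{u}$ (namely $\mv{u}{k-1}{i}$, $\mv{u}{k}{i}$, $\mv{u}{k+1}{i}$, $\mv{u}{k-1}{a}$ for $a<i$, and $\mv{u}{k}{a}$ for $a>i$), all of which lie in $[0,1]$ when $\mf{u}\in\Ub^M$. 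A direct enumeration of the non-zero partial derivatives together with summation of their moduli yields
\begin{equation*}
\sum_{(a,b)}\abs{\frac{\partial \mv{l}{k}{i}}{\partial \mv{u}{b}{a}}(\mf{u})}\leq B_1,\qquad \sum_{(a,b),(a',b')}\abs{\frac{\partial^2 \mv{l}{k}{i}}{\partial \mv{u}{b}{a}\,\partial \mv{u}{b'}{a'}}(\mf{u})}\leq B_2,
\end{equation*}
uniformly in $(i,k)$ and in $\mf{u}\in\Ub^M$; these are precisely the row-sum bounds for the Jacobian and Hessian of $\mf{l}$ with the constants chosen in the lemma.

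For the first-order estimate, set $\mv{v}{k}{i}(t):=\partial \mv{u}{k}{i}(t,\mf{g})/\partial \mv{g}{n}{j}$. Formally differentiating~\eqref{eq:diff2_x} in $\mf{g}$ produces the candidate variational equation
\begin{equation*}
\dot{\mv{v}{k}{i}}(t)=\sum_{(a,b)}\frac{\partial \mv{l}{k}{i}}{\partial \mv{u}{b}{a}}(\mf{u}(t,\mf{g}))\,\mv{v}{b}{a}(t),\qquad \mv{v}{k}{i}(0)=\indic{(i,k)=(j,n)},
\end{equation*}
which is linear with time-varying generator of $\ell^\infty$-operator norm at most $B_1$ by the row-sum estimate above. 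Picard iteration in $\ell^\infty$ yields a unique bounded solution $\mf{v}$; identification of $\mf{v}$ with the true partial derivative is performed by comparing the difference quotient $\epsilon^{-1}\bigl(\mf{u}(t,\mf{g}+\epsilon\mf{e}(n,j))-\mf{u}(t,\mf{g})\bigr)$ with $\mf{v}(t)$, using a mean-value expansion of $\mf{l}$ together with the Lipschitz bound~\eqref{eq:lip} to show that their difference satisfies a linear differential inequality with initial data and inhomogeneity of order $\epsilon$, and hence vanishes in $\ell^\infty$ as $\epsilon\to 0$. Setting $V(t):=\sup_{i,k}|\mv{v}{k}{i}(t)|$ and applying Gr\"onwall to $V(t)\leq 1+B_1\int_0^t V(s)\,ds$ then delivers~\eqref{eq:partial_bound1}.

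For the second-order estimate I would differentiate the variational equation once more in $\mv{g}{n'}{j'}$. With $\tilde{\mf{v}}$ denoting the first-order sensitivity in this second coordinate, the mixed partial $\mv{w}{k}{i}(t):=\partial^2 \mv{u}{k}{i}(t,\mf{g})/(\partial \mv{g}{n}{j}\,\partial \mv{g}{n'}{j'})$ formally satisfies
\begin{equation*}
\dot{\mv{w}{k}{i}}=\sum_{(a,b)}\frac{\partial \mv{l}{k}{i}}{\partial \mv{u}{b}{a}}(\mf{u})\,\mv{w}{b}{a}+\sum_{(a,b),(a',b')}\frac{\partial^2 \mv{l}{k}{i}}{\partial \mv{u}{b}{a}\,\partial \mv{u}{b'}{a'}}(\mf{u})\,\mv{v}{b}{a}\,\mv{\tilde{v}}{b'}{a'},
\end{equation*}
with $\mf{w}(0)=\mf{0}$. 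The linear term contributes at most $B_1 W(t)$ in $\ell^\infty$ norm, while the Hessian row-sum bound together with~\eqref{eq:partial_bound1} bounds the forcing by $B_2\,V(t)\,\tilde V(t)\leq B_2 e^{2B_1 t}$. Hence $W(t):=\sup_{i,k}|\mv{w}{k}{i}(t)|$ obeys $\dot W\leq B_1 W+B_2 e^{2B_1 t}$ with $W(0)=0$, and solving this scalar linear inequality via the integrating factor $e^{-B_1 t}$ produces exactly $(B_2/B_1)(e^{2B_1 t}-e^{B_1 t})$, which is~\eqref{eq:partial_bound2}. The pure second derivative is the special case $(j',n')=(j,n)$, and existence of $\mf{w}$ as a genuine second-order partial derivative is obtained by the same difference-quotient comparison, now using the first-order bound just established. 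The only real technical obstacle throughout is justifying variational calculus on the infinite-dimensional state space $\Ub^M$, and this is handled entirely by the row-finiteness of the Jacobian together with its uniform row-sum bound, which make the Picard--Gr\"onwall arguments on $\ell^\infty$ go through exactly as in finite dimensions.
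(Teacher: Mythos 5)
Your argument is correct and is essentially the approach the paper itself relies on: the paper's ``proof'' simply defers to Lemma~3.2 of Martin and Suhov (1999), whose argument is exactly this variational-equation-plus-Gr\"onwall scheme, and your row-sum bounds on the Jacobian and Hessian of $\mf{l}$ (using that each component depends on only finitely many coordinates, all in $[0,1]$) reproduce the stated constants $B_1$ and $B_2$ and the bound $\frac{B_2}{B_1}(e^{2B_1t}-e^{B_1t})$ exactly. The one detail worth making explicit is that a perturbation $\mf{g}+\epsilon\mf{e}(n,j)$ can leave $\Ub^M$, so the difference-quotient identification of $\mf{v}$ and $\mf{w}$ should be carried out with the extended drift on $(\mb{R}^{\mb{Z}_+})^M$ or with one-sided increments --- a point the cited reference glosses over as well.
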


\begin{proof}
The proof follows the same line of arguments as the proof of Lemma~3.2 of~\cite{Martin_AAP_1999}.
We omit the details.
%
%
\end{proof}

\subsection*{Proof of Proposition~\ref{thm:convergence_semigroup}}
The proof is essentially the same as the proof Theorem~2 of~\cite{Martin_AAP_1999}.
We omit the details.\qed

\section{}
\label{proof:existence_fixedp}

We prove the existence of equilibrium point for Scheme~1. Similar arguments apply for Scheme~2.
For simplicity of exposition, we further restrict ourselves to the $M=2$ case.
However, the proof can be extended to any $M \geq 2$.

The idea is to construct sequences $\cbrac{\mv{P}{k}{j}, k \in \mb{Z}_+}$ for $j=1,2$
such that they satisfy the following three properties
\begin{enumerate}[P.1]
\item Equation~\eqref{eq:tailhet_x} for $j=1,2$. \label{prop:p1}
\item $\mv{P}{k}{j} \geq \mv{P}{k+1}{j} \geq 0$ for all $k \in \mb{Z}_+$, $j=1,2$.
\item $\mv{P}{k}{j} \to 0$ as $k \to \infty$ for $j=1,2$.
\end{enumerate}
According to Proposition~\ref{thm:tail_het_properties}, we see that
$\mf{P}=\cbrac{\mv{P}{k}{j}, k\in \mb{Z}_+,j\in \cbrac{1,2}}$ 
with components $\mv{P}{k}{j}$ satisfying the above properties, 
must be an equilibrium point of the system~\eqref{eq:diff1_x}-\eqref{eq:diff2_x}
and also must lie in the space $\U^2$.
Note that if (P.1) holds and $\mv{P}{k}{j} \geq 0$ for all $k$ and $j$,
then $\mv{P}{k}{j} \geq \mv{P}{k+1}{j}$.
 
We now construct the sequences $\cbrac{\mv{P}{l}{1}(\alpha), l\in \mb{Z}_+}$
and $\cbrac{\mv{P}{l}{2}(\alpha), l\in \mb{Z}_+}$ as functions of the
real variable $\alpha$ as follows: 

\begin{align}
\mv{P}{0}{1}(\alpha) & =1.\\
\mv{P}{0}{2}(\alpha) &=1.\\
\mv{P}{1}{1}(\alpha) &=\alpha. \label{eq:init1}\\
\mv{P}{1}{2}(\alpha) &=\Delta_2\brac{1-\frac{\alpha}{\Delta_1}}. \label{eq:init2}\\
\mv{P}{l+2}{1}(\alpha) &= \mv{P}{l+1}{1}(\alpha)-\Delta_1 \left(\left(\mv{P}{l}{1}(\alpha)\right)^{d_{1}} - \left(\mv{P}{l+1}{1}(\alpha)\right)^{d_{1}}\right)\label{eq:rec1}\\
& \hspace{3cm}\times \left(\mv{P}{l+1}{2}(\alpha)\right)^{d_{2}}, l \geq 0 \nonumber \\
\mv{P}{l+2}{2}(\alpha) &= \mv{P}{l+1}{2}(\alpha)-\Delta_2 \left(\left(\mv{P}{l}{2}(\alpha)\right)^{d_{2}} - \left(\mv{P}{l+1}{2}(\alpha)\right)^{d_{2}}\right) \label{eq:rec2}\\
& \hspace{3cm}\times \left(\mv{P}{l}{1}(\alpha)\right)^{d_{1}},  l \geq 0 \nonumber
\end{align}
Combining the above relations we obtain

\begin{equation}
\sum_{j =1}^2 \frac{\mv{P}{l+1}{j}(\alpha)}{\Delta_j}= \prod_{j=1}^2 \brac{\mv{P}{l}{j}(\alpha)}^{d_j}, \text{ for } l \geq 0
\label{eq:couple}
\end{equation}
Note that that
the sequences $\cbrac{\mv{P}{l}{1}(\alpha), l\in \mb{Z}_+}$
and $\cbrac{\mv{P}{l}{2}(\alpha), l\in \mb{Z}_+}$ are constructed such that they
satisfy property (P.1). Hence, the
the proof will be complete if for some $\alpha \in (0,1)$ the properties
(P.2) and (P.3) are satisfied. 
We first proceed to find $\alpha \in (0,1)$ such that
the sequences $\cbrac{\mv{P}{l}{1}(\alpha), l\in \mb{Z}_+}$
and $\cbrac{\mv{P}{l}{2}(\alpha), l\in \mb{Z}_+}$ are both positive sequences
of real numbers in $[0,1]$. This will ensure that (P.2) is satisfied.


Note that $\mv{P}{l}{1}(1)=1$ for all $l \in \mb{Z}_+$.
Hence, from~\eqref{eq:init2} we have 
$\mv{P}{1}{2}(1)=\Delta_2\brac{1-\frac{1}{\Delta_1}}$ and from~\eqref{eq:rec2} we have

\begin{equation}
\mv{P}{l+2}{2}(1) = \mv{P}{l+1}{2}(1)-\Delta_2 \left(\left(\mv{P}{l}{2}(1)\right)^{d_{2}} - 
\left(\mv{P}{l+1}{2}(1)\right)^{d_{2}}\right) \text{ for } l \geq 0
\label{eq:rec2_1}
\end{equation}
Notice that the stability condition~\eqref{eq:maximal_stability} reduces to 

\begin{equation}
\frac{1}{\Delta_1}+\frac{1}{\Delta_2} > 1,
\end{equation}
which implies that $\mv{P}{1}{2}(1) < 1$.
We claim that there exists some $l \geq 1$ such that $\mv{P}{l}{2}(1) < 0$.
Let us assume this is not true. Therefore, $\mv{P}{l}{2}(1) \geq 0$ for all $l\geq 0$.
By~\eqref{eq:rec2_1}, this implies that $\cbrac{\mv{P}{l}{2}(1), l \geq 0}$ is a 
non-decreasing sequence of numbers in $[0,1)$. 
Hence by monotone convergence theorem $\lim_{l \to \infty}\mv{P}{l}{2}(1)$ exists. 
Let this limit be denoted by $\beta$, where $0 \leq \beta < 1$. Thus, adding~\eqref{eq:rec2_1} 
for $l \geq 0$ and using $\lim_{l \to \infty}\mv{P}{l}{2}(1)=\beta$ we obtain
 
\begin{align*}
\brac{1-\frac{1}{\Delta_1}} &= \frac{\beta}{\Delta_2}+1-\beta^{d_2}\\
                                               &> \beta  \brac{1-\frac{1}{\Delta_1}} +1 - \beta^{d_2}.
\end{align*}
Hence, $\brac{1-\frac{1}{\Delta_1}} > \frac{1-\beta^{d_2}}{1-\beta} \geq 1$.
This is a contradiction since $\Delta_1 > 0$. 
Hence, there exists $l \geq 1$ such that $\mv{P}{l}{2}(1) < 0$.

Observe that $\mv{P}{l}{2}\brac{\Delta_1\brac{1-\frac{1}{\Delta2}}}=1$ for all $l \geq 0$.
Hence, with same line of arguments as above, it can be shown that there exists 
$l \geq 1$ such that $\mv{P}{l}{1}\brac{\Delta_1\brac{1-\frac{1}{\Delta2}}} < 0$.

Now from~\eqref{eq:init2} and~\eqref{eq:rec2} it is easily seen that
$\mv{P}{l}{2}(0) >0$ for all $l \geq 0$. From the same relations
we also observe that $\mv{P}{l}{2}\brac{\Delta_1\brac{1-\frac{1}{\Delta2}}}=1 > 0$
for all $l \geq 0$. Combining the two we have

\begin{equation}
\mv{P}{l}{2}\brac{\max\brac{0,\Delta_1\brac{1-\frac{1}{\Delta_2}}}} > 0
\end{equation}
Further, observe that $\mv{P}{1}{2}(\Delta_1) = 0$. Hence, there must exist
at least one root of $\mv{P}{1}{2}(\alpha)$ in the following range

\begin{equation}
\alpha \in \left(\max\brac{0,\Delta_1\brac{1-\frac{1}{\Delta_2}}},\Delta_1\right. \left.\vphantom{\frac{1}{\Delta_1}}\right].
\end{equation} 
Let $\mv{r}{1}{2}$ denote the minimum root of $\mv{P}{1}{2}(\alpha)$ in
the above range. Therefore, in the range

\begin{equation}
\alpha \in \left(\max\brac{0,\Delta_1\brac{1-\frac{1}{\Delta_2}}},\min\brac{1,\mv{r}{1}{2}}\right. \left.\vphantom{\frac{1}{\Delta_1}}\right],
\label{eq:range}
\end{equation}
we must have $\mv{P}{1}{2}(\alpha) \geq 0$. (Note that
the right limit can be combined with $1$ because
of the minimality of $\mv{r}{1}{2}$). Putting
$l=0$, $\alpha=\mv{r}{1}{2}$ in~\eqref{eq:rec2}
we observe that $\mv{P}{2}{2}\brac{\mv{r}{1}{2}} < 0$.
Hence, using the same line arguments we conclude
that in the range

\begin{equation}
\alpha \in \left(\max\brac{0,\Delta_1\brac{1-\frac{1}{\Delta_2}}},\min\brac{1,\mv{r}{2}{2}}\right. \left.\vphantom{\frac{1}{\Delta_1}}\right],
\label{eq:range1}
\end{equation} 
both $\mv{P}{1}{2}(\alpha), \mv{P}{2}{2}(\alpha) \geq 0$,
where $\mv{r}{2}{2}$ denotes the minimum root
of $\mv{P}{2}{2}(\alpha)$ in the range defined in~\eqref{eq:range}. 
Therefore by~\eqref{eq:rec2} we also have
$\mv{P}{1}{2}(\alpha) \geq \mv{P}{2}{2}(\alpha) >0$ in the above range.
Repeating the same argument again for $\mv{P}{3}{2}(\alpha)$
we find that $\mv{P}{1}{2}(\alpha) \geq \mv{P}{2}{2}(\alpha) \geq \mv{P}{3}{2}(\alpha) \geq 0$
holds in the range

\begin{equation}
\alpha \in \left(\max\brac{0,\Delta_1\brac{1-\frac{1}{\Delta_2}}},\min\brac{1,\mv{r}{3}{2}}\right. \left.\vphantom{\frac{1}{\Delta_1}}\right],
\label{eq:range2}
\end{equation} 
where $\mv{r}{3}{2}$ denotes the minimum root
of $\mv{P}{3}{2}(\alpha)$ in the range defined in~\eqref{eq:range1}.

Trivially, we have $\mv{P}{1}{1}(\alpha) > 0$
in the range defined in~\eqref{eq:range2}.
Now from~\eqref{eq:rec1} we have $\mv{P}{2}{1}(0)=-\Delta_1 \Delta_2 ^{d_2} < 0$. 
Also, from definition of $\mv{r}{3}{2}$ we know that
$\mv{P}{3}{2}(\mv{r}{3}{2}) =0$. Now, by putting $\alpha=\mv{r}{3}{2}$
and $l=1$ in~\eqref{eq:rec2} we obtain

\begin{align*}
\mv{P}{2}{2}(\mv{r}{3}{2}) &= \Delta_2 \sbrac{\brac{\mv{P}{1}{2}(\mv{r}{3}{2})}^{d_2}-\brac{\mv{P}{2}{2}(\mv{r}{3}{2})}^{d_2}} \brac{\mv{r}{3}{2}}^{d_1}\\
& \leq \Delta_2 \brac{\mv{P}{1}{2}(\mv{r}{3}{2})}^{d_2}\brac{\mv{r}{3}{2}}^{d_1} (\text{since } \mv{P}{2}{2}(\mv{r}{3}{2}) \geq 0)
\end{align*}
Again, by putting $l=2$ and $\alpha=\mv{r}{3}{2}$ in~\eqref{eq:couple} and using the above we obtain
$\mv{P}{2}{1}(\mv{r}{3}{2}) \geq 0$. Therefore, there exists at least one root of $\mv{P}{2}{1}(\alpha)$
in the interval $\left(\right. 0,\mv{r}{3}{2}\left.\right]$.
Denote the maximum of all such roots to be $\mv{r}{2}{1}$. Hence, in the range

\begin{equation}
\alpha \in \left[\max\brac{\mv{r}{2}{1},\Delta_1\brac{1-\frac{1}{\Delta_2}}},\min\brac{1,\mv{r}{3}{2}}\right. \left.\vphantom{\frac{1}{\Delta_1}}\right],
\end{equation} 
we have $\mv{P}{1}{1}(\alpha)
\geq \mv{P}{2}{1}(\alpha) \geq 0 $ along with $\mv{P}{1}{2}(\alpha)
\geq \mv{P}{2}{2}(\alpha) \geq \mv{P}{3}{2}(\alpha) \geq 0$. Again from~\eqref{eq:rec1}
we observe that $\mv{P}{3}{1}(\mv{r}{2}{1}) < 0$. Further, putting $l=3$ and  $\alpha=\mv{r}{3}{2}$ in~\eqref{eq:couple}
we obtain $\mv{P}{3}{1}(\mv{r}{3}{2}) \geq 0$. 
Thus, there must be at least one root of $\mv{P}{3}{1}(\alpha)$
in the range $\left(\right.\mv{r}{2}{1}, \mv{r}{3}{2}\left. \right]$.
Let $\mv{r}{3}{1}$ denote the maximum
root in the interval.
Hence, in the interval

\begin{equation}
\alpha \in \left[\max\brac{\mv{r}{3}{1},\Delta_1\brac{1-\frac{1}{\Delta_2}}},\min\brac{1,\mv{r}{3}{2}}\right. \left.\vphantom{\frac{1}{\Delta_1}}\right],
\end{equation}
we have $\mv{P}{1}{1}(\alpha)
\geq \mv{P}{2}{1}(\alpha) \geq \mv{P}{3}{1}(\alpha) \geq 0 $ along with $\mv{P}{1}{2}(\alpha)
\geq \mv{P}{2}{2}(\alpha) \geq \mv{P}{3}{2}(\alpha) \geq 0$.
Similarly, from~\eqref{eq:rec1} we have $\mv{P}{4}{1}(\mv{r}{3}{1}) < 0$
and from~\eqref{eq:rec2} we have $\mv{P}{4}{1}(\mv{r}{3}{2}) \geq 0$.
Thus, there must be at least one root of $\mv{P}{4}{1}(\alpha)$
in the range $\left(\right.\mv{r}{3}{1}, \mv{r}{3}{2}\left. \right]$.
Denote the maximum of all such roots by
$\mv{r}{4}{1}$. 
Hence, in the interval

\begin{equation}
\alpha \in \left[\max\brac{\mv{r}{4}{1},\Delta_1\brac{1-\frac{1}{\Delta_2}}},\min\brac{1,\mv{r}{3}{2}}\right. \left.\vphantom{\frac{1}{\Delta_1}}\right],
\end{equation}
we have $\mv{P}{1}{1}(\alpha)
\geq \mv{P}{2}{1}(\alpha) \geq \mv{P}{3}{1}(\alpha) \geq \mv{P}{4}{1}(\alpha) \geq 0 $ 
and $\mv{P}{1}{2}(\alpha)
\geq \mv{P}{2}{2}(\alpha) \geq \mv{P}{3}{2}(\alpha) \geq 0$.

Using the same line of arguments as above the following inductive
hypothesis can be proved: If, for $k \geq 0$,
$\mv{P}{1}{1}(\alpha)\geq \mv{P}{2}{1}(\alpha)\ldots \geq \mv{P}{4+3k}{1}(\alpha) \geq 0$
and $\mv{P}{1}{2}(\alpha)\geq \mv{P}{2}{2}(\alpha)\ldots \geq \mv{P}{3+3k}{1}(\alpha) \geq 0$
hold in the range

\begin{equation}
\alpha \in \left[\max\brac{\mv{r}{4+3k}{1},\Delta_1\brac{1-\frac{1}{\Delta_2}}},\min\brac{1,\mv{r}{3+3k}{2}}\right. \left.\vphantom{\frac{1}{\Delta_1}}\right],
\label{eq:range_induc1}
\end{equation}
then $\mv{P}{1}{1}(\alpha)\geq \mv{P}{2}{1}(\alpha)\ldots \geq \mv{P}{4+3(k+1)}{1}(\alpha) \geq 0$
and $\mv{P}{1}{2}(\alpha)\geq \mv{P}{2}{2}(\alpha)\ldots \geq \mv{P}{3+3(k+1)}{1}(\alpha) \geq 0$
hold in the range

\begin{equation}
\alpha \in \left[\max\brac{\mv{r}{4+3(k+1)}{1},\Delta_1\brac{1-\frac{1}{\Delta_2}}},\min\brac{1,\mv{r}{3+3(k+1)}{2}}\right. \left.\vphantom{\frac{1}{\Delta_1}}\right],
\label{eq:range_induc2}
\end{equation}
and the interval in~\eqref{eq:range_induc2} is included in the interval in~\eqref{eq:range_induc1}.

The decreasing sequence of compact intervals

\begin{equation}
\left[\max\brac{\mv{r}{4+3k}{1},\Delta_1\brac{1-\frac{1}{\Delta_2}}},\min\brac{1,\mv{r}{3+3k}{2}}\right. \left.\vphantom{\frac{1}{\Delta_1}}\right], \text{ for } k \geq 0
\end{equation}
eventually become strict subsets of the interval $[0,1]$ as discussed in the beginning. 
Further, the intersection of all such compact intervals must be non-empty
due to the Cantor's intersection theorem. Hence, we have shown that there exists $\alpha \in (0,1)$
such that the sequences $\cbrac{\mv{P}{l}{1}(\alpha), l\in \mb{Z}_+}$
and $\cbrac{\mv{P}{l}{2}(\alpha), l\in \mb{Z}_+}$ are both positive non-increasing sequences
of real numbers in $[0,1]$.

We now proceed to show that the above sequences satisfy property (P.3).
Let $\lim_{l \to \infty} \mv{P}{l}{1}(\alpha)=\xi_1 \geq 0$ and $\lim_{l \to \infty} \mv{P}{l}{2}(\alpha)=\xi_2\geq 0$,
where $\alpha$ is chosen such that both sequences
become positive and non-increasing. Now, taking limit of~\eqref{eq:couple}
as $l \to \infty$ we have

\begin{equation}
\sum_{j =1}^2 \frac{\xi_{j}}{\Delta_j}= \prod_{j=1}^2 \brac{\xi_{j}}^{d_j}.
\label{eq:limit_couple}
\end{equation}
Now using the stability criterion and the fact that $0 \leq \xi_1,\xi_2 \leq 1$ we have

\begin{align*}
&\frac{1}{\Delta_1}+\frac{1}{\Delta_2} > 1\\
\Rightarrow &\frac{\xi_2}{\Delta_1}+\frac{\xi_2}{\Delta_2} \geq \xi_2 \geq \xi_2^{d_2}
\end{align*}
with equality holding if and only if $\xi_2=0$.
Further, we have

\begin{align*}
&\frac{1}{\Delta_1}+\frac{\xi_2}{\Delta_2} \geq \frac{\xi_2}{\Delta_1}+\frac{\xi_2}{\Delta_2} \geq \xi_2^{d_2}
\end{align*}
Hence, by multiplying both sides with $\xi_1$ we have

\begin{align*}
&\frac{\xi_1}{\Delta_1}+\frac{\xi_1 \xi_2}{\Delta_2} \geq \xi_1 \xi_2^{d_2} \geq \xi_1^{d_1} \xi_2^{d_2},
\end{align*}
with equality if and only if $\xi_1=\xi_2=0$. Again, since
$\xi_1 \leq 1$ we have

\begin{align*}
&\frac{\xi_1}{\Delta_1}+\frac{\xi_2}{\Delta_2} \geq \frac{\xi_1}{\Delta_1}+\frac{\xi_1 \xi_2}{\Delta_2} \geq \xi_1 \xi_2^{d_2} \geq \xi_1^{d_1} \xi_2^{d_2},
\end{align*}
Hence, we have shown
\begin{equation}
\frac{\xi_1}{\Delta_1}+\frac{\xi_2}{\Delta_2} \geq \xi_1^{d_1} \xi_2^{d_2}
\end{equation}
with equality holding if and only if $\xi_1=\xi_2=0$. Hence, for~\eqref{eq:limit_couple}
to hold we must have $\xi_1=\xi_2=0$. This proves (P.3) and thus completes the proof. \qed

\section{}
\label{proof:uniqueness_fixedp}
To prove Theorem~\ref{thm:uniqueness_fixedp}, 
we first state the following lemma.
We will write $\mf{g} \leq \mf{g'}$ to mean that $\mv{g}{n}{j} \leq \mv{g'}{n}{j}$
holds for all $n \in \mb{Z}_+$ and $j \in \cal{J}$.

\begin{lemma}
\label{thm:monotonicity}
If $\mf{g} \leq \mf{g'}$ holds, for $\mf{g},\mf{g'} \in \Ub^M$, 
then $\mf{u}(t,\mf{g}) \leq \mf{u}(t,\mf{g'})$ holds for all $t \geq 0$.
\end{lemma}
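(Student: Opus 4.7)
The plan is to prove Lemma~\ref{thm:monotonicity} by a Kamke--Müller style comparison argument, resting on quasi-monotonicity of the vector field $\mf{l}$ (and analogously $\mf{\tilde{l}}$) on $\Ub^M$ combined with the uniqueness and Lipschitz bounds already established in Proposition~\ref{thm:uniqueness_sol}.

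First I would verify quasi-monotonicity: for every $(i,m) \neq (j,n)$, the partial $\partial \mv{l}{n}{j}/\partial \mv{u}{m}{i}$ is nonnegative throughout $\Ub^M$. From \eqref{eq:diff4_x}, the death term contributes $+\mu C_j \geq 0$ via $\mv{u}{n+1}{j}$, and in the arrival term the partial with respect to $\mv{u}{n-1}{j}$ is nonnegative trivially, while those with respect to $\mv{u}{n-1}{i}$ (for $i<j$) and $\mv{u}{n}{i}$ (for $i>j$) are nonnegative because the factor $(\mv{u}{n-1}{j})^{d_j} - (\mv{u}{n}{j})^{d_j}$ is nonnegative on $\Ub^M$; all other off-diagonal partials vanish. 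An identical inspection of \eqref{eq:diff4_xc} handles Scheme~2, since the compressed indices $\ksub{k}{i}{j}$, $\ktil{k}{i}{j}$ do not alter the sign pattern. The check also extends to the truncated extension $\mf{\hat{l}}$ of Appendix~\ref{proof:uniqueness_sol}, since $\theta$ is monotone nondecreasing.

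For the comparison itself, I would consider, for $\epsilon > 0$, the perturbed system $\dot{\mf{w}}^\epsilon = \mf{\hat{l}}(\mf{w}^\epsilon) + \epsilon \mf{1}$ with $\mf{w}^\epsilon(0) = \mf{g'}$, and set $\mf{y}(t) = \mf{w}^\epsilon(t) - \mf{u}(t,\mf{g})$. The claim is that $\mf{y}(t) \geq 0$ for all $t \geq 0$. Suppose not, and let $t^* = \inf\{t \geq 0: \mv{y}{n}{j}(t) < 0 \text{ for some }(j,n)\}$. Writing $\mf{\hat{l}}(\mf{w}^\epsilon) - \mf{\hat{l}}(\mf{u})$ as the integral of its Jacobian along the straight segment joining the two trajectories (on which $\mf{\hat{l}}$ remains quasi-monotone), at any first contact point $(t^*, j^*, n^*)$ with $\mv{y}{n^*}{j^*}(t^*) = 0$ and all other $\mv{y}{m}{i}(t^*) \geq 0$, the diagonal contribution vanishes while off-diagonal contributions are nonnegative, so $\dot{\mv{y}{n^*}{j^*}}(t^*) \geq \epsilon > 0$, contradicting the definition of $t^*$. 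Passing $\epsilon \downarrow 0$ using Lipschitz stability of the flow (Gronwall applied to \eqref{eq:lip}) yields $\mf{u}(t,\mf{g'}) \geq \mf{u}(t,\mf{g})$.

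The main obstacle is that the infimum $t^*$ is taken over the countable index set $(j,n)$ and might not be realised by any single coordinate: one only has a sequence $T^{(j_k,n_k)} \downarrow t^*$ of candidate contact times with $\mv{y}{n_k}{j_k}(t^*) > 0$ but tending to $0$. I would dispatch this case by exploiting the uniform bound $|\dot{\mv{y}{n}{j}}| \leq 2K_1$ from \eqref{eq:bound} together with the constant drift $\epsilon$: mean-value forces $\mv{y}{n_k}{j_k}(t^*) \leq 2K_1(T^{(j_k,n_k)} - t^*) \to 0$, while the diagonal contribution to $\dot{\mv{y}{n_k}{j_k}}(t^*)$ is bounded by $K_2\mv{y}{n_k}{j_k}(t^*) \to 0$, so eventually $\dot{\mv{y}{n_k}{j_k}}(t^*) \geq \epsilon/2$, which is incompatible with $\mv{y}{n_k}{j_k}$ decreasing from positive to zero on the shrinking interval $[t^*, T^{(j_k,n_k)}]$. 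This reduction, together with the classical Kamke--Müller argument of the preceding paragraph, completes the proof.
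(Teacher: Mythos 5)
The paper gives no proof of Lemma~\ref{thm:monotonicity} at all --- it defers to Lemma~3.3 of \cite{Martin_AAP_1999}, where the natural argument is organized around the variational equation: one shows that the partial derivatives $\partial \mv{u}{k}{i}(t,\mf{g})/\partial \mv{g}{n}{j}$ (whose existence and bounds are recorded in the unnumbered lemma of Appendix~\ref{proof:convergence_semigroup}) are nonnegative, and then integrates along a path from $\mf{g}$ to $\mf{g'}$. Your Kamke--M\"uller comparison is a genuinely different, self-contained route, but it rests on exactly the same structural fact, which you verify correctly: every off-diagonal partial of $\mf{l}$ on $\Ub^M$ (and of the truncated extension $\mf{\hat{l}}$ on all of $(\mb{R}^{\mb{Z}_+})^M$, where the segment joining a perturbed and an unperturbed trajectory may live) is nonnegative, for both schemes. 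The $\epsilon$-perturbation of a single trajectory, the passage $\epsilon\downarrow 0$ via \eqref{eq:lip} and Gronwall, and your explicit treatment of the possibility that the first-crossing infimum over the countable index set is not attained are all sound; that last point is the genuine infinite-dimensional subtlety that finite-dimensional statements of the comparison theorem do not cover, and it is to your credit that you isolated it rather than waving it away.

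One step is stated too loosely to stand as written: ``incompatible with $\mv{y}{n_k}{j_k}$ decreasing from positive to zero on the shrinking interval.'' Knowing only that $\dot{y}^{(j_k)}_{n_k}(t^*)\geq\epsilon/2$ at the single instant $t^*$ does not by itself prevent $y^{(j_k)}_{n_k}$ from returning to zero by time $T^{(j_k,n_k)}$; you need a modulus of continuity for $t\mapsto\dot{y}^{(j_k)}_{n_k}(t)$ that is uniform in $(j_k,n_k)$. This is available: each component $\mv{\hat{l}}{n}{j}$ depends on at most $2M+1$ coordinates, each of which moves at speed at most $K_1+\epsilon$ (note that \eqref{eq:bound} is a bound in the weighted norm, so the uniform componentwise bound $|\mv{\hat{l}}{n}{j}|\leq K_1$ should be read off directly from \eqref{eq:diff4_x'} rather than from \eqref{eq:bound}), so $\dot{y}^{(j)}_{n}$ is Lipschitz in $t$ with a constant $L$ independent of $(j,n)$. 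Then $\dot{y}^{(j_k)}_{n_k}\geq\epsilon/4$ on a window of length $\epsilon/(4L)$ that contains $[t^*,T^{(j_k,n_k)}]$ for $k$ large, forcing $y^{(j_k)}_{n_k}(T^{(j_k,n_k)})>y^{(j_k)}_{n_k}(t^*)\geq 0$, the desired contradiction. With that repair the argument is complete.
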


\begin{proof}
The proof is essentially the same as that of Lemma 3.3 of~\cite{Martin_AAP_1999} and hence omitted.
\end{proof}

We define $\mv{v}{n}{j}(t,\mf{g})=\sum_{k \geq n}\mv{u}{k}{j}(t,\mf{g})$ and
$v_n(t,\mf{g})=\sum_{j \in \cal{J}} \gamma_j \mv{v}{n}{j}(t,\mf{g})$
for each $n \geq 1$ and $j \in \cal{J}$. Further, $\mv{v}{n}{j}(\mf{g})=\sum_{k \geq n}\mv{g}{k}{j}$ and
$v_n(\mf{g})=\sum_{j \in \cal{J}} \gamma_j \mv{v}{n}{j}(\mf{g})$
for each $n \geq 1$ and $j \in \cal{J}$.

\begin{lemma}
If $\mf{g} \in \U^M$, then $\mf{u}(t,\mf{g}) \in \U^M$ for all $t \geq 0$
and

\begin{equation}
\frac{d v_n(t,\mf{g})}{dt}=\lambda\brac{\prod_{j=1}^{M}\brac{\mv{u}{n-1}{j}(t,\mf{g})}^{d_j}-\sum_{j=1}^{M} \frac{\mv{u}{n}{j}(t,\mf{g})}{\Delta_j}} \text{ for all } n \geq 1.
\label{eq:vn}
\end{equation}
In particular,

\begin{equation}
\frac{d v_1(t,\mf{g})}{dt}=\lambda\brac{1-\sum_{j=1}^{M} \frac{\mv{u}{1}{j}(t,\mf{g})}{\Delta_j}}
\label{eq:v1}
\end{equation}
\end{lemma}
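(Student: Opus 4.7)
The plan is to work with truncated partial sums $v_n^N(t)$ (which are finite and smooth for each $N$), derive an explicit ODE for them by a double telescoping (first over the index $j$, then over the index $k$), and finally pass to the limit $N\to\infty$. A preliminary step is needed to ensure that $\mf{u}(t,\mf{g})$ stays in $\U^M$, since otherwise the sums $\mv{v}{n}{j}(t,\mf{g})$ are not a priori finite.

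\textbf{Step 1: Preservation of summability.} Fix $j\in \mathcal{J}$ and set $V_N^{(j)}(t)=\sum_{k=1}^N \mv{u}{k}{j}(t,\mf{g})$ for $N\geq 1$. Summing~\eqref{eq:diff4_x} over $k=1,\ldots,N$, the departure contribution telescopes to $-\mu C_j(\mv{u}{1}{j}-\mv{u}{N+1}{j})\leq 0$, while each product factor in the arrival part lies in $[0,1]$, so the arrival contribution is bounded above by $\frac{\lambda}{\gamma_j}\sum_{k=1}^N[(\mv{u}{k-1}{j})^{d_j}-(\mv{u}{k}{j})^{d_j}]=\frac{\lambda}{\gamma_j}[1-(\mv{u}{N}{j})^{d_j}]\leq\frac{\lambda}{\gamma_j}$. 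Hence $\frac{d V_N^{(j)}}{dt}\leq\frac{\lambda}{\gamma_j}$, so $V_N^{(j)}(t)\leq \mv{v}{1}{j}(\mf{g})+\lambda t/\gamma_j<\infty$. Monotone convergence as $N\to\infty$ then gives $\mv{v}{1}{j}(t,\mf{g})<\infty$ for every $t\geq 0$ and $j\in\cal{J}$, i.e.\ $\mf{u}(t,\mf{g})\in \U^M$. In particular, $\mv{u}{k}{j}(t,\mf{g})\downarrow 0$ as $k\to\infty$ for each fixed $(j,t)$.

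\textbf{Step 2: Truncated identity and double telescoping.} For $n\geq 1$ and $N\geq n$, define $v_n^N(t)=\sum_{j\in\cal{J}}\gamma_j\sum_{k=n}^N\mv{u}{k}{j}(t,\mf{g})$, which is finite and smooth in $t$. Using~\eqref{eq:diff4_x} and the identity $\mu C_j\gamma_j=\lambda/\Delta_j$, the departure parts again telescope, yielding
\begin{equation*}
\frac{dv_n^N}{dt}=\lambda\sum_{j=1}^M\sum_{k=n}^N\bigl[T_k^{(j)}-T_k^{(j-1)}\bigr]-\lambda\sum_{j=1}^M\frac{\mv{u}{n}{j}(t)-\mv{u}{N+1}{j}(t)}{\Delta_j},
\end{equation*}
where $T_k^{(j)}:=\prod_{i\leq j}(\mv{u}{k-1}{i})^{d_i}\prod_{i>j}(\mv{u}{k}{i})^{d_i}$, so that $T_k^{(0)}=\prod_i(\mv{u}{k}{i})^{d_i}$ and $T_k^{(M)}=\prod_i(\mv{u}{k-1}{i})^{d_i}$. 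Telescoping first over $j$ collapses the inner sum to $\prod_i(\mv{u}{k-1}{i})^{d_i}-\prod_i(\mv{u}{k}{i})^{d_i}$, and a second telescoping over $k$ reduces the double sum to $\prod_i(\mv{u}{n-1}{i})^{d_i}-\prod_i(\mv{u}{N}{i})^{d_i}$.

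\textbf{Step 3: Passage to the limit.} Integrate the above identity from $0$ to $t$ and send $N\to\infty$. The left-hand side $v_n^N(t)-v_n^N(0)$ converges to $v_n(t,\mf{g})-v_n(\mf{g})$ by monotone convergence (all summands are non-negative and the totals are finite by Step~1). On the right, the integrand is uniformly bounded on $[0,t]$ by $\lambda(1+\sum_j 1/\Delta_j)$, while $\prod_i(\mv{u}{N}{i}(s))^{d_i}\to 0$ and $\mv{u}{N+1}{j}(s)\to 0$ pointwise in $s$ by Step~1. Dominated convergence then gives
\begin{equation*}
v_n(t,\mf{g})-v_n(\mf{g})=\lambda\int_0^t\left[\prod_{i=1}^M\bigl(\mv{u}{n-1}{i}(s,\mf{g})\bigr)^{d_i}-\sum_{j=1}^M\frac{\mv{u}{n}{j}(s,\mf{g})}{\Delta_j}\right]ds.
\end{equation*}
The integrand is continuous in $s$ by the regularity of $\mf{u}(\cdot,\mf{g})$ (Proposition~\ref{thm:uniqueness_sol}), so differentiating in $t$ produces~\eqref{eq:vn}; the case~\eqref{eq:v1} is the specialization $n=1$ together with $\mv{u}{0}{i}\equiv 1$. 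The principal technical obstacle is maintaining summability at every time $t$, and this is what the uniform bound on $dV_N^{(j)}/dt$ in Step~1 is designed to overcome.
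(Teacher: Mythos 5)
Your proof is correct, and it takes a genuinely different route from the paper's on the one point that actually requires care, namely showing that summability is preserved in time. The paper argues by continuation: it \emph{assumes} $\mf{u}(t,\mf{g})\in\U^M$ up to time $\tau$, computes $\tfrac{d}{dt}v_n(t,\mf{g})\big\rvert_{t=\tau}$ by summing the ODE over $k\ge n$ and $j$ (using the same two telescopings you perform), observes the derivative is finite, and concludes $v_n(\tau+h,\mf{g})<\infty$ for small $h$; this is an informal open--closed argument that moreover presupposes the term-by-term differentiation of an infinite series it is trying to justify. Your Step~1 replaces this with an a priori estimate on finite truncations, $\tfrac{d}{dt}V_N^{(j)}(t)\le \lambda/\gamma_j$ uniformly in $N$, giving the explicit bound $\mv{v}{1}{j}(t,\mf{g})\le \mv{v}{1}{j}(\mf{g})+\lambda t/\gamma_j$ for all $t$ at once, with no continuation needed; and your Step~3 legitimizes the interchange of summation and differentiation by passing through the integrated form and invoking monotone and dominated convergence before applying the fundamental theorem of calculus. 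The algebraic core (telescoping the arrival terms over $j$ via $T_k^{(j)}-T_k^{(j-1)}$ and then over $k$, and the departure terms via $\mu C_j\gamma_j=\lambda/\Delta_j$) is identical to the paper's. What your version buys is a fully rigorous and quantitative treatment of the infinite sums; what the paper's buys is brevity. No gaps.
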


\begin{proof}
Suppose that
$\mf{u}(t,\mf{g}) \in \U^M$ holds for all $t \leq \tau$.
Hence, $v_1(\tau,\mf{g}) < \infty$ and $\lim_{n \to \infty} \mv{u}{n}{j}(\tau,\mf{g})=0$ 
for each $j \in \cal{J}$. Summing~\eqref{eq:diff3_x} first over all $k \geq n$
and then over all $j \in \cal{J}$ yields

\begin{equation}
\left.\frac{d v_n(t,\mf{g})}{dt}\right\rvert_{t=\tau}=\lambda\brac{\prod_{j=1}^{M}\brac{\mv{u}{n-1}{j}(\tau,\mf{g})}^{d_j}-\sum_{j=1}^{M} \frac{\mv{u}{n}{j}(\tau,\mf{g})}{\Delta_j}} < \infty,
\end{equation}
for all $n \geq 1$. Hence, for all sufficiently small $h >0$,
we have
$v_n(\tau+h,\mf{g}) < \infty$ for all $n \geq 1$.
This implies that $\mf{u}(\tau+h,\mf{g}) \in \U^M$
for all sufficiently small $h > 0$. This fact
along with $\mf{g}=\mf{u}(0,\mf{g}) \in \U^M$ implies that
$\mf{u}(t,\mf{g}) \in \U^M$ for all $t \geq 0$.
Further,~\eqref{eq:vn} can be obtained by summing~\eqref{eq:diff3_x} 
first over all $k \geq n$
and then over all $j \in \cal{J}$
\end{proof}

\subsection*{Proof of Theorem~\ref{thm:uniqueness_fixedp}}
Clearly, Lemma~\ref{thm:monotonicity} implies the following

\begin{equation}
\mf{u}(t,\min(\mf{g},\mf{P})) \leq \mf{u}(t,\mf{g}) \leq \mf{u}(t,\max(\mf{g},\mf{P}))
\end{equation}
Hence, to prove~\eqref{eq:exp_conv1}, it is sufficient
to show that the convergence holds for $\mf{g} \geq \mf{P}$
and for $\mf{g} \leq \mf{P}$.

We first need to check that for each such $\mf{g}$,
the quantity $v_1(t,\mf{g})$ (and hence also $v_n(t,\mf{g})$ for $n >1$)
is bounded uniformly in $t$. If $\mf{g} \leq \mf{P}$, then by Lemma~\ref{thm:monotonicity}
we have $\mf{u}(t, \mf{g}) \leq \mf{u}(t, \mf{P})=\mf{P}$ for all $t \geq 0$.
Hence, $v_1(t,\mf{g}) \leq v_1(\mf{P})$.

On the other hand, if $\mf{g} \geq \mf{P}$, then
by Lemma~\ref{thm:monotonicity} $\mf{u}(t, \mf{g}) \geq \mf{u}(t, \mf{P})=\mf{P}$.
Hence, we have

\begin{equation}
\sum_{j=1}^{M} \frac{\mv{u}{1}{j}(t,\mf{g})}{\Delta_j} \geq \sum_{j=1}^{M} \frac{\mv{P}{1}{j}}{\Delta_j} = 1
\end{equation}
Thus, from~\eqref{eq:v1} we have $\frac{dv_1(t,\mf{g})}{dt} \leq 0$.
Hence, we have $0 \leq v_1(t,\mf{g}) \leq v_1(\mf{g})$ for all $t \geq 0$.

Since the derivative of $\mv{u}{n}{j}(t)$ is bounded for all $j \in \cal{J}$, 
the convergence $\mf{u}(t,\mf{g}) \to \mf{P}$
will follow from 

\begin{equation}
\int_{0}^{\infty} \brac{\mv{u}{n}{j}(t,\mf{g})-\mv{P}{n}{j}} dt < \infty, \text{ } j \in \mathcal{J}, n \geq 1
\label{eq:more}
\end{equation}
in the case $\mf{g} \geq \mf{P}$, and from

\begin{equation}
\int_{0}^{\infty} \brac{\mv{P}{n}{j}-\mv{u}{n}{j}(t,\mf{g})} dt < \infty, \text{ } j \in \mathcal{J}, n \geq 1
\label{eq:less}
\end{equation}
in the case $\mf{g} \leq \mf{P}$. Both the bounds can be shown similarly.
We discuss the proof of~\eqref{eq:more}.

To prove~\eqref{eq:more} it is sufficient to show that

\begin{equation}
\int_{0}^{\infty} \sum_{j=1}^{M} \frac{\brac{\mv{u}{n}{j}(t,\mf{g})-\mv{P}{n}{j}}}{\Delta_j} dt < \infty,
\end{equation} 
for all $n \geq 1$. We will use induction starting with $n=1$.
Using~\eqref{eq:v1}, we have

\begin{align*}
\int_{0}^{\tau} \sum_{j=1}^{M} \frac{\brac{\mv{u}{1}{j}(t,\mf{g})-\mv{P}{1}{j}}}{\Delta_j} dt
&= \int_{0}^{\tau} \sum_{j=1}^{M} \brac{\frac{\mv{u}{1}{j}(t,\mf{g})}{\Delta_j}-1} dt\\
&= -\frac{1}{\lambda}\int_{0}^{\tau} \frac{d v_1(t,\mf{g})}{dt} dt\\
&= \frac{1}{\lambda}(v_1(\mf{g})-v_1(\tau,\mf{g})).
\end{align*}
Since the right hand side is bounded by a constant for all $\tau$,
the integral on the left hand side must converge as $\tau \to \infty$.

Now assume that~\eqref{eq:more} holds for all $n \leq L-1$. We have from~\eqref{eq:vn}
and~\eqref{eq:coupling_x}

\begin{align*}
v_L(0,\mf{g})-v_L(\tau,\mf{g}) 
& = -\int_0^{\tau} \frac{dv_L(t,\mf{g})}{dt} dt\\
& = \lambda \int_{0}^{\tau} \brac{\sum_{j=1}^{M} \frac{\mv{u}{L}{j}(t,\mf{g})}{\Delta_j}-\prod_{j=1}^{M}\brac{\mv{u}{L-1}{j}(t,\mf{g})}^{d_j}} dt\\
& = \lambda \int_{0}^{\tau} \sum_{j=1}^{M} \frac{\brac{\mv{u}{L}{j}(t,\mf{g})-\mv{P}{L}{j}}}{\Delta_j} dt\\
& \hspace{2cm} + \lambda \int_{0}^{\tau} \brac{\sum_{j=1}^{M} \frac{\mv{P}{L}{j}}{\Delta_j}-\prod_{j=1}^{M}\brac{\mv{u}{L-1}{j}(t,\mf{g})}^{d_j}} dt\\
& = \lambda \int_{0}^{\tau} \sum_{j=1}^{M} \frac{\brac{\mv{u}{L}{j}(t,\mf{g})-\mv{P}{L}{j}}}{\Delta_j} dt\\
& \hspace{1cm} - \lambda \int_{0}^{\tau} \brac{\prod_{j=1}^{M}\brac{\mv{u}{L-1}{j}(t,\mf{g})}^{d_j}-\prod_{j=1}^{M}\brac{\mv{P}{L-1}{j}}^{d_j}}dt
\end{align*}
By the induction hypothesis, the last integral on the right hand side converges as $\tau \to \infty$.
The left hand side also is uniformly bounded. Hence, the first integral on the left hand side
also must converge as required. \qed

\bibliographystyle{acmtrans-ims}
\bibliography{load_balance}

\end{document}